\documentclass[11pt,a4paper]{article}

\usepackage[utf8]{inputenc}
\usepackage[expansion=false]{microtype}
\usepackage{amsmath,amssymb,amsthm}
\usepackage{mathtools}
\usepackage{bm}
\usepackage{graphicx}
\usepackage{xcolor}
\usepackage{booktabs}
\usepackage{array}
\usepackage{multirow}
\usepackage{xurl}
\usepackage{hyperref}
\usepackage[capitalise,noabbrev]{cleveref}
\usepackage{algorithm}
\usepackage{algpseudocode}
\usepackage{geometry}
\usepackage{enumitem}
\usepackage{longtable}
\usepackage[numbers,sort&compress]{natbib}
\usepackage{tikz}
\usetikzlibrary{arrows.meta, positioning, shapes.geometric, fit, backgrounds, calc}

\setlist{itemsep=2pt, topsep=4pt}
\hypersetup{
  colorlinks=true,
  linkcolor=blue!70!black,
  citecolor=green!50!black,
  urlcolor=blue!70!black,
  pdftitle={An Information-Geometric Framework for OCS-Reconfigurable Interconnect Topology in Large-Scale AI Clusters},
  pdfauthor={Shintaro Yoshizawa}
}

\theoremstyle{plain}
\newtheorem{theorem}{Theorem}[section]
\newtheorem{proposition}[theorem]{Proposition}
\newtheorem{lemma}[theorem]{Lemma}
\newtheorem{corollary}[theorem]{Corollary}
\theoremstyle{definition}
\newtheorem{definition}[theorem]{Definition}
\newtheorem{assumption}{Assumption}
\crefname{assumption}{Assumption}{Assumptions}

\newtheorem{remark}[theorem]{Remark}

\newcommand{\calM}{\mathcal{M}}
\newcommand{\calW}{\mathcal{W}}
\newcommand{\calS}{\mathcal{S}}
\newcommand{\calG}{\mathcal{G}}
\newcommand{\calP}{\mathcal{P}}
\newcommand{\calA}{\mathcal{A}}
\newcommand{\calC}{\mathcal{C}}
\newcommand{\bfv}{\mathbf{v}}
\newcommand{\bfg}{\mathbf{g}}
\newcommand{\bfb}{\mathbf{b}}
\newcommand{\bfw}{\mathbf{w}}
\newcommand{\bfq}{\mathbf{q}}
\newcommand{\bfx}{\mathbf{x}}
\newcommand{\bfy}{\mathbf{y}}
\newcommand{\bfd}{\mathbf{d}}
\newcommand{\bfP}{\mathbf{P}}
\newcommand{\bfeta}{\boldsymbol{\eta}}

\newcommand{\bfrho}{\boldsymbol{\rho}}
\newcommand{\one}{\mathbf{1}}
\newcommand{\DKL}{D_{\mathrm{KL}}}
\newcommand{\KL}[2]{\DKL(#1\,\|\,#2)}
\newcommand{\norm}[1]{\left\|#1\right\|}
\newcommand{\inner}[2]{\langle #1,\, #2 \rangle}
\newcommand{\R}{\mathbb{R}}
\newcommand{\E}{\mathbb{E}}
\newcommand{\Simp}{\Delta}
\newcommand{\Eocs}{E}
\newcommand{\qW}{\bfq_{\calW}}
\newcommand{\lse}{\operatorname{LSE}}
\DeclareMathOperator*{\argmin}{arg\,min}

\newcommand{\Jstatic}{2.204}
\newcommand{\Jadapt}{1.384}
\newcommand{\Jgain}{37}
\newcommand{\DynOracle}{853.7}
\newcommand{\DynT}{600}
\newcommand{\DynVone}{4.7}
\newcommand{\BestPer}{142.1}
\newcommand{\BestTV}{116.2}
\newcommand{\BestGLR}{103.2}
\newcommand{\NeverRegret}{1282.0}
\newcommand{\StaticRegret}{525.0}
\newcommand{\MTref}{1.119 / 0.771 / 0.659}

\begin{document}

\title{%
  \textbf{An Information-Geometric Framework for\\
  OCS-Reconfigurable Interconnect Topology\\
  in Large-Scale AI Clusters}
}

\author{%
  Shintaro Yoshizawa\\[2pt]
  {\normalsize Nagoya Mathematical and Information Science Research}\\[1pt]
  {\normalsize\texttt{\href{mailto:shintaro.yoshizawa.net@gmail.com}{shintaro.yoshizawa.net@gmail.com}}}
}

\date{}

\maketitle
\thispagestyle{empty}

\begin{abstract}
The interconnects of large AI training clusters, exemplified by NVIDIA's GB200 NVL72
systems, are built from static and symmetric fabrics: an NVLink domain within the rack and a
Fat-Tree between racks.  Optical circuit switches render the inter-pod wiring
programmable, and thereby raise three design questions that static fabrics never pose:
which wiring to deploy, how far it should depart from the symmetric reference, and when it
should be changed.  This paper answers them within a single information-geometric
framework.

A wiring is a point of the probability simplex over candidate circuits, constrained to a
degree polytope, and its departure from a symmetric prior encoding the uniform logical
capacity of a non-blocking Fat-Tree is measured by the Kullback--Leibler divergence.  The
cost is a smoothed critical-path length of the training-iteration DAG; it is convex with
explicit smoothness constants, and its gradient is given by the marginals of a Gibbs
distribution over DAG paths, computable by a forward--backward recursion.  For the free
energy $F_{\beta_1}=U+\beta_1^{-1}\DKL(\cdot\,\|\,\bfq)$ we establish well-posedness, monotone
dependence on the inverse temperature $\beta_1$ with quantitative limits at both ends, a lower
bound on the gain of adaptation by the Jeffreys divergence between the static and the
adaptive wiring, and a symmetry principle that characterises exactly when static wiring is
optimal.  The associated entropic proximal-gradient method reduces each step to a
multiplicative update followed by a KL projection with a $|V|$-dimensional dual, and converges
linearly in KL divergence, also in the presence of inexact projections.

For time-varying traffic, a threshold policy attains regret $O(\sqrt{\Gamma CV_1T})$ against
the switching-free dynamic optimum, with a finite-sample counterpart whose radius depends on
the support size rather than the ambient dimension; the KL indicator of traffic change is
identified as the multinomial generalised likelihood-ratio statistic, which yields an
adaptive-lag GLR--CUSUM detector with a non-asymptotic false-alarm bound.  For clusters shared
by several jobs, generalised-mean fairness of slowdowns is convex, coincides in its maximisers
with $\alpha$-fairness of throughputs for $\alpha\ge2$, and leads to a Gibbs-form optimality
condition.  Synthetic experiments on pipeline-, data- and expert-parallel workloads confirm
the theory and delineate its scope, including the regime in which exact integer programming
remains preferable.
\end{abstract}

\vspace{2mm}
\noindent\textbf{Keywords:} information geometry, optical circuit switching, Fat-Tree,
large language model training, Bregman proximal gradient, KL projection, change detection,
fairness.

\newpage
\setcounter{page}{1}
\tableofcontents
\newpage

\section{Introduction}
\label{sec:introduction}

\subsection{The Communication Bottleneck in LLM Training}

Large language models (LLMs) now reach hundreds of billions to trillions of parameters
\cite{brown2020gpt3,chowdhery2022palm,touvron2023llama}, and training them requires tens of
thousands of accelerators operating in concert.  The interconnect is a first-order
determinant of throughput and of reliability: during a 54-day snapshot of Llama~3 405B
pre-training, 35 of 419 unexpected interruptions (8.4\%) were attributed to network switches
and cables \cite{metaLlama3}.

A training job is decomposed along tensor (TP), pipeline (PP), data (DP) and, for
mixture-of-experts (MoE) models, expert (EP) parallelism \cite{shoeybi2019megatron,huang2019gpipe}.
Each dimension induces a distinct collective: TP issues latency-critical collectives that
normally stay inside the scale-up domain, PP sends activations and gradients between
adjacent stages, DP performs gradient all-reduce, and EP performs all-to-all exchanges
whose volume depends on token routing.  The resulting traffic is strongly structured,
phase-dependent, and---for MoE---time-varying.

\subsection{Fat-Tree as the Dominant Paradigm and Its Limitations}

The $k$-ary Fat-Tree \cite{leiserson1985fattrees,alfares2008scalable} is the de facto standard
for HPC and AI clusters.  NVIDIA's DGX SuperPOD reference architecture uses a multi-tier
InfiniBand Fat-Tree \cite{nvidiadgx}; the Blackwell generation adds a rack-scale NVLink
domain (GB200 NVL72: 72 GPUs, fifth-generation NVLink, 1.8\,TB/s per GPU and 130\,TB/s in
aggregate) with an InfiniBand or Ethernet Fat-Tree between racks \cite{nvidiagb200}.  A
non-blocking Fat-Tree offers the same logical capacity to every host pair and is therefore
optimal for worst-case (uniform) traffic.  For structured LLM traffic it is
over-provisioned in some directions and cannot be re-shaped when the parallelism layout
changes: the only remaining degree of freedom is routing, constrained by fixed wiring.

\subsection{Optical Circuit Switching: Promise and Open Problems}

Optical circuit switches establish reconfigurable point-to-point light paths that bypass
electrical packet switching.  Google's TPU\,v4 \cite{jouppi2023tpuv4} interconnects 4{,}096
chips through a 3D torus whose slices are composed per job by 48 OCS units, and Google's
data-centre networks use OCS at scale \cite{urata2022apollo,poutievski2022jupiter}.  Topology
engineering for such fabrics is typically posed as an integer or mixed-integer programme:
COUDER \cite{teh2020couder} optimises over sets of traffic matrices and reports, on Facebook
traces, about 20\% higher throughput and 32\% lower average hop count than cost-equivalent
static topologies; ATRO \cite{atro2025} jointly optimises topology and routing, finding the
global optimum for one-hop routing and alternating updates for multi-hop routing;
TopoOpt \cite{wang2023topoopt} co-optimises topology and parallelisation for training jobs;
DELTA \cite{delta2025} encodes the training DAG in a MILP and reports up to 17.5\% lower
communication time and at least 20\% fewer optical ports than traffic-matrix baselines; and
InfiniteHBD \cite{infhbd2025} builds a data-centre-scale high-bandwidth domain from OCS
transceivers, reporting a cost of 31\% of NVL72.
What is missing is a \emph{continuous} theory that (a) states what a reasonable departure
from the symmetric Fat-Tree reference is, (b) admits provable, low-complexity updates,
(c) incorporates the causal structure of the training DAG, and (d) gives a statistically
principled rule for \emph{when} to reconfigure.

\subsection{Contributions and Their Relation to Known Results}

We develop such a theory by combining elementary information geometry
\cite{amari2000methods,amari2016ig,csiszar1975idivergence} with first-order methods in Bregman
geometry \cite{bauschke2017descent,lu2018relatively} and classical sequential analysis
\cite{lorden1971procedures,siegmund1995glr}.  The tools are standard; the contribution lies in
the formulation they make possible and in the results specific to it.  We regard the
following three as the principal contributions.

\begin{enumerate}[label=(\Roman*)]
\item \textbf{A temperature-controlled design principle (\cref{sec:freeenergy}).}  Wirings are
  points of a simplex restricted to a degree polytope, and their departure from a symmetric
  Fat-Tree-like prior is penalised by the KL divergence with weight $\beta_1^{-1}$.  We prove
  well-posedness and strict positivity (\cref{thm:wellposed}), monotone and quantitative limits
  in $\beta_1$ (\cref{thm:limits}), a lower bound on the gain of adaptation by the Jeffreys
  divergence between static and adaptive wirings (\cref{thm:jeffreys}), and a symmetry
  principle identifying exactly when static wiring is optimal
  (\cref{prop:symmetry,prop:staticopt}).
\item \textbf{A differentiable DAG cost (\cref{sec:model}).}  The iteration time is a smoothed
  critical path of the training DAG; it is convex with explicit smoothness constants
  (\cref{lem:smooth}), and its gradient is the vector of path criticalities of a Gibbs
  distribution over DAG paths, computed by a forward--backward recursion
  (\cref{prop:criticality}).
\item \textbf{Guarantees for reconfiguration (\cref{sec:reconf}).}  A threshold policy has
  regret $O(\sqrt{\Gamma CV_1T})$ against the switching-free dynamic optimum
  (\cref{thm:regret}), with a KL-threshold variant (\cref{cor:kl}) and a finite-sample version
  whose radius scales with the support size (\cref{thm:regret_sampled}); the GLR--CUSUM detector
  obtained from the likelihood-ratio form of the KL indicator controls false alarms
  (\cref{thm:falsealarm}).
\end{enumerate}

Two further components apply established results to the present model and are included to
make the framework complete and verifiable: the convergence analysis of the entropic
proximal-gradient method (\cref{sec:algorithm}; \cref{thm:convergence,thm:inexact}), which
specialises the theory of Bregman proximal gradient methods under relative smoothness, and the
multi-tenant extension (\cref{sec:multitenant}), which combines the KL chain rule with
generalised-mean fairness and identifies the range $\alpha\ge2$ in which fairness and convexity
are compatible (\cref{prop:fairconvex,thm:kkt,prop:pareto}).  Synthetic experiments
(\cref{sec:experiments}) verify every bound and compare against static, demand-proportional and
exact integer baselines.  We have not carried out an exhaustive priority search, and make no
claim that the individual inequalities are new as abstract statements; their interest lies in
what they assert about reconfigurable fabrics.

\section{Background}
\label{sec:background}

\subsection{Fat-Tree, Blackwell, and OCS Fabrics}
\label{sec:fattree}

A \emph{$k$-ary Fat-Tree} has $(k/2)^2$ core switches and $k$ pods, each with $k/2$
aggregation and $k/2$ edge switches; every edge switch serves $k/2$ hosts, for $k^3/4$ hosts
and $k^3/2$ inter-switch links in total.  Its automorphism group permutes pods and the hosts
below an edge switch, and with full bisection bandwidth $(k^3/4)(c/2)$ it offers the same
logical capacity to every host pair.  \Cref{fig:topology_comparison}(a) shows $k=4$.

In Blackwell NVL72 systems \cite{nvidiagb200} the 72 GPUs of a rack are connected through
NVSwitch trays; each GPU also has a NIC to the scale-out InfiniBand or Ethernet Fat-Tree
(\cref{fig:topology_comparison}(b)).  All layers are static.

An OCS device establishes point-to-point optical circuits between transceivers; switching
takes milliseconds for MEMS devices, while end-to-end reconfiguration including transceiver
and NIC re-initialisation is reported in the range of seconds \cite{patronas2025optical,delta2025}.
We work at \emph{pod} granularity: a pod is a scale-up domain (e.g.\ a rack) with $r$ OCS
uplinks; the OCS layer realises any multigraph on the pods in which every pod has degree at
most $r$ and the total number of circuits does not exceed the OCS port budget $\mu$
(\cref{fig:topology_comparison}(c)).

\begin{figure}[!tbp]
\centering
\tikzset{
  host/.style    = {circle, draw=black!70, fill=black!15, minimum size=4.2pt, inner sep=0pt},
  swbase/.style  = {rectangle, rounded corners=1.5pt, draw=black!70,
                    minimum width=13pt, minimum height=8pt, inner sep=1pt, font=\tiny},
  core/.style    = {swbase, fill=blue!18},
  agg/.style     = {swbase, fill=green!18},
  edgesw/.style  = {swbase, fill=yellow!30},
  nvsw/.style    = {swbase, fill=white, minimum width=40pt},
  lk/.style      = {draw=black!55, line width=0.45pt},
  nvlk/.style    = {draw=blue!70, line width=0.7pt},
  iblk/.style    = {draw=red!65, line width=0.45pt},
  tier/.style    = {font=\scriptsize, text=black!60},
  ocsnode/.style = {circle, draw=orange!80!black, fill=orange!20, minimum size=8pt, inner sep=0pt},
  torlk/.style   = {draw=orange!80!black, line width=0.7pt, dashed},
  circ/.style    = {draw=orange!90!black, line width=1.6pt},
  fib/.style     = {draw=black!45, line width=0.4pt},
  pt/.style      = {circle, minimum size=5.5pt, inner sep=0pt, draw=white, line width=0.4pt},
  ptlab/.style   = {font=\scriptsize, inner sep=1pt},
}
\begin{minipage}[t]{0.49\linewidth}
\centering
{\small\bfseries (a) Fat-Tree ($k{=}4$)}\\[-1pt]
{\scriptsize\color{black!60} Static, symmetric}\\[4pt]
\begin{tikzpicture}
  \foreach \p in {0,1,2,3}{
    \pgfmathsetmacro\cx{1.5+1.6*\p}
    \node[agg]    (a\p0) at (\cx-0.4,2.05) {A};
    \node[agg]    (a\p1) at (\cx+0.4,2.05) {A};
    \node[edgesw] (e\p0) at (\cx-0.4,1.15) {E};
    \node[edgesw] (e\p1) at (\cx+0.4,1.15) {E};
    \foreach \i in {0,1}{
      \foreach \j in {0,1}{ \draw[lk] (a\p\i) -- (e\p\j); }
      \pgfmathsetmacro\ex{\cx-0.4+0.8*\i}
      \node[host] (h\p\i a) at (\ex-0.17,0.3) {};
      \node[host] (h\p\i b) at (\ex+0.17,0.3) {};
      \draw[lk] (e\p\i) -- (h\p\i a); \draw[lk] (e\p\i) -- (h\p\i b);
    }
    \begin{scope}[on background layer]
      \fill[black!4, rounded corners=3pt] (\cx-0.78,0.08) rectangle (\cx+0.78,2.32);
    \end{scope}
  }
  \foreach \c/\cx in {0/2.2,1/3.5,2/4.9,3/6.2}{
    \node[core] (c\c) at (\cx,3.05) {C};
  }
  \foreach \c in {0,1,2,3}{
    \pgfmathtruncatemacro\g{\c/2}
    \foreach \p in {0,1,2,3}{ \draw[lk] (c\c) -- (a\p\g); }
  }
  \node[tier, anchor=east] at (0.62,3.05) {Core};
  \node[tier, anchor=east] at (0.62,2.05) {Agg};
  \node[tier, anchor=east] at (0.62,1.15) {Edge};
  \node[tier, anchor=east] at (0.62,0.30) {Host};
  \foreach \p in {0,1,2,3}{
    \node[tier, font=\tiny] at (1.5+1.6*\p,-0.12) {pod \p};
  }
\end{tikzpicture}
\end{minipage}\hfill
\begin{minipage}[t]{0.49\linewidth}
\centering
{\small\bfseries (b) Blackwell NVL72}\\[-1pt]
{\scriptsize\color{black!60} NVLink (intra-rack) + InfiniBand Fat-Tree (inter-rack)}\\[4pt]
\begin{tikzpicture}
  \node[core] (s1) at (2.6,3.05) {Spine};
  \node[core] (s2) at (4.4,3.05) {Spine};
  \foreach \l/\lx in {1/1.2,2/2.6,3/4.4,4/5.8}{
    \node[agg] (l\l) at (\lx,2.2) {Leaf};
    \draw[iblk] (s1) -- (l\l); \draw[iblk] (s2) -- (l\l);
  }
  \foreach \r/\rx/\la/\lb in {0/1.9/1/2, 1/5.1/3/4}{
    \begin{scope}[on background layer]
      \fill[blue!5, draw=blue!30, rounded corners=3pt]
        (\rx-1.45,0.0) rectangle (\rx+1.45,1.62);
    \end{scope}
    \node[nvsw] (nv\r) at (\rx,0.28) {NVSwitch};
    \foreach \g in {0,...,5}{
      \pgfmathsetmacro\gx{\rx-1.15+0.46*\g}
      \node[host] (g\r\g) at (\gx,1.15) {};
      \draw[nvlk] (g\r\g) -- (nv\r.north);
    }
    \node[font=\tiny, text=black!60] at (\rx+1.18,1.15) {$\cdots$};
    \node[font=\tiny, text=blue!60!black, anchor=north] at (\rx,-0.03) {NVL72 rack};
    \foreach \g in {0,1,2}{ \draw[iblk] (g\r\g) -- (l\la); }
    \foreach \g in {3,4,5}{ \draw[iblk] (g\r\g) -- (l\lb); }
  }
  \node[tier, anchor=west] at (6.55,3.05) {IB};
  \node[tier, anchor=west] at (6.55,2.2)  {IB};
  \node[tier, anchor=west] at (6.55,1.15) {GPU};
  \draw[nvlk] (1.6,-0.6) -- (2.1,-0.6); \node[tier, anchor=west] at (2.12,-0.6) {NVLink};
  \draw[iblk] (3.7,-0.6) -- (4.2,-0.6); \node[tier, anchor=west] at (4.22,-0.6) {InfiniBand};
\end{tikzpicture}
\end{minipage}

\vspace{10pt}
\begin{minipage}[t]{0.49\linewidth}
\centering
{\small\bfseries (c) OCS-Reconfigurable}\\[-1pt]
{\scriptsize\color{black!60} Logical 2D/3D torus, re-wired per job}\\[4pt]
\begin{tikzpicture}
  \node[draw=orange!80!black, fill=orange!12, rounded corners=3pt,
        minimum width=4.6cm, minimum height=0.7cm] (ocs) at (2.2,4.25) {};
  \node[font=\scriptsize, text=orange!50!black, anchor=west] at (ocs.west) {\;OCS};
  \foreach \r in {0,1,2}{
    \foreach \c in {0,1,2}{
      \node[ocsnode] (n\r\c) at (1.0+1.2*\c, 0.45+1.05*\r) {};
    }
  }
  \foreach \r in {0,1,2}{
    \draw[torlk] (n\r0) -- (n\r1); \draw[torlk] (n\r1) -- (n\r2);
    \draw[torlk, -{Stealth[length=3pt]}] (n\r2) -- ++(0.55,0);
    \draw[torlk, {Stealth[length=3pt]}-] (n\r0) -- ++(-0.55,0);
  }
  \foreach \c in {0,1,2}{
    \draw[torlk] (n0\c) -- (n1\c); \draw[torlk] (n1\c) -- (n2\c);
    \draw[torlk, {Stealth[length=3pt]}-] (n0\c) -- ++(0,-0.45);
  }
  \foreach \c in {0,1,2}{
    \draw[fib] (n2\c) -- (n2\c |- ocs.south);
  }
  \draw[circ] (n20.north) -- (n20 |- ocs.south)
              to[out=90, in=90, looseness=1.1] (n22 |- ocs.south) -- (n22.north);
  \draw[torlk] (0.2,-0.35) -- (0.7,-0.35);
  \node[tier, anchor=west] at (0.72,-0.35) {torus link};
  \draw[circ] (2.3,-0.35) -- (2.8,-0.35);
  \node[tier, anchor=west] at (2.82,-0.35) {OCS circuit};
\end{tikzpicture}
\end{minipage}\hfill
\begin{minipage}[t]{0.49\linewidth}
\centering
{\small\bfseries (d) Topology manifold $\calM_{\mathrm{topo}}$}\\[-1pt]
{\scriptsize\color{black!60} Information-geometric view}\\[4pt]
\begin{tikzpicture}
  \fill[blue!6, draw=blue!35, rounded corners=16pt] (0,0) rectangle (6.4,4.5);
  \node[font=\scriptsize, text=blue!50!black, anchor=north west] at (0.3,4.4) {$\calM_{\mathrm{topo}}$};
  \filldraw[fill=green!10, draw=green!50!black, dashed, line width=0.6pt]
    (0.7,0.55) -- (4.3,0.4) -- (5.0,2.3) -- (3.6,3.75) -- (0.9,3.3) -- cycle;
  \node[font=\scriptsize, text=green!40!black] at (1.25,2.95) {$\calW$};
  \coordinate (w0) at (5.55,3.45);
  \draw[orange!45, dashed] (w0) ellipse (0.32 and 0.22);
  \draw[orange!35, dashed] (w0) ellipse (0.72 and 0.5);
  \node[pt, fill=orange!90!black] at (w0) {};
  \node[ptlab, anchor=south] at ($(w0)+(0,0.22)$) {$\bfw^{*}_{0}$};
  \coordinate (wh) at (4.45,2.85);
  \draw[->, gray, dashed, line width=0.6pt, shorten >=3pt, shorten <=3pt] (w0) -- (wh);
  \node[ptlab, text=black!70, anchor=north west, font=\tiny] at (5.1,2.85) {KL proj.};
  \node[pt, fill=orange!55] at (wh) {};
  \node[ptlab, anchor=east] at ($(wh)+(-0.1,0.06)$) {$\hat{\bfw}$};
  \coordinate (wft) at (0.7,0.55);
  \node[pt, fill=blue!65] at (wft) {};
  \node[pt, fill=red!60, minimum size=4pt] at ($(wft)+(0.11,0.06)$) {};
  \node[ptlab, anchor=north west, align=left] at ($(wft)+(0.05,-0.14)$)
     {$\qW$ (static, FT/BW)};
  \node[pt, fill=purple!60] (df) at (1.6,1.9) {};
  \node[ptlab, anchor=south] at ($(df)+(0,0.08)$) {Dragonfly+};
  \node[pt, fill=teal!60] (jf) at (2.3,1.2) {};
  \node[ptlab, anchor=south] at ($(jf)+(0,0.08)$) {Jellyfish};
  \node[pt, fill=violet!50] (hm) at (2.9,2.6) {};
  \node[ptlab, anchor=south] at ($(hm)+(0,0.08)$) {HammingMesh};
  \draw[-{Stealth[length=5pt]}, orange!80!black, line width=1pt, shorten <=4pt, shorten >=4pt]
    (wft) .. controls (2.6,0.55) and (3.8,1.4) .. (wh);
  \node[font=\scriptsize, text=orange!70!black] at (5.55,1.25) {proximal steps};
\end{tikzpicture}
\end{minipage}

\caption{%
  \textbf{Four views of interconnect topology.}
  \textbf{(a)} A $k{=}4$ Fat-Tree with 4 pods, $(k/2)^2{=}4$ core switches, 8 aggregation
  switches, 8 edge switches, and $k^3/4{=}16$ hosts.  All links are fixed at deployment.
  \textbf{(b)} NVIDIA Blackwell NVL72: GPUs are fully meshed within each rack via NVSwitch
  (blue links), and racks are interconnected through a hierarchical InfiniBand Fat-Tree reached
  through per-GPU NICs (red links).  Only 6 of the 72 GPUs per rack are drawn.  The topology is static.
  \textbf{(c)} An OCS-reconfigurable fabric realizes a logical torus (dashed orange; arrows
  indicate wrap-around links) whose wiring is reprogrammed per job.  The thick orange
  circuit is an example cross-connect set inside the OCS, turning otherwise idle fibres into a
  long-range link.
  \textbf{(d)} Schematic information-geometric view (not to scale): wirings are points of the
  simplex $\calM_{\mathrm{topo}}$, and the degree polytope $\calW$ is $m$-convex.  The static
  symmetric wiring sits at the prior; an iterate $\bfw^{*}_{0}$ outside $\calW$ (e.g.\ the
  multiplicative step \eqref{eq:geomix}) is mapped back by the KL projection
  $\hat{\bfw}=\argmin_{\calW}\KL{\cdot}{\bfw^{*}_0}$ (\cref{lem:closedform}).  Other topology
  families are shown only for orientation; families with different switch sets are not, in
  general, points of the same simplex (\cref{rem:support}).
}
\label{fig:topology_comparison}
\end{figure}

\subsection{Information Geometry of the Simplex}
\label{sec:ig_background}

Let $\Eocs$ be a finite set and
$\Simp=\{\bfw\in\R^{\Eocs}_{\ge0}:\sum_e w_e=1\}$, with relative interior $\Simp^\circ$.
With the Fisher metric $g_{ij}=\sum_e p_e^{-1}\partial_ip_e\,\partial_jp_e$, $\Simp^\circ$ is a
dually flat manifold \cite{amari2000methods,amari2016ig}: the $m$-affine coordinates are
$\bfeta=\bfw$ and the $e$-affine coordinates are $\theta_e=\log w_e-\log w_{e_0}$.  The
canonical divergence is the KL divergence
$\KL{\bfx}{\bfw}=\sum_e x_e\log(x_e/w_e)$ (with $0\log0=0$), which is the Bregman
divergence of the negative entropy $h(\bfw)=\sum_e w_e\log w_e$ restricted to $\Simp$.

A set $\calW\subseteq\Simp$ is \emph{$m$-convex} if it is convex in $\bfeta$-coordinates
(e.g.\ a polytope cut out by linear inequalities in $\bfw$), and \emph{$m$-flat} if it is the
intersection of $\Simp$ with an affine subspace.  For a closed $m$-convex $\calW$ and
$\bfq\in\Simp^\circ$ we call
\begin{equation}
  \qW=\argmin_{\bfw\in\calW}\KL{\bfw}{\bfq}
  \label{eq:iproj}
\end{equation}
the \emph{KL projection} (Csisz\'ar's I-projection \cite{csiszar1975idivergence}) of $\bfq$
onto $\calW$.  In Amari's terminology the minimiser over the first argument onto an
$m$-flat set is reached along an $e$-geodesic and is called the $e$-projection; to avoid
the ambiguity present in parts of the literature we use the explicit form \eqref{eq:iproj}
throughout.

\begin{lemma}[Positivity]
\label{lem:positivity}
Let $\calC\subseteq\Simp$ be closed and convex and contain a point with all coordinates
positive.  Let $\varphi(\bfw)=c\,h(\bfw)+\psi(\bfw)$ with $c>0$ and $\psi$ convex and
continuously differentiable on a neighbourhood of $\Simp$.  Then every minimiser of
$\varphi$ over $\calC$ has all coordinates positive.
\end{lemma}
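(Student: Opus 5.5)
I will argue by contradiction with a directional-derivative argument along the segment towards an interior point of $\calC$. Let $\bfw^\star$ be a minimiser of $\varphi$ over $\calC$, and suppose the index set $Z=\{e:w^\star_e=0\}$ is nonempty. By hypothesis there is $\bfx\in\calC$ with $x_e>0$ for all $e$. For $t\in(0,1]$ put $\bfw_t=(1-t)\bfw^\star+t\bfx$; by convexity $\bfw_t\in\calC$, and every coordinate of $\bfw_t$ is positive. I will show that $\varphi(\bfw_t)<\varphi(\bfw^\star)$ for small $t>0$, which contradicts minimality.

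I treat the two parts of $\varphi$ separately. For $\psi$, continuous differentiability on a neighbourhood of $\Simp$ gives
\[
\psi(\bfw_t)-\psi(\bfw^\star)=t\,\inner{\nabla\psi(\bfw^\star)}{\bfx-\bfw^\star}+o(t),
\]
so this difference is $O(t)$ with a constant $M=\sup_{\Simp}\|\nabla\psi\|\,\|\bfx-\bfw^\star\|$ that is finite by compactness. (Convexity of $\psi$ is not needed for this step, though it also bounds the difference by $t\inner{\nabla\psi(\bfw_t)}{\bfx-\bfw^\star}$.)

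For $h$, I split into coordinates in $Z$ and outside $Z$. For $e\notin Z$, $w\mapsto w\log w$ is $C^1$ near $w^\star_e>0$, so these terms contribute $O(t)$. For $e\in Z$, the term is
\[
t x_e\log(t x_e)=t x_e\log t+t x_e\log x_e .
\]
Summing gives
\[
c\,[h(\bfw_t)-h(\bfw^\star)]\le c\,t\log t\sum_{e\in Z}x_e+O(t).
\]
Since $\sum_{e\in Z}x_e>0$ and $t\log t/t=\log t\to-\infty$, the total change satisfies $\varphi(\bfw_t)-\varphi(\bfw^\star)\le t\,(c\,s\log t+K)$, with $s=\sum_{e\in Z}x_e>0$ and $K$ a constant. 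This is negative once $t<e^{-K/(cs)}$, which is the contradiction. Hence $Z=\emptyset$.

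The main obstacle is bookkeeping the $O(t)$ constants uniformly. I will handle it with the mean value theorem on each term for $e\notin Z$, where the derivative $1+\log w$ stays bounded on $[\min(w^\star_e,x_e),1]$, together with the uniform bound on $\nabla\psi$ over the compact simplex. Existence of a minimiser is not claimed, so nothing further is needed.
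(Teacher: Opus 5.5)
Your proof is correct and follows the same route as the paper. Both arguments move along the segment from the minimiser toward a strictly positive point of $\calC$ and show that the $t\log t$ contribution of the vanishing coordinates dominates the $O(t)$ terms; the paper phrases this as a right derivative of $-\infty$ at $t=0$, your bound $t(cs\log t+K)$ is a quantitative version of the same step, and you are right that convexity of $\psi$ is not used.
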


\begin{proof}
Let $\bar\bfw$ minimise $\varphi$ over $\calC$ and suppose $\bar w_e=0$ for some $e$.  Pick
$\bfx\in\calC$ with $\bfx>0$ and set $\bfw_t=\bar\bfw+t(\bfx-\bar\bfw)\in\calC$ for
$t\in(0,1]$.  For coordinates with $\bar w_f>0$, the map $t\mapsto w_{t,f}\log w_{t,f}$ is
differentiable at $0$; so is $\psi(\bfw_t)$.  For the coordinate $e$,
$\frac{d}{dt}\,(t x_e)\log(t x_e)=x_e(\log(tx_e)+1)\to-\infty$ as $t\downarrow0$.  Hence the
right derivative of $t\mapsto\varphi(\bfw_t)$ at $0$ is $-\infty$, so
$\varphi(\bfw_t)<\varphi(\bar\bfw)$ for small $t>0$, a contradiction.
\end{proof}

\begin{theorem}[Generalised Pythagorean inequality {\cite{csiszar1975idivergence}}]
\label{thm:pythagoras}
Let $\calW\subseteq\Simp$ be closed, $m$-convex and contain a positive point, and let
$\bfq\in\Simp^\circ$.  Then $\qW$ exists, is unique, is positive, and for every
$\bfx\in\calW$,
\begin{equation}
  \KL{\bfx}{\bfq}\;\ge\;\KL{\bfx}{\qW}+\KL{\qW}{\bfq},
  \label{eq:pythagoras}
\end{equation}
with equality for all $\bfx\in\calW$ when $\calW$ is $m$-flat.
\end{theorem}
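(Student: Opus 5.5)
Existence, uniqueness and positivity come first; the inequality then follows from a first-order optimality condition and a three-point identity. The map $\bfw\mapsto\KL{\bfw}{\bfq}$ is continuous on the compact set $\calW$ because $\bfq\in\Simp^\circ$, so every coordinate $\log q_e$ is finite. A minimiser therefore exists. It is unique because $\KL{\cdot}{\bfq}=h(\cdot)-\inner{\cdot}{\log\bfq}$ is strictly convex on $\Simp$ and $\calW$ is convex. Positivity follows from \cref{lem:positivity} applied with $\calC=\calW$, $c=1$ and the linear function $\psi(\bfw)=-\inner{\bfw}{\log\bfq}$, which is smooth on a neighbourhood of $\Simp$.

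For the inequality, fix $\bfx\in\calW$ and consider the segment $\bfw_t=\qW+t(\bfx-\qW)\in\calW$ for $t\in[0,1]$. Since $\qW>0$, the function $t\mapsto\KL{\bfw_t}{\bfq}$ is differentiable at $t=0$. Minimality of $\qW$ gives the variational inequality
\[
\sum_e (x_e-q_{\calW,e})\bigl(\log q_{\calW,e}-\log q_e\bigr)\ \ge\ 0 .
\]
The "$+1$" term from differentiating $w\log w$ drops out because both $\bfx$ and $\qW$ sum to one.

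Next I use the three-point identity, which holds whenever $\qW$ and $\bfq$ are positive (zeros of $\bfx$ are handled by $0\log0=0$):
\[
\KL{\bfx}{\bfq}-\KL{\bfx}{\qW}-\KL{\qW}{\bfq}=\sum_e (x_e-q_{\calW,e})\log\frac{q_{\calW,e}}{q_e}.
\]
This is a direct expansion of the three sums. Combined with the variational inequality, it yields \eqref{eq:pythagoras}.

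For the $m$-flat case, $\calW$ is the intersection of $\Simp$ with an affine subspace. Since $\qW$ is strictly positive, it lies in the relative interior of $\calW$. Hence for every $\bfx\in\calW$ the segment through $\qW$ extends slightly beyond $\qW$ in the direction $\qW-\bfx$, i.e.\ $\bfw_t\in\calW$ for small negative $t$ as well. Minimality at the interior point $t=0$ then forces the derivative to be zero, so the variational inequality holds with equality. The three-point identity then gives equality in \eqref{eq:pythagoras}.

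The main point needing care is this: differentiability of KL along the segment requires $\qW>0$. That is exactly why positivity is established first, via \cref{lem:positivity}. The rest is routine.
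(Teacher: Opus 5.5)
Your proposal is correct and follows the paper's proof essentially step for step. Existence and uniqueness come from continuity and strict convexity on the compact convex set, and positivity from \cref{lem:positivity}. The inequality combines the first-order variational inequality at the positive minimiser with the three-point identity \eqref{eq:three_point_identity}, and equality in the $m$-flat case follows from the two-sided perturbation $\qW\pm s(\bfx-\qW)\in\calW$ at a relative-interior point.
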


\begin{proof}
$\bfw\mapsto\KL{\bfw}{\bfq}=h(\bfw)-\inner{\log\bfq}{\bfw}$ is continuous and strictly
convex on the compact set $\calW$, so the minimiser exists and is unique; it is positive by
\cref{lem:positivity}.  At a positive minimiser the first-order condition reads
$\inner{\log(\qW/\bfq)+\one}{\bfx-\qW}\ge0$ for all $\bfx\in\calW$; since
$\inner{\one}{\bfx-\qW}=0$ this is $\sum_e(x_e-q_{\calW,e})\log(q_{\calW,e}/q_e)\ge0$.
Direct expansion gives the identity
\begin{equation}
  \KL{\bfx}{\bfq}-\KL{\bfx}{\qW}-\KL{\qW}{\bfq}
  =\sum_e(x_e-q_{\calW,e})\log\frac{q_{\calW,e}}{q_e},
  \label{eq:three_point_identity}
\end{equation}
which proves \eqref{eq:pythagoras}.  If $\calW$ is $m$-flat, $\qW$ lies in its relative
interior (it is positive), so $\qW\pm s(\bfx-\qW)\in\calW$ for small $s>0$ and the
first-order condition holds with equality.
\end{proof}

We repeatedly use \emph{Pinsker's inequality} $\KL{\bfx}{\bfw}\ge\frac12\norm{\bfx-\bfw}_1^2$
\cite{cover2006elements} and the following three-point identity, valid for
$\bfx\in\Simp$ and $\bfw,\bfq\in\Simp^\circ$, which is \eqref{eq:three_point_identity} with
$\qW$ replaced by any positive $\bfw$:
\begin{equation}
  \KL{\bfx}{\bfq}=\KL{\bfw}{\bfq}+\inner{\log(\bfw/\bfq)}{\bfx-\bfw}+\KL{\bfx}{\bfw}.
  \label{eq:three_point}
\end{equation}

\section{Model: Wiring Manifold, Prior, and DAG Cost}
\label{sec:model}

\subsection{Wirings and the Degree Polytope}
\label{sec:wiring}

Let $V$ be the set of pods, $|V|=N$, each with $r$ OCS uplinks of bandwidth $c$, and let
$\mu\le Nr/2$ be the number of circuits the OCS layer can establish (its port budget).  Let
$\Eocs$ be the set of candidate pod pairs (in the experiments all $\binom N2$ pairs).

\begin{definition}[Wiring, capacity, degree polytope]
A \emph{wiring} is $\bfw\in\Simp$; it places $\mu w_e$ circuits on pair $e$.  The
capacity of pair $e$ is
\begin{equation}
  \kappa_e(\bfw)=b_0+B\,w_e,\qquad B=\mu c,
  \label{eq:capacity}
\end{equation}
where $b_0>0$ is the bandwidth available to each pair through the static (electrical)
fabric, used in parallel with the circuits.  With the incidence matrix
$A\in\{0,1\}^{V\times\Eocs}$, the \emph{degree polytope} is
\begin{equation}
  \calW=\bigl\{\bfw\in\Simp:\;A\bfw\le (r/\mu)\,\one\bigr\}.
  \label{eq:calW}
\end{equation}
\end{definition}

$\calW$ is a closed $m$-convex polytope (not, in general, $m$-flat).  Integrality of the
circuit counts is relaxed here and restored in \cref{sec:rounding}.

\begin{assumption}[Strict feasibility]
\label{as:feasible}
$\calW$ contains a point with all coordinates positive.
\end{assumption}

For $\Eocs$ the complete graph, the uniform wiring has degree $2\mu/N\le r$, so
\cref{as:feasible} holds whenever $\mu\le Nr/2$.

\subsection{The Symmetric Prior}
\label{sec:prior}

The reference wiring is a positive $\bfq\in\Simp^\circ$.  Our default is the uniform
distribution on $\Eocs$: at pod granularity this is the logical capacity a non-blocking
Fat-Tree offers to uniform traffic, namely an equal share for every pod pair, and it is
invariant under all permutations of pods.  We write $\qW$ for its KL projection
\eqref{eq:iproj} onto $\calW$; when $\bfq\in\calW$ (as for the uniform prior under
\cref{as:feasible} with complete $\Eocs$), $\qW=\bfq$.

\begin{remark}[Priors with restricted support]
\label{rem:support}
If one wishes to use a prior supported on a subset $\Eocs_0\subsetneq\Eocs$ (for instance
the links of a physical Fat-Tree), then $\KL{\bfw}{\bfq}=+\infty$ for every wiring that puts
mass outside $\Eocs_0$, so the penalty forbids any departure from the support.  The
smoothed prior $\bfq_\epsilon=(1-\epsilon)\,\mathrm{Unif}(\Eocs_0)+\epsilon\,
\mathrm{Unif}(\Eocs\setminus\Eocs_0)$, $\epsilon\in(0,1)$, is positive and all results below
apply to it; the bounds depend on $\epsilon$ through $\KL{\cdot}{\bfq_\epsilon}\le\log(|\Eocs|/\epsilon)$.
\end{remark}

\subsection{Workload, DAG, and Smoothed Critical Path}
\label{sec:dag}

One training iteration is described by a DAG $\calG=(\mathcal{N},\calA)$ whose nodes are
either \emph{compute} nodes $v$ with fixed duration $c_v\ge0$ or \emph{communication} nodes
executing a phase $s(v)\in\calS$.  Phase $s$ moves volume $d_{s,e}\ge0$ over pair $e$;
$\Eocs_s=\{e:d_{s,e}>0\}$ and $n_s=|\Eocs_s|$.  A phase lasts as long as its slowest pair:
\begin{equation}
  T_s(\bfw)=\max_{e\in\Eocs_s}\frac{d_{s,e}}{\kappa_e(\bfw)}.
\end{equation}
Let $\calP$ be the set of paths from a source (node without predecessors) to a sink of
$\calG$, $H=\max_{P\in\calP}|\{v\in P:\ v\text{ communication}\}|$, and
$\ell_P=\sum_{v\in P}\mathrm{dur}_v$.  The iteration time is the critical-path length
$J(\bfw)=\max_{P\in\calP}\ell_P(\bfw)$.

\begin{assumption}[Modelling assumptions]
\label{as:model}
(i) Traffic of a phase uses the direct circuits of each pair together with the static
fabric (single-hop routing, capacities add as in \eqref{eq:capacity}); (ii) phases that may
run concurrently in $\calG$ do not contend for the same pair, or such contention is
neglected; (iii) each node starts as soon as all its predecessors have finished.
\end{assumption}

With $\lse_\tau(z_1,\dots,z_n)=\tau\log\sum_ie^{z_i/\tau}$, which satisfies
$\max_iz_i\le\lse_\tau(z)\le\max_iz_i+\tau\log n$, we smooth both maxima:
\begin{equation}
  T_s^{\tau_s}(\bfw)=\lse_{\tau_s}\Bigl(\tfrac{d_{s,e}}{\kappa_e(\bfw)}\Bigr)_{e\in\Eocs_s},
  \qquad
  U(\bfw)=J_\tau(\bfw)=\lse_\tau\bigl(\ell_P(\bfw)\bigr)_{P\in\calP},
  \label{eq:U}
\end{equation}
where communication durations in $\ell_P$ are $T^{\tau_s}_{s(v)}$.  Although $|\calP|$ may be
exponential, $U$ and its gradient are computed in $O(|\mathcal N|+|\calA|+\sum_sn_s)$ time
(\cref{prop:criticality}).

\begin{lemma}[Convexity, smoothness, approximation]
\label{lem:smooth}
Under \cref{as:model}, $U$ is convex and nonincreasing in every coordinate on
$\R^{\Eocs}_{\ge0}$.  With $\bar d_s=\max_ed_{s,e}$, $G_s=B\bar d_s/b_0^2$ and
$M_s=2B^2\bar d_s/b_0^3+G_s^2/\tau_s$, set $G_H=\max_{P}\sum_{v\in P}G_{s(v)}$ and
$M_H=\max_P\sum_{v\in P}M_{s(v)}$, the sums running over the communication nodes of $P$.  Then for all $\bfw\ge0$ and $\bfv\in\R^{\Eocs}$,
\begin{equation}
  \norm{\nabla U(\bfw)}_\infty\le G_H,\qquad
  \bfv^\top\nabla^2U(\bfw)\,\bfv\le L\norm{\bfv}_1^2,\qquad
  L=M_H+\frac{G_H^2}{\tau}.
  \label{eq:Lconst}
\end{equation}
Consequently $U(\bfy)\le U(\bfx)+\inner{\nabla U(\bfx)}{\bfy-\bfx}+L\,\KL{\bfy}{\bfx}$ for
all $\bfx,\bfy\in\Simp$ (relative smoothness), and
$0\le U(\bfw)-J(\bfw)\le\tau\log|\calP|+H\tau_s\log\max_sn_s$.
\end{lemma}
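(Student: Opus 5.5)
The plan is to build $U$ up by composition and track first and second derivatives at each layer.

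\textbf{Per-pair times.} Each $f_{s,e}(\bfw)=d_{s,e}/(b_0+Bw_e)$ is convex and nonincreasing in $w_e$ on $w_e\ge0$. Its derivatives satisfy
\[
|f'_{s,e}|=\frac{Bd_{s,e}}{\kappa_e^2}\le G_s,\qquad f''_{s,e}=\frac{2B^2d_{s,e}}{\kappa_e^3}\le\frac{2B^2\bar d_s}{b_0^3},
\]
because $\kappa_e\ge b_0$.

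\textbf{Phase times.} $\lse_{\tau_s}$ is convex and nondecreasing in each argument, so $T_s^{\tau_s}$ is convex and nonincreasing in every coordinate. With softmax weights $\pi_e\ge0$, $\sum_e\pi_e=1$, the chain rule gives
\[
\nabla T_s^{\tau_s}=\sum_e\pi_e\nabla f_{s,e}\quad\Rightarrow\quad \norm{\nabla T_s^{\tau_s}}_\infty\le G_s.
\]
Each $\nabla f_{s,e}$ is supported on the single coordinate $e$. For the Hessian I use
\[
\nabla^2\lse_\tau=\tau^{-1}\bigl(\mathrm{diag}(\pi)-\pi\pi^\top\bigr)\preceq\tau^{-1}\mathrm{diag}(\pi),
\]
which yields
\[
\bfv^\top\nabla^2T_s\bfv\le\sum_e\pi_ef''_{s,e}v_e^2+\tau_s^{-1}\sum_e\pi_e(f'_{s,e}v_e)^2
\le\Bigl(\frac{2B^2\bar d_s}{b_0^3}+\frac{G_s^2}{\tau_s}\Bigr)\norm{\bfv}_\infty^2\le M_s\norm{\bfv}_1^2 .
\]

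\textbf{Path lengths and $U$.} Each $\ell_P$ is a constant plus a sum of phase times, so it is convex and nonincreasing. It satisfies $\norm{\nabla\ell_P}_\infty\le\sum_{v\in P}G_{s(v)}\le G_H$ and $\bfv^\top\nabla^2\ell_P\bfv\le M_H\norm{\bfv}_1^2$. Applying the outer $\lse_\tau$ again preserves convexity and monotonicity. It gives $\norm{\nabla U}_\infty\le G_H$ because the gradient is a convex combination. The same Hessian decomposition gives
\[
\bfv^\top\nabla^2U\bfv\le\sum_P\rho_P\,\bfv^\top\nabla^2\ell_P\bfv+\tau^{-1}\sum_P\rho_P\inner{\nabla\ell_P}{\bfv}^2\le M_H\norm{\bfv}_1^2+\tau^{-1}G_H^2\norm{\bfv}_1^2,
\]
where I use $|\inner{\nabla\ell_P}{\bfv}|\le\norm{\nabla\ell_P}_\infty\norm{\bfv}_1$. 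This is \eqref{eq:Lconst}.

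\textbf{Relative smoothness.} Taylor's theorem with integral remainder along the segment from $\bfx$ to $\bfy$, which stays in $\R_{\ge0}^{\Eocs}$, gives
\[
U(\bfy)\le U(\bfx)+\inner{\nabla U(\bfx)}{\bfy-\bfx}+\tfrac L2\norm{\bfy-\bfx}_1^2 .
\]
Pinsker's inequality then bounds $\frac12\norm{\bfy-\bfx}_1^2\le\KL{\bfy}{\bfx}$. When $\bfx$ has zero coordinates where $\bfy$ does not, the KL is $+\infty$ and the bound is trivial.

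\textbf{Approximation bound.} The outer smoothing satisfies $J\le\max_P\tilde\ell_P\le U\le\max_P\tilde\ell_P+\tau\log|\calP|$, where $\tilde\ell_P$ denotes path lengths computed with smoothed phase times. Each smoothed phase time exceeds the true one by at most $\tau_s\log n_s$. A path has at most $H$ communication nodes, so $\tilde\ell_P\le\ell_P+H\tau_s\log\max_sn_s$. This gives the upper bound. The lower bound $U\ge J$ follows because both smoothings only increase values and $\lse_\tau$ is monotone.

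The main obstacle is the Hessian step: keeping cross-terms under control so the constant is $\norm{\bfv}_1^2$ rather than something dimension-dependent. The decomposition $\mathrm{diag}(\pi)-\pi\pi^\top\preceq\mathrm{diag}(\pi)$, combined with the $\ell_\infty$ gradient bound and Hölder's inequality, is what handles it. A related subtlety in the notation: if the $\tau_s$ differ across phases, "$H\tau_s$" should be read as $\max_P\sum_{v\in P}\tau_{s(v)}\log n_{s(v)}$, which the stated bound dominates.
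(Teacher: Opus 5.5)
Your proposal is correct and follows essentially the same route as the paper. The paper packages your two inline Hessian computations into \cref{lem:lse_comp}: the softmax covariance dominated by the second moment (your $\mathrm{diag}(\pi)-\pi\pi^\top\preceq\mathrm{diag}(\pi)$), combined with H\"older and the $\ell_\infty$ gradient bound. It applies that lemma once at the phase level and once at the path level, then finishes with Taylor's theorem, Pinsker's inequality and the standard $\lse$ sandwich exactly as you do; the only extra care is working on the open set $\{\bfw: b_0+Bw_e>b_0/2\}$ to justify the $C^2$ calculus at boundary points of the orthant, which your argument uses implicitly.
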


The proof is in \cref{app:proofs}.  Note that $G_H\le HG$ and $M_H\le HM$ for
$G=\max_sG_s$, $M=\max_sM_s$; the bound is conservative and in practice we use
backtracking (\cref{sec:algorithm}).

\begin{proposition}[Soft criticality and forward--backward gradient]
\label{prop:criticality}
Let $\sigma_\tau(P)=\exp((\ell_P-U)/\tau)$, a Gibbs distribution on $\calP$, and
$\gamma_v=\sigma_\tau(\{P\ni v\})$ the \emph{soft criticality} of node $v$.  Then
$\partial U/\partial\mathrm{dur}_v=\gamma_v$ and
\begin{equation}
  \nabla U(\bfw)=\sum_{v\ \mathrm{comm.}}\gamma_v\,\nabla T^{\tau_s}_{s(v)}(\bfw),
  \qquad
  \frac{\partial T^{\tau_s}_s}{\partial w_e}
  =-\,\pi_{s,e}\,\frac{d_{s,e}B}{\kappa_e(\bfw)^2},
  \label{eq:gradU}
\end{equation}
where $\pi_{s,\cdot}$ is the softmax over $\Eocs_s$ of $d_{s,e}/(\tau_s\kappa_e)$.  The
quantities $a_v=\mathrm{dur}_v+\lse_\tau(a_u)_{u\in\mathrm{pred}(v)}$ (forward, $a_v=\mathrm{dur}_v$
at sources) satisfy $U=\lse_\tau(a_t)_{t\ \mathrm{sink}}$, and $\gamma$ obeys the backward
recursion $\gamma_t=e^{(a_t-U)/\tau}$ for sinks plus
$\gamma_u=\sum_{v\in\mathrm{succ}(u)}\gamma_v\,e^{a_u/\tau}/\sum_{u'\in\mathrm{pred}(v)}e^{a_{u'}/\tau}$.
If the critical path is unique, $\gamma_v\to1$ on it and $\gamma_v\to0$ elsewhere as $\tau\to0$.
\end{proposition}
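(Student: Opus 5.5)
The proof is a chain-rule computation on the log-sum-exp structure, followed by dynamic programming to evaluate the path sums efficiently. I would proceed in four steps.

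\emph{Step 1: criticality as a partial derivative.} Since $U=\tau\log\sum_{P}e^{\ell_P/\tau}$ and $\ell_P=\sum_{v\in P}\mathrm{dur}_v$, differentiation gives
\[
\partial U/\partial\ell_P=e^{(\ell_P-U)/\tau}=\sigma_\tau(P),
\qquad
\partial\ell_P/\partial\mathrm{dur}_v=\mathbf 1[v\in P].
\]
Hence $\partial U/\partial\mathrm{dur}_v=\sum_{P\ni v}\sigma_\tau(P)=\gamma_v$.

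\emph{Step 2: the gradient in $\bfw$.} Only communication durations depend on $\bfw$, through $T^{\tau_s}_{s(v)}$. The chain rule therefore gives the first formula in \eqref{eq:gradU}. For the inner derivative, note that $\partial\lse_{\tau_s}/\partial z_e$ is the softmax weight $\pi_{s,e}$ of $z_e=d_{s,e}/\kappa_e$. Also $\partial z_e/\partial w_e=-d_{s,e}B/\kappa_e^2$ by \eqref{eq:capacity}, and $z_e$ does not depend on $w_f$ for $f\ne e$. This yields the second formula.

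\emph{Step 3: forward and backward recursions.} For the forward pass, I prove by induction in topological order that $e^{a_v/\tau}=\sum_{P\in\calP_{\to v}}e^{\ell_P/\tau}$, where $\calP_{\to v}$ is the set of source-to-$v$ paths. The base case is sources, where $a_v=\mathrm{dur}_v$. The inductive step uses
\[
e^{a_v/\tau}=e^{\mathrm{dur}_v/\tau}\sum_{u\in\mathrm{pred}(v)}e^{a_u/\tau}
\]
together with the bijection between paths ending at $v$ and pairs consisting of a predecessor $u$ and a path ending at $u$. Summing over sinks, and using that every path in $\calP$ ends at a sink, gives $U=\lse_\tau(a_t)$.

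For the backward pass, I define $b_v$ through $e^{b_v/\tau}=\sum_{\text{paths }v\to\text{sink}}e^{\ell'/\tau}$, where $\ell'$ sums durations excluding $v$. Splitting each path through $v$ at $v$ gives $\gamma_v=e^{(a_v+b_v-U)/\tau}$. At sinks $b_t=0$, which recovers the stated formula for $\gamma_t$. For a non-sink $u$, decompose by the successor $v$: $e^{b_u/\tau}=\sum_{v\in\mathrm{succ}(u)}e^{(\mathrm{dur}_v+b_v)/\tau}$. Substituting $e^{(\mathrm{dur}_v+b_v)/\tau}=\gamma_v e^{(U-a_v)/\tau}e^{\mathrm{dur}_v/\tau}$ and $e^{(a_v-\mathrm{dur}_v)/\tau}=\sum_{u'\in\mathrm{pred}(v)}e^{a_{u'}/\tau}$ gives exactly the stated recursion.

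This backward step is the main obstacle. The difficulty is bookkeeping: $\mathrm{dur}_v$ must be counted once, and the ratio $e^{a_u/\tau}/\sum_{u'}e^{a_{u'}/\tau}$ has to be read as the conditional probability, given that $v$ lies on the path, that $u$ is the predecessor of $v$ on it. I would present it in that probabilistic form. The cost is linear because each arc is visited once per pass and each phase's softmax costs $O(n_s)$.

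\emph{Step 4: the limit $\tau\to0$.} There is a subtlety: $\ell_P$ depends on $\tau$ only through $\tau_s$, which is held fixed, so the path lengths are constants as $\tau$ varies. If $P^*$ is the unique maximiser with gap $\delta>0$, then
\[
\sigma_\tau(P)\le e^{-\delta/\tau}\big/\sigma_\tau(P^*)\text{-normalised weights},
\]
so $\sigma_\tau(P)\to0$ for every $P\neq P^*$ and $\sigma_\tau(P^*)\to1$. Consequently $\gamma_v=\sigma_\tau(\{P\ni v\})$ tends to $\mathbf 1[v\in P^*]$.
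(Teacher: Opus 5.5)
Your proof is correct and follows the paper's argument. Like the paper, you use the log-sum-exp chain rule for $\gamma_v$ and for the gradient, and induction over a topological order for the forward pass. For the backward recursion you verify the formula directly through suffix sums $b_v$ with $\gamma_v=e^{(a_v+b_v-U)/\tau}$, where the paper simply calls it reverse-mode differentiation of the forward recursion. You also supply the $\tau\to0$ limit, which the paper states without proof.

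The only blemish is the displayed inequality in Step~4, which is not well formed. It should read $\sigma_\tau(P)\le e^{-(\ell_{P^*}-\ell_P)/\tau}\le e^{-\delta/\tau}$, obtained by dropping all but the $P^*$ term from the normaliser. Then $\sigma_\tau(P^*)\ge1-(|\calP|-1)e^{-\delta/\tau}\to1$ because $\calP$ is finite.
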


\begin{proof}
$U=\tau\log\sum_Pe^{\ell_P/\tau}$ gives $\partial U/\partial\ell_P=\sigma_\tau(P)$ and
$\partial\ell_P/\partial\mathrm{dur}_v=\mathbf 1[v\in P]$, hence $\partial U/\partial
\mathrm{dur}_v=\gamma_v$; \eqref{eq:gradU} follows from the chain rule and the derivative of
$\lse$.  By induction over a topological order,
$e^{a_v/\tau}=\sum_{\text{paths }Q\text{ from a source to }v}e^{\ell_Q/\tau}$, because
$e^{a_v/\tau}=e^{\mathrm{dur}_v/\tau}\sum_{u\in\mathrm{pred}(v)}e^{a_u/\tau}$; summing over sinks
gives $U$.  The backward recursion is reverse-mode differentiation of the forward
recursion, and $\gamma_u$ equals the $\sigma_\tau$-probability that the path passes $u$
because each factor $e^{a_u/\tau}/\sum_{u'}e^{a_{u'}/\tau}$ is the conditional probability that
the path enters $v$ from $u$.
\end{proof}

\begin{remark}[The DAG as an exponential family]
Phase durations are consequences of the wiring rather than free decision variables, so
precedence constraints are not imposed on a separate schedule; they enter the problem
exactly through the critical-path functional $J$.  The geometric object that the DAG induces
is the exponential family $\{\sigma_\tau\}$ on $\calP$, with sufficient statistic the path
length and natural parameter $1/\tau$; the criticalities $\gamma_v$ are its marginals, and the
forward--backward recursion is the familiar sum--product computation of these marginals.
\end{remark}

\section{Free Energy and Temperature}
\label{sec:freeenergy}

\begin{definition}[OCS free energy]
\label{def:free_energy}
For an inverse temperature $\beta_1\in(0,\infty)$ and $\lambda=1/\beta_1$, let
\begin{equation}
  \boxed{\;F_{\beta_1}(\bfw)=U(\bfw)+\frac{1}{\beta_1}\KL{\bfw}{\bfq},\qquad
  \bfw^\star_{\beta_1}=\argmin_{\bfw\in\calW}F_{\beta_1}(\bfw).\;}
  \label{eq:free_energy}
\end{equation}
\end{definition}

As in statistical mechanics, $\beta_1^{-1}$ multiplies the entropic (KL) term, so
$\beta_1\to0$ (high temperature) pins the wiring to the prior and $\beta_1\to\infty$ (low
temperature) yields pure cost minimisation (\cref{tab:temperatures}).

\begin{table}[h]
\centering
\caption{Role of the inverse temperature $\beta_1$ (cf.\ \cref{thm:limits}).}
\label{tab:temperatures}
\renewcommand{\arraystretch}{1.25}
\begin{tabular}{lll}
\toprule
& $\beta_1\to0$ ($\lambda\to\infty$) & $\beta_1\to\infty$ ($\lambda\to0$)\\
\midrule
Optimal wiring & $\qW$ (static symmetric prior) & min-KL minimiser of $U$ on $\calW$\\
Cost $U(\bfw^\star_{\beta_1})$ & $U(\qW)$ & $\min_\calW U$\\
Linear rate of \cref{thm:convergence} & fast, $(1+\eta\lambda)^{-1}$ small & slow, tends to sublinear\\
\bottomrule
\end{tabular}
\end{table}

\begin{theorem}[Well-posedness]
\label{thm:wellposed}
Under \cref{as:feasible,as:model}, for every $\beta_1>0$ the minimiser
$\bfw^\star=\bfw^\star_{\beta_1}$ exists, is unique and has all coordinates positive, and for
all $\bfx\in\calW$
\begin{equation}
  F_{\beta_1}(\bfx)\;\ge\;F_{\beta_1}(\bfw^\star)+\lambda\,\KL{\bfx}{\bfw^\star}.
  \label{eq:rel_strong}
\end{equation}
\end{theorem}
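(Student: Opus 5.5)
The plan is to treat $F_{\beta_1}$ as a sum of a convex differentiable function and a strictly convex entropic term, and to mirror the proof of \cref{thm:pythagoras}, replacing the linear first-order term by $\nabla U$.

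\emph{Existence and uniqueness.} By \cref{lem:smooth}, $U$ is convex and continuously differentiable on $\R^{\Eocs}_{\ge0}$. Since $b_0>0$, the capacities $\kappa_e$ stay bounded away from zero on a neighbourhood of $\Simp$, so $U$ is in fact smooth there. Write
\[
\KL{\bfw}{\bfq}=h(\bfw)-\inner{\log\bfq}{\bfw}.
\]
This is continuous (with $0\log0=0$) and strictly convex on $\Simp$. Hence $F_{\beta_1}$ is continuous and strictly convex on the compact convex set $\calW$, so a minimiser exists and is unique.

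\emph{Positivity.} Apply \cref{lem:positivity} with $\calC=\calW$, $c=\lambda$ and $\psi(\bfw)=U(\bfw)-\lambda\inner{\log\bfq}{\bfw}$. This $\psi$ is convex and continuously differentiable near $\Simp$, since $\bfq>0$. \Cref{as:feasible} supplies the positive point of $\calW$ that the lemma requires. Therefore every coordinate of $\bfw^\star$ is positive.

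\emph{The inequality \eqref{eq:rel_strong}.} Because $\bfw^\star>0$, $F_{\beta_1}$ is differentiable at $\bfw^\star$, and the first-order optimality condition over the convex set $\calW$ reads
\[
\inner{\nabla U(\bfw^\star)+\lambda\bigl(\log(\bfw^\star/\bfq)+\one\bigr)}{\bfx-\bfw^\star}\ge0\quad\text{for all }\bfx\in\calW.
\]
The $\one$ term vanishes because $\inner{\one}{\bfx-\bfw^\star}=0$ on $\Simp$. Next, the three-point identity \eqref{eq:three_point} with $\bfw=\bfw^\star$ gives
\[
\lambda\KL{\bfx}{\bfq}=\lambda\KL{\bfw^\star}{\bfq}+\lambda\inner{\log(\bfw^\star/\bfq)}{\bfx-\bfw^\star}+\lambda\KL{\bfx}{\bfw^\star}.
\]
Convexity of $U$ gives
\[
U(\bfx)\ge U(\bfw^\star)+\inner{\nabla U(\bfw^\star)}{\bfx-\bfw^\star}.
\]
Adding the two relations, the two linear terms combine into exactly the left-hand side of the optimality condition, which is nonnegative. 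This yields $F_{\beta_1}(\bfx)\ge F_{\beta_1}(\bfw^\star)+\lambda\KL{\bfx}{\bfw^\star}$. The identity is valid for all $\bfx\in\Simp$ with $\bfx$ possibly having zero coordinates, since only $\bfw^\star$ and $\bfq$ need to be positive.

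The main, if mild, obstacle is justifying the first-order condition in gradient form. This requires differentiability of the entropy at $\bfw^\star$, which is why positivity must be established first. The gradient form is used only at a relative-interior-in-coordinates point, so one-sided directional derivatives along $\bfw^\star+t(\bfx-\bfw^\star)$ are finite and the standard variational inequality follows from minimality.
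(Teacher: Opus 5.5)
Your proposal is correct and follows essentially the same route as the paper. Existence comes from compactness, positivity from \cref{lem:positivity} with the same $\psi$, and \eqref{eq:rel_strong} from adding the convexity inequality for $U$ to the three-point identity and invoking the first-order condition at the positive minimiser. The only difference is cosmetic: you obtain uniqueness from strict convexity of the entropy term, whereas the paper reads it off from \eqref{eq:rel_strong} itself, since $\KL{\bfx}{\bfw^\star}>0$ for $\bfx\ne\bfw^\star$.
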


\begin{proof}
$F_{\beta_1}$ is continuous on the compact set $\calW$, so a minimiser exists; it is positive
by \cref{lem:positivity} with $c=\lambda$ and $\psi=U-\lambda\inner{\log\bfq}{\cdot}$.  By
convexity, $U(\bfx)\ge U(\bfw^\star)+\inner{\nabla U(\bfw^\star)}{\bfx-\bfw^\star}$, and by
\eqref{eq:three_point}, $\KL{\bfx}{\bfq}=\KL{\bfw^\star}{\bfq}+\inner{\log(\bfw^\star/\bfq)}
{\bfx-\bfw^\star}+\KL{\bfx}{\bfw^\star}$.  Adding,
$F_{\beta_1}(\bfx)\ge F_{\beta_1}(\bfw^\star)+\inner{\nabla F_{\beta_1}(\bfw^\star)}{\bfx-\bfw^\star}
+\lambda\KL{\bfx}{\bfw^\star}$ (the constant in $\nabla h=\log\bfw+\one$ is orthogonal to
$\bfx-\bfw^\star$).  The first-order optimality condition at the positive point
$\bfw^\star$ makes the inner product nonnegative, giving \eqref{eq:rel_strong}; uniqueness
follows since $\KL{\bfx}{\bfw^\star}>0$ for $\bfx\ne\bfw^\star$.
\end{proof}

\begin{theorem}[Temperature: monotonicity and limits]
\label{thm:limits}
Let $U_{\min}=\min_\calW U$ and let $\bfw^\circ$ be the unique minimiser of
$\KL{\cdot}{\bfq}$ over the convex set $\argmin_\calW U$.
\begin{enumerate}[label=(\roman*)]
\item $\beta_1\mapsto U(\bfw^\star_{\beta_1})$ is nonincreasing and
  $\beta_1\mapsto\KL{\bfw^\star_{\beta_1}}{\bfq}$ is nondecreasing.
\item (High temperature) $\KL{\bfw^\star_{\beta_1}}{\qW}\le\beta_1\bigl(U(\qW)-U_{\min}\bigr)$;
  hence $\bfw^\star_{\beta_1}\to\qW$ as $\beta_1\to0$.
\item (Low temperature) $0\le U(\bfw^\star_{\beta_1})-U_{\min}\le\beta_1^{-1}\,
  \KL{\bfw^\circ}{\bfq}$ and $\KL{\bfw^\star_{\beta_1}}{\bfq}\le\KL{\bfw^\circ}{\bfq}$; hence
  $\bfw^\star_{\beta_1}\to\bfw^\circ$ as $\beta_1\to\infty$.
\end{enumerate}
\end{theorem}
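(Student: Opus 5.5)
For (i) I will use the standard exchange argument for penalised minimisers. Take $\beta<\beta'$, write $\bfw=\bfw^\star_\beta$, $\bfw'=\bfw^\star_{\beta'}$, $D=\KL{\cdot}{\bfq}$, and compare each minimiser against the other:
\begin{equation*}
U(\bfw)+\beta^{-1}D(\bfw)\le U(\bfw')+\beta^{-1}D(\bfw'),\qquad
U(\bfw')+\beta'^{-1}D(\bfw')\le U(\bfw)+\beta'^{-1}D(\bfw).
\end{equation*}
Adding the two inequalities gives $(\beta^{-1}-\beta'^{-1})(D(\bfw)-D(\bfw'))\le0$. Since $\beta^{-1}>\beta'^{-1}$, this yields $D(\bfw)\le D(\bfw')$. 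Substituting back into the second inequality then gives $U(\bfw')\le U(\bfw)$. All quantities are finite because the minimisers are positive (\cref{thm:wellposed}).

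For (ii) I will apply \cref{thm:wellposed} twice: once to $F_{\beta_1}$, and once to the pure-KL problem whose minimiser is $\qW$.
\begin{itemize}
\item By \cref{thm:pythagoras} with $\bfx=\bfw^\star$, we have $D(\bfw^\star)\ge D(\qW)+\KL{\bfw^\star}{\qW}$.
\item Optimality of $\bfw^\star$ against the competitor $\qW$ gives $U(\bfw^\star)+\lambda D(\bfw^\star)\le U(\qW)+\lambda D(\qW)$.
\end{itemize}
Combining the two, $\lambda\KL{\bfw^\star}{\qW}\le U(\qW)-U(\bfw^\star)\le U(\qW)-U_{\min}$, which is exactly the bound with $\lambda=\beta_1^{-1}$. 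Pinsker's inequality then gives $\ell_1$ convergence to $\qW$ as $\beta_1\to0$. The difference $U(\qW)-U_{\min}$ is finite because $U$ is continuous on the compact set $\calW$.

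For (iii) the first step is to establish that $\bfw^\circ$ is well defined. The set $\argmin_\calW U$ is nonempty, compact and convex, since $U$ is convex and continuous. The map $D$ is strictly convex and finite everywhere on $\Simp$, because $\bfq>0$. Hence $\bfw^\circ$ exists and is unique. Comparing $\bfw^\star$ with $\bfw^\circ$ gives
\begin{equation*}
U(\bfw^\star)+\lambda D(\bfw^\star)\le U_{\min}+\lambda D(\bfw^\circ).
\end{equation*}
Since $U(\bfw^\star)\ge U_{\min}$, this yields $D(\bfw^\star)\le D(\bfw^\circ)$. Since $D\ge0$, it also yields $U(\bfw^\star)-U_{\min}\le\lambda D(\bfw^\circ)$.

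The main obstacle is the convergence $\bfw^\star_{\beta_1}\to\bfw^\circ$, which needs a compactness argument. Take any limit point $\bar\bfw$ of $\bfw^\star_{\beta_1}$ as $\beta_1\to\infty$; one exists because $\calW$ is compact. By continuity of $U$, $U(\bar\bfw)=U_{\min}$, so $\bar\bfw\in\argmin_\calW U$. By lower semicontinuity of $D$ (in fact continuity, as $\bfq>0$), $D(\bar\bfw)\le D(\bfw^\circ)$. Uniqueness of the min-KL point then forces $\bar\bfw=\bfw^\circ$. Since every limit point equals $\bfw^\circ$, the whole family converges to $\bfw^\circ$.
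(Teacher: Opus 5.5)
Your proposal is correct and follows the paper's proof essentially step for step. It uses the exchange argument for (i), comparison with $\qW$ combined with the Pythagorean inequality for (ii), and comparison with $\bfw^\circ$ followed by a limit-point and uniqueness argument for (iii); your added justification that $\bfw^\circ$ exists and is unique is a harmless extra, and the only slip is that (ii) announces an application of \cref{thm:wellposed} when what you actually, and correctly, use is \cref{thm:pythagoras} together with the optimality of $\bfw^\star_{\beta_1}$ against the competitor $\qW$.
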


\begin{proof}
Write $K(\bfw)=\KL{\bfw}{\bfq}$ and, for $\lambda>\lambda'$, $\bfw=\bfw^\star_{1/\lambda}$,
$\bfw'=\bfw^\star_{1/\lambda'}$.

(i) Optimality gives $U(\bfw)+\lambda K(\bfw)\le U(\bfw')+\lambda K(\bfw')$ and
$U(\bfw')+\lambda'K(\bfw')\le U(\bfw)+\lambda'K(\bfw)$.  Adding them,
$(\lambda-\lambda')(K(\bfw)-K(\bfw'))\le0$, so $K(\bfw)\le K(\bfw')$; the second inequality
then gives $U(\bfw')-U(\bfw)\le\lambda'(K(\bfw)-K(\bfw'))\le0$.

(ii) Comparing with $\qW\in\calW$: $U(\bfw)+\lambda K(\bfw)\le U(\qW)+\lambda K(\qW)$.  By
\cref{thm:pythagoras}, $K(\bfw)\ge\KL{\bfw}{\qW}+K(\qW)$.  Hence
$\lambda\KL{\bfw}{\qW}\le U(\qW)-U(\bfw)\le U(\qW)-U_{\min}$.  Convergence follows from
Pinsker's inequality.

(iii) Comparing with any $\bfw^m\in\argmin_\calW U$:
$U(\bfw)+\lambda K(\bfw)\le U_{\min}+\lambda K(\bfw^m)\le U(\bfw)+\lambda K(\bfw^m)$, which
gives both $U(\bfw)-U_{\min}\le\lambda K(\bfw^m)$ and $K(\bfw)\le K(\bfw^m)$; take
$\bfw^m=\bfw^\circ$.  Let $\bar\bfw$ be a limit point as $\lambda\to0$.  By continuity,
$U(\bar\bfw)=U_{\min}$ and $K(\bar\bfw)\le K(\bfw^\circ)$; since $\bfw^\circ$ is the unique
KL-minimal element of the convex set $\argmin_\calW U$ ($K$ is strictly convex),
$\bar\bfw=\bfw^\circ$, and compactness gives convergence of the whole family.
\end{proof}

\begin{theorem}[Gain of adaptation]
\label{thm:jeffreys}
For every $\beta_1>0$, with $\bfw^\star=\bfw^\star_{\beta_1}$,
\begin{equation}
  U(\qW)-U(\bfw^\star)\;\ge\;\frac{1}{\beta_1}\Bigl(\KL{\bfw^\star}{\qW}+\KL{\qW}{\bfw^\star}\Bigr).
  \label{eq:jeffreys}
\end{equation}
That is, the reduction of the iteration cost obtained by departing from the static prior is
at least $\beta_1^{-1}$ times the Jeffreys (symmetrised KL) divergence between the static and
the adaptive wiring.
\end{theorem}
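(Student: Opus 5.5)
The plan is to add two inequalities that are already available: the relative strong convexity inequality \eqref{eq:rel_strong} of \cref{thm:wellposed}, and the generalised Pythagorean inequality \eqref{eq:pythagoras} of \cref{thm:pythagoras}. The first supplies the term $\KL{\qW}{\bfw^\star}$ and the second supplies $\KL{\bfw^\star}{\qW}$. Throughout I write $\lambda=\beta_1^{-1}$ and $K(\bfw)=\KL{\bfw}{\bfq}$.

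\textbf{Step 1.} Apply \eqref{eq:rel_strong} with the admissible point $\bfx=\qW$, which lies in $\calW$ by \eqref{eq:iproj}. This gives
\[
U(\qW)+\lambda K(\qW)\;\ge\;U(\bfw^\star)+\lambda K(\bfw^\star)+\lambda\,\KL{\qW}{\bfw^\star}.
\]
Every quantity here is finite. Indeed, $\bfw^\star$ is positive by \cref{thm:wellposed}, and $\qW$ and $\bfq$ are positive by \cref{thm:pythagoras}. Rearranging,
\[
U(\qW)-U(\bfw^\star)\;\ge\;\lambda\bigl(K(\bfw^\star)-K(\qW)\bigr)+\lambda\,\KL{\qW}{\bfw^\star}.
\]

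\textbf{Step 2.} Apply \cref{thm:pythagoras} with $\bfx=\bfw^\star\in\calW$. Its hypotheses are met because $\calW$ is closed and $m$-convex, contains a positive point by \cref{as:feasible}, and $\bfq\in\Simp^\circ$. It yields
\[
K(\bfw^\star)-K(\qW)\;\ge\;\KL{\bfw^\star}{\qW}.
\]
Substituting this into the display of Step 1 gives exactly \eqref{eq:jeffreys}.

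There is no real obstacle; the only points needing care are bookkeeping. First, the prior $\bfq$ must be replaced by its projection $\qW$ in both divergences. This is precisely what the Pythagorean inequality accomplishes, and it is why the bound is stated relative to $\qW$ rather than $\bfq$. Second, positivity must hold so that all the KL terms are finite.
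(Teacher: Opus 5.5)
Your proposal is correct and is essentially the paper's own proof: apply \eqref{eq:rel_strong} at $\bfx=\qW$, then use \cref{thm:pythagoras} at $\bfx=\bfw^\star$ to bound $K(\bfw^\star)-K(\qW)$ from below by $\KL{\bfw^\star}{\qW}$. Your finiteness and positivity checks are a harmless addition; the paper leaves them implicit in \cref{thm:wellposed,thm:pythagoras}.
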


\begin{proof}
Apply \eqref{eq:rel_strong} with $\bfx=\qW$:
$U(\qW)+\lambda K(\qW)\ge U(\bfw^\star)+\lambda K(\bfw^\star)+\lambda\KL{\qW}{\bfw^\star}$.
Rearranging and using $K(\bfw^\star)-K(\qW)\ge\KL{\bfw^\star}{\qW}$ (\cref{thm:pythagoras})
gives \eqref{eq:jeffreys}.
\end{proof}

\begin{proposition}[Symmetry inheritance]
\label{prop:symmetry}
Let $\mathfrak G$ be a group of permutations of $\Eocs$ acting by
$(g\bfw)_e=w_{g^{-1}e}$.  If $U(g\bfw)=U(\bfw)$, $g\calW=\calW$ and $g\bfq=\bfq$ for all
$g\in\mathfrak G$, then $g\bfw^\star_{\beta_1}=\bfw^\star_{\beta_1}$ for all $g$ and
$\beta_1$; i.e.\ the optimal wiring is constant on the $\mathfrak G$-orbits of $\Eocs$.
\end{proposition}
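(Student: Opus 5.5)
The plan is to combine the uniqueness in \cref{thm:wellposed} with invariance of $F_{\beta_1}$ under $\mathfrak G$. Fix $\beta_1>0$ and $g\in\mathfrak G$, and write $\bfw^\star=\bfw^\star_{\beta_1}$. I would show, in order, that $g\bfw^\star\in\calW$, that $F_{\beta_1}(g\bfw^\star)=F_{\beta_1}(\bfw^\star)$, and then conclude $g\bfw^\star=\bfw^\star$ from uniqueness of the minimiser.

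\emph{Feasibility.} This follows directly from $g\calW=\calW$.

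\emph{Invariance of the KL term.} Since $g$ only permutes coordinates,
\[
\KL{g\bfw}{g\bfq}=\sum_e w_{g^{-1}e}\log\frac{w_{g^{-1}e}}{q_{g^{-1}e}}=\KL{\bfw}{\bfq}.
\]
Here I reindex the sum by $e\mapsto ge$, which is a bijection of $\Eocs$. Using $g\bfq=\bfq$, this becomes $\KL{g\bfw}{\bfq}=\KL{\bfw}{\bfq}$.

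\emph{Invariance of $F_{\beta_1}$.} Combining the previous step with the hypothesis $U(g\bfw)=U(\bfw)$ gives $F_{\beta_1}(g\bfw)=F_{\beta_1}(\bfw)$ for every $\bfw\in\calW$. So $g\bfw^\star$ is a feasible point attaining the minimum value. Uniqueness of the minimiser in \cref{thm:wellposed} then forces $g\bfw^\star=\bfw^\star$.

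\emph{Constancy on orbits.} From $g\bfw^\star=\bfw^\star$ we get $w^\star_{g^{-1}e}=w^\star_e$ for all $e$ and all $g\in\mathfrak G$. This says exactly that $\bfw^\star$ is constant on each $\mathfrak G$-orbit of $\Eocs$.

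An alternative route avoids relying on uniqueness alone. Average over the group, $\bar\bfw=|\mathfrak G|^{-1}\sum_g g\bfw^\star$. This point lies in $\calW$ by convexity, and by convexity of $F_{\beta_1}$ it satisfies $F_{\beta_1}(\bar\bfw)\le F_{\beta_1}(\bfw^\star)$. Strict convexity, or uniqueness, then gives $\bar\bfw=\bfw^\star$. The direct argument above already suffices, though.

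There is no real obstacle here. The only point needing care is the reindexing in the KL term, which must use that $g$ acts as a bijection on $\Eocs$. The result also relies on \cref{as:feasible,as:model} holding, so that \cref{thm:wellposed} applies.
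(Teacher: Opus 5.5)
Your proof is correct and is essentially the paper's argument. You show that $F_{\beta_1}$ is $\mathfrak G$-invariant (using $U$-invariance, $g\bfq=\bfq$ and the permutation invariance of the KL divergence), that $g\bfw^\star\in\calW$, and that uniqueness in \cref{thm:wellposed} then forces $g\bfw^\star=\bfw^\star$; the group-averaging aside is unnecessary and, as you use it, still relies on uniqueness.
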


\begin{proof}
$\KL{g\bfw}{g\bfq}=\KL{\bfw}{\bfq}$, so $F_{\beta_1}(g\bfw^\star)=F_{\beta_1}(\bfw^\star)$ with
$g\bfw^\star\in\calW$; uniqueness (\cref{thm:wellposed}) gives $g\bfw^\star=\bfw^\star$.
\end{proof}

\begin{corollary}[Uniform traffic]
If $\Eocs$ is complete, $\bfq$ is uniform, and the workload is invariant under all
permutations of pods (e.g.\ uniform all-to-all phases), then $\bfw^\star_{\beta_1}=\bfq$ for every
$\beta_1$: the static symmetric wiring is optimal.
\end{corollary}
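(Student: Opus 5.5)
The plan is to apply \cref{prop:symmetry} with $\mathfrak G$ the group of permutations of $\Eocs$ induced by the symmetric group on the pods $V$. A permutation $\pi$ of $V$ acts on unordered pairs by $\{i,j\}\mapsto\{\pi i,\pi j\}$. Since $\Eocs$ is complete, this maps $\Eocs$ bijectively onto itself. The action is transitive on $\Eocs$: any pair $\{i,j\}$ can be sent to any other pair $\{k,l\}$ by a permutation of $V$. Therefore, once the three hypotheses of \cref{prop:symmetry} are verified, $\bfw^\star_{\beta_1}$ is constant on the single orbit $\Eocs$. Being in $\Simp$, it must then equal the uniform distribution, which is $\bfq$.

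The hypotheses are checked one at a time. First, $g\bfq=\bfq$ holds because $\bfq$ is uniform. Second, for $g\calW=\calW$: the simplex constraint is permutation invariant. The degree constraint is also invariant, because $(A\,g\bfw)_v=\sum_{e\ni v}w_{g^{-1}e}=\sum_{e'\ni\pi^{-1}v}w_{e'}=(A\bfw)_{\pi^{-1}v}$. The vector $A g\bfw$ is thus a permutation of $A\bfw$, so the bound $\le(r/\mu)\one$ is preserved. Third, for $U(g\bfw)=U(\bfw)$, I will read ``the workload is invariant under all permutations of pods'' as meaning that the relabelled workload, with volumes $d_{s,\pi e}$ on the same DAG, defines the same functional $U$. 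Equivalently, for each $\pi$ there is a DAG automorphism, possibly permuting phases, that carries $d_{s,e}$ to $d_{s',\pi e}$ and preserves compute durations.

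With this reading the verification goes as follows. Relabelling pairs is a bijection of $\Eocs_s$, so it merely reorders the arguments of each $\lse_{\tau_s}$, and since $\kappa_e(g\bfw)=\kappa_{g^{-1}e}(\bfw)$ one gets $T^{\tau_s}_s(g\bfw)=T^{\tau_{s'}}_{s'}(\bfw)$. The DAG automorphism then permutes $\calP$ while preserving the path lengths $\ell_P$, so the outer $\lse_\tau$ is unchanged. In the main example, uniform all-to-all phases with $d_{s,e}=d_s$ for all $e$, each $T^{\tau_s}_s$ is a symmetric function of $(\kappa_e)_e$. Each phase duration is therefore invariant under $g$ individually, and $U$ is invariant without any DAG automorphism.

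The main obstacle is this third step, specifically fixing a precise meaning of ``workload invariant'' that implies $U\circ g=U$. The remaining steps are routine. Finally, \cref{as:feasible} holds by the remark after \eqref{eq:calW} since $\mu\le Nr/2$, so \cref{thm:wellposed}, and hence \cref{prop:symmetry}, applies for every $\beta_1>0$, giving $\bfw^\star_{\beta_1}=\bfq$.
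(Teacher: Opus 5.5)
Your proposal is correct and follows the paper's proof: apply \cref{prop:symmetry} with the group induced by the symmetric group on pods, use transitivity on the pairs of the complete $\Eocs$ to conclude that $\bfw^\star_{\beta_1}$ is constant on $\Eocs$, and hence equal to the uniform $\bfq$. The paper leaves the three invariance hypotheses implicit, so your explicit checks are a useful supplement rather than a different route: $g\bfq=\bfq$, $g\calW=\calW$ via $(Ag\bfw)_v=(A\bfw)_{\pi^{-1}v}$, and a precise reading of ``invariant workload'' that yields $U\circ g=U$.
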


\begin{proof}
The symmetric group on pods acts transitively on pairs, so by \cref{prop:symmetry}
$\bfw^\star$ is constant on $\Eocs$, i.e.\ uniform.
\end{proof}

\begin{proposition}[When is the static prior optimal?]
\label{prop:staticopt}
If $\qW$ minimises $U$ over $\calW$, then $\bfw^\star_{\beta_1}=\qW$ for all $\beta_1>0$.  If
moreover $\bfq\in\calW$ (so $\qW=\bfq$), the converse holds: $\bfw^\star_{\beta_1}=\bfq$ for some
$\beta_1>0$ if and only if $\bfq\in\argmin_\calW U$.
\end{proposition}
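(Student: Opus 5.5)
The plan is to compare $F_{\beta_1}$ at an arbitrary point with its value at $\qW$, using the generalised Pythagorean inequality (\cref{thm:pythagoras}) for the first claim and first-order optimality conditions at the positive point $\bfq$ for the converse. Write $K(\bfw)=\KL{\bfw}{\bfq}$ and $\lambda=1/\beta_1$.

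\textbf{First claim.} Take any $\bfx\in\calW$.
\begin{itemize}
\item Since $\qW$ minimises $U$ on $\calW$, we have $U(\bfx)\ge U(\qW)$.
\item Since $\qW$ minimises $K$ on $\calW$ by definition \eqref{eq:iproj}, we have $K(\bfx)\ge K(\qW)$.
\end{itemize}
Adding $\lambda$ times the second inequality to the first gives $F_{\beta_1}(\bfx)\ge F_{\beta_1}(\qW)$. Hence $\qW$ is a minimiser, and by the uniqueness in \cref{thm:wellposed} it equals $\bfw^\star_{\beta_1}$. This holds for every $\beta_1>0$. If a quantitative version is wanted, \eqref{eq:pythagoras} gives the strengthening $F_{\beta_1}(\bfx)\ge F_{\beta_1}(\qW)+\lambda\KL{\bfx}{\qW}$.

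\textbf{Converse.} Assume $\bfq\in\calW$, so $\qW=\bfq$.
\begin{itemize}
\item The ``if'' direction is the first claim.
\item For ``only if'', suppose $\bfw^\star_{\beta_1}=\bfq$ for some $\beta_1>0$. Because $\bfq$ is positive, the first-order condition for minimising the convex, differentiable function $F_{\beta_1}$ over the convex set $\calW$ holds at $\bfq$: $\inner{\nabla U(\bfq)+\lambda(\log(\bfq/\bfq)+\one)}{\bfx-\bfq}\ge0$ for all $\bfx\in\calW$.
\item The KL gradient at $\bfq$ is $\log(\bfq/\bfq)+\one=\one$, and $\one$ is orthogonal to $\bfx-\bfq$ because both points lie in $\Simp$. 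The condition therefore reduces to $\inner{\nabla U(\bfq)}{\bfx-\bfq}\ge0$ for all $\bfx\in\calW$.
\item By convexity of $U$ (\cref{lem:smooth}), $U(\bfx)\ge U(\bfq)+\inner{\nabla U(\bfq)}{\bfx-\bfq}\ge U(\bfq)$, so $\bfq\in\argmin_\calW U$.
\end{itemize}
A shorter route to the same conclusion is the quantitative bound in \cref{thm:limits}(ii) or \cref{thm:jeffreys}, but the first-order argument is the cleanest.

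\textbf{Main obstacle.} The delicate step is justifying the first-order condition at $\bfq$ in the converse. It relies on differentiability of $K$ at $\bfq$, which holds because $\bfq$ has all coordinates positive, so $h$ is smooth in a neighbourhood of $\bfq$. It also relies on $U$ being $C^1$ near $\Simp$, which follows from $b_0>0$ in \eqref{eq:capacity}. One must also check that the vanishing of the prior's gradient contribution uses only $\bfx-\bfq$ summing to zero. Without the hypothesis $\bfq\in\calW$ the converse fails in general: the gradient of $K$ at $\qW$, namely $\log(\qW/\bfq)+\one$, is no longer constant, so it does not drop out of the optimality condition.
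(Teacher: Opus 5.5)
Your proof is correct. The converse follows the paper's argument: the first-order condition for $F_{\beta_1}$ at the positive point $\bfq$, the prior's gradient $\log(\bfq/\bfq)+\one=\one$ vanishing against $\bfx-\bfq$, and convexity of $U$.

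For the first claim your route is different and more elementary. The paper uses the variational inequality there too. It adds the KL-projection condition $\inner{\log(\qW/\bfq)}{\bfx-\qW}\ge0$ to $\inner{\nabla U(\qW)}{\bfx-\qW}\ge0$. That step needs positivity of $\qW$ (from \cref{thm:pythagoras}) so the log term is defined, and convexity of $F_{\beta_1}$ so the first-order condition is sufficient. Your zeroth-order comparison needs neither: $\qW$ minimises both $U$ and $\KL{\cdot}{\bfq}$ on $\calW$, hence their weighted sum. Your Pythagorean strengthening $F_{\beta_1}(\bfx)\ge F_{\beta_1}(\qW)+\lambda\KL{\bfx}{\qW}$ also gives uniqueness without \cref{thm:wellposed}. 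So this half holds for any $U$ minimised at $\qW$, convex or not. The paper's formulation buys uniformity instead: both directions come from one optimality condition. That makes visible why the converse needs $\bfq\in\calW$. Otherwise the log term contributes a nonnegative amount that can absorb a negative value of $\inner{\nabla U(\qW)}{\bfx-\qW}$.

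One aside needs correcting. \cref{thm:limits}(ii) and \cref{thm:jeffreys} do \emph{not} give the ``only if'' direction. When $\bfw^\star_{\beta_1}=\bfq=\qW$, the first reads $0\le\beta_1\bigl(U(\bfq)-U_{\min}\bigr)$ and the second reads $0\ge0$. Neither says anything about $U(\bfq)-U_{\min}$. They give the \emph{first} claim instead. If $U(\qW)=U_{\min}$, the left-hand side of \eqref{eq:jeffreys} is nonpositive, which forces the Jeffreys divergence to vanish and hence $\bfw^\star_{\beta_1}=\qW$.
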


\begin{proof}
The optimality condition of $\bfw\in\calW$ for $F_{\beta_1}$ is
$\inner{\nabla U(\bfw)+\lambda\log(\bfw/\bfq)}{\bfx-\bfw}\ge0$ for all $\bfx\in\calW$.  At
$\bfw=\qW$ the KL-projection condition gives $\inner{\log(\qW/\bfq)}{\bfx-\qW}\ge0$; if also
$\inner{\nabla U(\qW)}{\bfx-\qW}\ge0$, the sum is nonnegative.  If $\bfq\in\calW$ the log term
vanishes at $\bfw=\bfq$, so the condition for $F_{\beta_1}$ coincides with the optimality
condition for $U$.
\end{proof}

\begin{remark}
Taken together, \cref{prop:symmetry,prop:staticopt,thm:jeffreys} delimit the role of the
symmetric reference precisely: the static wiring is optimal exactly when it already minimises
the cost, which symmetric workloads guarantee, and otherwise the loss incurred by retaining it
is bounded below by a divergence between the two wirings.  \Cref{fig:temperature} illustrates
\cref{thm:limits}.
\end{remark}

\begin{remark}[Scope: the continuous relaxation]
\Cref{thm:wellposed,thm:limits,thm:jeffreys} concern the continuous relaxation, in which
$\mu w_e$ need not be an integer.  Deployable wirings consist of integer numbers of circuits,
and the properties established here are not inherited automatically by integer optima; the
relation between the two problems is quantified in \cref{prop:rounding} and examined
empirically in \cref{sec:exp_integer}.
\end{remark}

\section{Entropic Proximal-Gradient Algorithm and Convergence}
\label{sec:algorithm}

\subsection{The Update as a KL Projection}

Given $\bfw^k\in\calW$ with positive coordinates, $\bfg^k=\nabla U(\bfw^k)$ and a step
$\eta_k>0$, the Bregman proximal-gradient step \cite{bauschke2017descent,lu2018relatively}
with the entropy as reference function is
\begin{equation}
  \bfw^{k+1}=\argmin_{\bfw\in\calW}\Bigl\{\inner{\bfg^k}{\bfw}+\lambda\KL{\bfw}{\bfq}
  +\tfrac{1}{\eta_k}\KL{\bfw}{\bfw^k}\Bigr\}.
  \label{eq:prox_step}
\end{equation}

\begin{lemma}[Closed form]
\label{lem:closedform}
Let $a_k=1/(1+\eta_k\lambda)$ and $b_k=\eta_k/(1+\eta_k\lambda)$, and define the geometric
mixture
\begin{equation}
  \hat w^k_e=\frac{(w^k_e)^{a_k}\,q_e^{\,1-a_k}\,e^{-b_kg^k_e}}
  {\sum_{f}(w^k_f)^{a_k}\,q_f^{\,1-a_k}\,e^{-b_kg^k_f}}.
  \label{eq:geomix}
\end{equation}
Then $\bfw^{k+1}=\argmin_{\bfw\in\calW}\KL{\bfw}{\hat\bfw^k}$, the KL projection of $\hat\bfw^k$
onto $\calW$.  In particular, without degree constraints the update is the multiplicative
rule $\bfw^{k+1}=\hat\bfw^k$.
\end{lemma}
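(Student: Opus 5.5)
The plan is to rewrite the objective of \eqref{eq:prox_step} as a single KL divergence to $\hat\bfw^k$ plus a constant that does not depend on $\bfw$. The minimiser over $\calW$ is then the KL projection of $\hat\bfw^k$.

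First, expand both KL terms on $\Simp$, using $\sum_e w_e=1$:
\[
\inner{\bfg^k}{\bfw}+\lambda\KL{\bfw}{\bfq}+\tfrac{1}{\eta_k}\KL{\bfw}{\bfw^k}
=\Bigl(\lambda+\tfrac1{\eta_k}\Bigr)\sum_e w_e\log w_e-\sum_e w_e\Bigl(\lambda\log q_e+\tfrac1{\eta_k}\log w^k_e-g^k_e\Bigr).
\]
Dividing by $\lambda+1/\eta_k=(1+\eta_k\lambda)/\eta_k>0$ does not change the minimiser. The bracketed linear coefficient becomes
\[
a_k\log w^k_e+(1-a_k)\log q_e-b_kg^k_e,
\]
since $\tfrac{1/\eta_k}{\lambda+1/\eta_k}=a_k$, $\tfrac{\lambda}{\lambda+1/\eta_k}=1-a_k$ and $\tfrac{1}{\lambda+1/\eta_k}=b_k$. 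This coefficient equals $\log\hat w^k_e+\log Z_k$, where $Z_k$ is the normaliser in \eqref{eq:geomix}. Hence the scaled objective is
\[
\sum_e w_e\log\frac{w_e}{\hat w^k_e}-\log Z_k\sum_e w_e=\KL{\bfw}{\hat\bfw^k}-\log Z_k,
\]
again using $\sum_e w_e=1$ on $\Simp\supseteq\calW$. Therefore $\bfw^{k+1}=\argmin_{\calW}\KL{\cdot}{\hat\bfw^k}$.

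Next, I would check that this projection is well defined. The vector $\hat\bfw^k$ is positive because $\bfw^k$ and $\bfq$ are positive and exponentials are positive, so $\hat\bfw^k\in\Simp^\circ$. \Cref{thm:pythagoras}, together with \cref{as:feasible}, then gives existence, uniqueness and positivity of the projection. Existence and uniqueness in \eqref{eq:prox_step} follow from the same rewriting, since the two problems differ only by a positive factor and an additive constant.

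Finally, without degree constraints $\calW=\Simp$. The vector $\hat\bfw^k$ lies in $\Simp$ and attains $\KL{\hat\bfw^k}{\hat\bfw^k}=0$, which is the minimum because KL is nonnegative. This gives the multiplicative rule $\bfw^{k+1}=\hat\bfw^k$.

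The only real obstacle is bookkeeping: matching the coefficients to $a_k$ and $b_k$, and remembering that the normalisation constant contributes a constant only because $\sum_e w_e=1$ on the feasible set.
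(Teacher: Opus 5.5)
Your proposal is correct and follows the paper's proof essentially line by line: you expand the objective with coefficient $\lambda+1/\eta_k$ on the entropy and divide by it to identify the linear coefficient as $\log\hat w^k_e+\log Z_k$. You then use $\sum_e w_e=1$ to reduce the objective to $\KL{\bfw}{\hat\bfw^k}$ plus a constant, and your extra checks (positivity of $\hat\bfw^k$, well-posedness of the projection via \cref{thm:pythagoras}, and the unconstrained case) are sound details the paper leaves implicit.
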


\begin{proof}
With $c=\lambda+1/\eta_k$, the objective of \eqref{eq:prox_step} equals, up to a constant,
\[
  c\sum_ew_e\log w_e-\sum_ew_e\bigl(\lambda\log q_e+\eta_k^{-1}\log w^k_e-g^k_e\bigr).
\]  Dividing by $c$, the bracket becomes $\log\hat w^k_e+\log Z$ with $Z$ the
normaliser in \eqref{eq:geomix}; since $\sum_ew_e=1$, the objective is
$c\,\KL{\bfw}{\hat\bfw^k}-c\log Z$.
\end{proof}

\begin{lemma}[Dual of the KL projection]
\label{lem:dual}
Let $\hat\bfw>0$ and $\calW$ as in \eqref{eq:calW} with $\bfb=(r/\mu)\one$, satisfying
\cref{as:feasible}.  For $\bfy\in\R^V_{\ge0}$ let
$w_e(\bfy)\propto\hat w_e\exp(-(A^\top\bfy)_e)$ and
\begin{equation}
  \phi(\bfy)=-\log\sum_e\hat w_e\,e^{-(A^\top\bfy)_e}-\inner{\bfb}{\bfy}.
\end{equation}
Then $\phi$ is concave and smooth with $\nabla\phi(\bfy)=A\bfw(\bfy)-\bfb$, the problem
$\max_{\bfy\ge0}\phi(\bfy)$ attains its maximum at some $\bfy^\star$, strong duality holds,
and $\bfw(\bfy^\star)=\argmin_{\calW}\KL{\cdot}{\hat\bfw}$, with complementary slackness
$y^\star_v\,(A\bfw(\bfy^\star)-\bfb)_v=0$.
\end{lemma}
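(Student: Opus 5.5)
We prove \cref{lem:dual} by standard Lagrangian duality for the convex programme $\min\{\KL{\bfw}{\hat\bfw}:\bfw\in\Simp,\ A\bfw\le\bfb\}$. The simplex constraint is kept explicit and only the degree inequalities are dualised.

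\textbf{Step 1: the dual function.} For $\bfy\ge0$ we form the Lagrangian
\[
\mathcal L(\bfw,\bfy)=\KL{\bfw}{\hat\bfw}+\inner{\bfy}{A\bfw-\bfb}
=\sum_e w_e\log\frac{w_e}{\hat w_e}+\sum_e w_e(A^\top\bfy)_e-\inner{\bfb}{\bfy}.
\]
By the Gibbs variational principle (equivalently, nonnegativity of KL), the minimum over $\bfw\in\Simp$ is
\[
\min_{\bfw\in\Simp}\Bigl\{\sum_e w_e\log\frac{w_e}{\hat w_e e^{-(A^\top\bfy)_e}}\Bigr\}=-\log\sum_e\hat w_e e^{-(A^\top\bfy)_e}.
\]
It is attained uniquely at $\bfw(\bfy)$. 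Hence $\min_\Simp\mathcal L(\cdot,\bfy)=\phi(\bfy)$.

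Concavity follows because $\phi$ is a pointwise infimum of affine functions of $\bfy$. Equivalently, $-\phi$ is a log-sum-exp composed with a linear map, plus a linear term. Smoothness is clear, and differentiating the log-partition function gives $\nabla\phi(\bfy)=A\bfw(\bfy)-\bfb$. This is the usual identity that the gradient of the cumulant function is the mean of the sufficient statistic, or Danskin's theorem with a unique minimiser.

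\textbf{Step 2: strong duality and attainment.} This is the main point needing care. The primal is convex with affine inequality constraints over the polytope $\Simp$. It is feasible by \cref{as:feasible}, and its objective is finite and continuous on $\Simp$ since $\hat\bfw>0$. For affine constraints, the refined Slater condition needs only feasibility, which holds here, so strong duality holds and the dual optimum is attained. One can cite the standard result (e.g.\ Rockafellar, Thm.~28.2) that a convex programme with affine constraints and finite optimal value has a KKT vector.

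If a self-contained argument is preferred, we would instead show that $\phi$ is coercive on $\R^V_{\ge0}$ along directions $\bfd\ge0$, $\bfd\ne0$. Take the strictly feasible positive point $\bfx\in\calW$ of \cref{as:feasible}. If $A\bfx$ is strictly below $\bfb$ in the relevant coordinates, then $\phi(t\bfd)\le -\inner{\bfb}{t\bfd}+t\inner{\bfd}{A\bfx}+\mathrm{const}\to-\infty$, using the bound $-\log\sum_e\hat w_e e^{-s_e}\le\sum_e x_es_e+\KL{\bfx}{\hat\bfw}$. This bound is again Gibbs. A delicacy arises when constraints are tight at every feasible point, because then coercivity can fail in those directions. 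That case is the reason to prefer the cited general theorem for affine constraints.

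\textbf{Step 3: recovery of the primal and complementary slackness.} Let $\bfy^\star$ maximise $\phi$ over $\bfy\ge0$, and let $\bfw^\dagger=\argmin_\calW\KL{\cdot}{\hat\bfw}$. This minimiser exists and is unique by strict convexity, as in \cref{thm:pythagoras}. Strong duality gives
\[
\KL{\bfw^\dagger}{\hat\bfw}=\phi(\bfy^\star)\le\mathcal L(\bfw^\dagger,\bfy^\star)\le\KL{\bfw^\dagger}{\hat\bfw},
\]
where the last inequality uses $\bfy^\star\ge0$ and $A\bfw^\dagger\le\bfb$. So equality holds throughout. First, $\bfw^\dagger$ minimises $\mathcal L(\cdot,\bfy^\star)$ over $\Simp$, and uniqueness of that minimiser gives $\bfw^\dagger=\bfw(\bfy^\star)$. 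Second, $\inner{\bfy^\star}{A\bfw^\dagger-\bfb}=0$. Each summand of this inner product is $\le0$, so each vanishes, which is the complementary slackness condition $y^\star_v\,(A\bfw(\bfy^\star)-\bfb)_v=0$.
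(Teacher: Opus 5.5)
Your proposal is correct and follows the paper's proof. Both dualise only the degree constraints, identify $\phi$ through the Gibbs variational principle, obtain concavity and $\nabla\phi$ from the infimum-of-affine-functions and Danskin argument, invoke the refined Slater condition for strong duality and dual attainment, and recover the primal solution from uniqueness of the Lagrangian minimiser. Your Step~3 sandwich spells out the primal recovery and complementary slackness that the paper leaves implicit.

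One correction to Step~2. The refined Slater condition, like Rockafellar's Theorem~28.2, requires a feasible point in the \emph{relative interior} of the objective's domain, here a strictly positive point of $\Simp$. That is exactly what \cref{as:feasible} provides. Mere feasibility does not suffice: if the degree constraints forced some $w_e=0$, no finite $\bfy^\star$ could make the strictly positive $\bfw(\bfy^\star)$ equal to $\argmin_{\calW}\KL{\cdot}{\hat\bfw}$.
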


\begin{proof}
For fixed $\bfy\ge0$, minimising the Lagrangian $\KL{\bfw}{\hat\bfw}+\inner{\bfy}{A\bfw-\bfb}$ over
$\Simp$ is a Gibbs variational problem with unique solution $\bfw(\bfy)$ and value
$\phi(\bfy)$; $\phi$ is a pointwise infimum of affine functions of $\bfy$, hence concave, and
its gradient follows by Danskin's theorem.  The primal is convex with affine inequality
constraints and, by \cref{as:feasible}, has a feasible point in the relative interior of
the domain $\Simp$ of $\KL{\cdot}{\hat\bfw}$; the refined Slater condition
\cite[\S5.2.3]{boyd2004convex} gives strong duality and dual attainment.  Since the
Lagrangian minimiser is unique, $\bfw(\bfy^\star)$ is the primal solution.
\end{proof}

The dual has only $|V|$ variables and is solved by a quasi-Newton method with bound
constraints; each evaluation costs $O(|\Eocs|)$.

\begin{algorithm}[t]
\caption{Entropic proximal gradient for $\min_{\calW}U+\beta_1^{-1}\KL{\cdot}{\bfq}$}
\label{alg:main}
\begin{algorithmic}[1]
\State \textbf{Input:} prior $\bfq>0$, $\lambda=1/\beta_1\ge0$, initial step $\eta_0>0$
\State $\bfw^0\gets\qW$ \Comment{static symmetric wiring; $=\bfq$ if $\bfq\in\calW$}
\For{$k=0,1,2,\dots$}
  \State $\bfg^k\gets\nabla U(\bfw^k)$ by the forward--backward pass (\cref{prop:criticality})
  \State $\eta\gets2\eta_{k-1}$ \Comment{$\eta_{-1}=\eta_0/2$}
  \Repeat
    \State form $\hat\bfw$ by \eqref{eq:geomix} with step $\eta$; \;
           $\bfw^+\gets\argmin_{\calW}\KL{\cdot}{\hat\bfw}$ via \cref{lem:dual}
    \State \textbf{if} $U(\bfw^+)>U(\bfw^k)+\inner{\bfg^k}{\bfw^+-\bfw^k}+\eta^{-1}\KL{\bfw^+}{\bfw^k}$
           \textbf{then} $\eta\gets\eta/2$
  \Until{the test in the previous line passes}
  \State $\eta_k\gets\eta$, \; $\bfw^{k+1}\gets\bfw^+$
\EndFor
\end{algorithmic}
\end{algorithm}

\subsection{Convergence}

\begin{lemma}[Three-point property]
\label{lem:threepoint}
For the step \eqref{eq:prox_step}, $\bfw^{k+1}$ is positive and for every $\bfx\in\calW$,
\begin{equation}
\begin{aligned}
  \inner{\bfg^k}{\bfx}+\lambda\KL{\bfx}{\bfq}+\tfrac1{\eta_k}\KL{\bfx}{\bfw^k}
  \;\ge{}&
  \inner{\bfg^k}{\bfw^{k+1}}+\lambda\KL{\bfw^{k+1}}{\bfq}+\tfrac1{\eta_k}\KL{\bfw^{k+1}}{\bfw^k}\\
  &+\Bigl(\lambda+\tfrac1{\eta_k}\Bigr)\KL{\bfx}{\bfw^{k+1}}.
\end{aligned}
  \label{eq:threepoint}
\end{equation}
\end{lemma}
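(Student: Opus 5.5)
The plan mirrors the proof of \cref{thm:wellposed}, with the linear term $\inner{\bfg^k}{\cdot}$ in the role of $U$ and the two entropic terms merged into one. Write
\[
\Phi_k(\bfw)=\inner{\bfg^k}{\bfw}+\lambda\KL{\bfw}{\bfq}+\tfrac1{\eta_k}\KL{\bfw}{\bfw^k}.
\]
Two facts are needed: $\bfw^{k+1}$ is positive, and $\Phi_k(\bfx)-\Phi_k(\bfw^{k+1})\ge(\lambda+\eta_k^{-1})\KL{\bfx}{\bfw^{k+1}}$ for every $\bfx\in\calW$.

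\textbf{Positivity.} By induction assume $\bfw^k>0$; the base case holds since $\bfw^0=\qW$ is positive by \cref{thm:pythagoras}. We then apply \cref{lem:positivity} with $c=\lambda+1/\eta_k>0$ and
\[
\psi(\bfw)=\inner{\bfg^k-\lambda\log\bfq-\eta_k^{-1}\log\bfw^k}{\bfw},
\]
which is linear, hence convex and $C^1$. Here $\calC=\calW$, which contains a positive point by \cref{as:feasible}. Existence and uniqueness of the minimiser follow from strict convexity on the compact $\calW$. Alternatively, positivity is immediate from \cref{lem:closedform} together with \cref{thm:pythagoras}: $\bfw^{k+1}$ is the KL projection of the positive point $\hat\bfw^k$.

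\textbf{Main inequality.} Apply the three-point identity \eqref{eq:three_point} twice with the pivot $\bfw=\bfw^{k+1}>0$. Once with reference $\bfq$:
\[
\KL{\bfx}{\bfq}=\KL{\bfw^{k+1}}{\bfq}+\inner{\log(\bfw^{k+1}/\bfq)}{\bfx-\bfw^{k+1}}+\KL{\bfx}{\bfw^{k+1}}.
\]
Once with reference $\bfw^k$:
\[
\KL{\bfx}{\bfw^k}=\KL{\bfw^{k+1}}{\bfw^k}+\inner{\log(\bfw^{k+1}/\bfw^k)}{\bfx-\bfw^{k+1}}+\KL{\bfx}{\bfw^{k+1}}.
\]
Multiply these by $\lambda$ and $1/\eta_k$ respectively, and add $\inner{\bfg^k}{\bfx}=\inner{\bfg^k}{\bfw^{k+1}}+\inner{\bfg^k}{\bfx-\bfw^{k+1}}$. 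This gives
\[
\Phi_k(\bfx)=\Phi_k(\bfw^{k+1})+\inner{\bfg^k+\lambda\log(\bfw^{k+1}/\bfq)+\eta_k^{-1}\log(\bfw^{k+1}/\bfw^k)}{\bfx-\bfw^{k+1}}+\Bigl(\lambda+\tfrac1{\eta_k}\Bigr)\KL{\bfx}{\bfw^{k+1}}.
\]
The vector in the inner product is $\nabla\Phi_k(\bfw^{k+1})$ up to a multiple of $\one$, and that multiple is orthogonal to $\bfx-\bfw^{k+1}$ because both points lie in $\Simp$.

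\textbf{Closing the argument.} Since $\bfw^{k+1}$ is a positive minimiser of the convex, differentiable (at positive points) $\Phi_k$ over the convex $\calW$, the first-order condition $\inner{\nabla\Phi_k(\bfw^{k+1})}{\bfx-\bfw^{k+1}}\ge0$ holds. Dropping this inner product yields \eqref{eq:threepoint}.

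The only delicate point is justifying the first-order condition. It requires differentiability at the minimiser, which is exactly why positivity is established first. The one-sided directional derivative along $\bfx-\bfw^{k+1}$ is finite at a positive point, and it must be nonnegative by minimality.
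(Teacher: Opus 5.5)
Your proposal is correct and follows the paper's proof. Positivity comes from \cref{lem:positivity} with $c=\lambda+\eta_k^{-1}$ and a linear $\psi$; then an exact expansion of the prox objective around $\bfw^{k+1}$ with remainder $(\lambda+\eta_k^{-1})\KL{\bfx}{\bfw^{k+1}}$ is closed by the first-order condition. The only cosmetic difference is that the paper writes this expansion as the exact Taylor formula for $c\,h+\psi$ with the Bregman remainder of $h$, whereas you obtain the same identity by summing two instances of \eqref{eq:three_point}.
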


\begin{proof}
The objective $\varphi$ of \eqref{eq:prox_step} has the form $c\,h+\psi$ with
$c=\lambda+\eta_k^{-1}$ and $\psi$ linear, so its minimiser is positive by
\cref{lem:positivity}.  For linear $\psi$ the Taylor expansion of $\varphi$ around the
positive point $\bfw^{k+1}$ is exact:
$\varphi(\bfx)=\varphi(\bfw^{k+1})+\inner{\nabla\varphi(\bfw^{k+1})}{\bfx-\bfw^{k+1}}
+c\bigl(h(\bfx)-h(\bfw^{k+1})-\inner{\nabla h(\bfw^{k+1})}{\bfx-\bfw^{k+1}}\bigr)$, and the last
bracket equals $\KL{\bfx}{\bfw^{k+1}}$ on $\Simp$.  The first-order optimality condition makes
the inner product nonnegative.
\end{proof}

\begin{theorem}[Monotone descent and linear convergence]
\label{thm:convergence}
Let \cref{as:feasible,as:model} hold, let $\bfw^\star$ be the minimiser of $F=F_{\beta_1}$ on
$\calW$ and $F^\star=F(\bfw^\star)$.  Suppose every step satisfies the acceptance test of
\cref{alg:main},
\begin{equation}
  U(\bfw^{k+1})\le U(\bfw^k)+\inner{\bfg^k}{\bfw^{k+1}-\bfw^k}+\eta_k^{-1}\KL{\bfw^{k+1}}{\bfw^k},
  \label{eq:accept}
\end{equation}
which holds in particular whenever $\eta_k\le1/L$ (\cref{lem:smooth}); backtracking therefore
terminates with $\eta_k\ge\eta_{\min}:=\min\{\eta_0,1/(2L)\}$.  Then for all $k$:
\begin{enumerate}[label=(\alph*)]
\item $F(\bfw^{k+1})\le F(\bfw^k)-(\lambda+\eta_k^{-1})\KL{\bfw^k}{\bfw^{k+1}}$;
\item $F(\bfw^{k+1})-F^\star\le\eta_k^{-1}\KL{\bfw^\star}{\bfw^k}-(\lambda+\eta_k^{-1})
  \KL{\bfw^\star}{\bfw^{k+1}}$, hence
  $\KL{\bfw^\star}{\bfw^{k+1}}\le(1+\eta_k\lambda)^{-1}\KL{\bfw^\star}{\bfw^k}$;
\item consequently
  $\KL{\bfw^\star}{\bfw^k}\le\prod_{i<k}(1+\eta_i\lambda)^{-1}\KL{\bfw^\star}{\bfw^0}
  \le(1+\eta_{\min}\lambda)^{-k}\KL{\bfw^\star}{\bfw^0}$ and
  $F(\bfw^{k+1})-F^\star\le\eta_{\min}^{-1}(1+\eta_{\min}\lambda)^{-k}\KL{\bfw^\star}{\bfw^0}$;
\item if $\lambda=0$ and $\eta_k\equiv\eta\le1/L$, then
  $F(\bfw^K)-F^\star\le\KL{\bfw^\star}{\bfw^0}/(K\eta)$.
\end{enumerate}
With $\bfw^0=\bfq$ uniform, $\KL{\bfw^\star}{\bfw^0}\le\log|\Eocs|$.
\end{theorem}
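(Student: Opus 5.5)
The plan is to combine the three-point property (\cref{lem:threepoint}) with the acceptance test \eqref{eq:accept} and convexity of $U$, following the standard Bregman proximal-gradient argument under relative smoothness. First, the claim that \eqref{eq:accept} holds whenever $\eta_k\le1/L$ is immediate from the relative-smoothness inequality of \cref{lem:smooth}, since $L\,\KL{\bfw^{k+1}}{\bfw^k}\le\eta_k^{-1}\KL{\bfw^{k+1}}{\bfw^k}$. Backtracking halves $\eta$ only while $\eta>1/L$, so the accepted step is either the first trial $2\eta_{k-1}$ or a value greater than $1/(2L)$. An induction on $k$ (with $\eta_{-1}=\eta_0/2$) then gives $\eta_k\ge\min\{\eta_0,1/(2L)\}$.

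For (a) and (b) I would add \eqref{eq:accept} to \eqref{eq:threepoint}. Writing $F=U+\lambda K$ with $K=\KL{\cdot}{\bfq}$, \eqref{eq:accept} gives
\[
F(\bfw^{k+1})\le U(\bfw^k)+\inner{\bfg^k}{\bfw^{k+1}-\bfw^k}+\lambda K(\bfw^{k+1})+\eta_k^{-1}\KL{\bfw^{k+1}}{\bfw^k}.
\]
The right-hand side is $U(\bfw^k)-\inner{\bfg^k}{\bfw^k}$ plus the right-hand side of \eqref{eq:threepoint} minus $(\lambda+\eta_k^{-1})\KL{\bfx}{\bfw^{k+1}}$. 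Hence for every $\bfx\in\calW$,
\[
F(\bfw^{k+1})\le U(\bfw^k)+\inner{\bfg^k}{\bfx-\bfw^k}+\lambda K(\bfx)+\eta_k^{-1}\KL{\bfx}{\bfw^k}-(\lambda+\eta_k^{-1})\KL{\bfx}{\bfw^{k+1}}.
\]
Taking $\bfx=\bfw^k$ kills the linear term and the $\KL{\bfx}{\bfw^k}$ term, which yields (a). Taking $\bfx=\bfw^\star$ and using convexity, $U(\bfw^k)+\inner{\bfg^k}{\bfw^\star-\bfw^k}\le U(\bfw^\star)$, gives the first inequality of (b). Since $F(\bfw^{k+1})-F^\star\ge0$, rearranging gives the contraction $\KL{\bfw^\star}{\bfw^{k+1}}\le(1+\eta_k\lambda)^{-1}\KL{\bfw^\star}{\bfw^k}$.

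For (c), iterate the contraction and use $\eta_i\ge\eta_{\min}$. For the function-value bound, drop the negative term in (b) to get $F(\bfw^{k+1})-F^\star\le\eta_k^{-1}\KL{\bfw^\star}{\bfw^k}$. Then $\eta_k^{-1}\le\eta_{\min}^{-1}$, and the KL factor is bounded by the iterated contraction. For (d), with $\lambda=0$ and constant $\eta$, sum (b) telescopically over $k=0,\dots,K-1$. This gives $\sum_{k\le K}(F(\bfw^k)-F^\star)\le\eta^{-1}\KL{\bfw^\star}{\bfw^0}$. By (a) the values $F(\bfw^k)$ are nonincreasing, so $K\,(F(\bfw^K)-F^\star)$ is bounded by the sum. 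Finally, for uniform $\bfw^0=\bfq$ we have $\KL{\bfw^\star}{\bfq}=\log|\Eocs|-H(\bfw^\star)\le\log|\Eocs|$.

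There is no deep obstacle here. The points needing care are, first, that \cref{lem:threepoint} applies with $\bfx=\bfw^k$ and $\bfx=\bfw^\star$, both of which lie in $\calW$; and second, that all iterates are positive (again by \cref{lem:threepoint}), so every KL term is finite. The main risk is the backtracking lower bound, where the doubling and halving bookkeeping must be checked carefully so that the accepted $\eta$ never falls below $1/(2L)$ unless the initial trial was already smaller.
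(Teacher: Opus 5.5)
Your proposal is correct and follows the paper's proof essentially line by line. Like the paper, you combine the acceptance test with the three-point inequality (and convexity of $U$) to obtain $F(\bfw^{k+1})\le F(\bfx)+\eta_k^{-1}\KL{\bfx}{\bfw^k}-(\lambda+\eta_k^{-1})\KL{\bfx}{\bfw^{k+1}}$, specialise to $\bfx=\bfw^k$ and $\bfx=\bfw^\star$, iterate, and telescope for (d), and your induction $\eta_k\ge\min\{2\eta_{k-1},1/(2L)\}$ for the backtracking bound is if anything slightly more careful than the paper's one-line remark.
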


\begin{proof}
Fix $\bfx\in\calW$.  Convexity of $U$ gives
$U(\bfx)\ge U(\bfw^k)+\inner{\bfg^k}{\bfx-\bfw^k}$.  Combining \eqref{eq:accept}, the three-point
property \eqref{eq:threepoint} and this inequality,
\begin{align*}
F(\bfw^{k+1})
&\le U(\bfw^k)+\inner{\bfg^k}{\bfw^{k+1}-\bfw^k}+\eta_k^{-1}\KL{\bfw^{k+1}}{\bfw^k}
   +\lambda\KL{\bfw^{k+1}}{\bfq}\\
&\le U(\bfw^k)+\inner{\bfg^k}{\bfx-\bfw^k}+\lambda\KL{\bfx}{\bfq}+\eta_k^{-1}\KL{\bfx}{\bfw^k}
   -(\lambda+\eta_k^{-1})\KL{\bfx}{\bfw^{k+1}}\\
&\le F(\bfx)+\eta_k^{-1}\KL{\bfx}{\bfw^k}-(\lambda+\eta_k^{-1})\KL{\bfx}{\bfw^{k+1}}.
\end{align*}
Taking $\bfx=\bfw^k$ gives (a).  Taking $\bfx=\bfw^\star$ gives the first part of (b); since
$F(\bfw^{k+1})\ge F^\star$, the left side is nonnegative, which gives the contraction.
Iterating gives the first part of (c); the bound on $F$ follows from (b) and
$\eta_k\ge\eta_{\min}$.  For (d), summing (b) with $\lambda=0$ over $k=0,\dots,K-1$ telescopes
to $\sum_{k=1}^{K}(F(\bfw^k)-F^\star)\le\KL{\bfw^\star}{\bfw^0}/\eta$, and by (a) the left side is
at least $K(F(\bfw^K)-F^\star)$.  If $\eta_k\le1/L$, \eqref{eq:accept} follows from
\cref{lem:smooth}; halving from any $\eta>1/L$ stops at some $\eta\in(1/(2L),1/L]$.  Finally,
$\KL{\bfw^\star}{\bfq}=\sum_ew^\star_e\log(|\Eocs|w^\star_e)\le\log|\Eocs|$ for uniform $\bfq$.
\end{proof}

In practice the KL projection, and hence the step \eqref{eq:prox_step}, is computed only
approximately.  The following result quantifies the effect.

\begin{theorem}[Inexact steps]
\label{thm:inexact}
Suppose that, instead of \eqref{eq:threepoint}, the computed iterate $\bfw^{k+1}\in\calW$
satisfies \eqref{eq:accept} and the three-point inequality \eqref{eq:threepoint} with
$\bfx=\bfw^\star$ up to an additive error $\delta_k\ge0$ on its right-hand side.  Then
\begin{equation}
  \KL{\bfw^\star}{\bfw^{k+1}}\le\frac{\KL{\bfw^\star}{\bfw^k}+\eta_k\delta_k}{1+\eta_k\lambda}.
  \label{eq:inexact}
\end{equation}
In particular, if $\lambda>0$, $\eta_k\ge\eta_{\min}$ and $\delta_k\le\delta$, then
$\KL{\bfw^\star}{\bfw^k}\le(1+\eta_{\min}\lambda)^{-k}\KL{\bfw^\star}{\bfw^0}+\delta/\lambda$
whenever $\eta_k\equiv\eta_{\min}$, and in general
$\limsup_k\KL{\bfw^\star}{\bfw^k}\le\delta\,\sup_k\eta_k/(\eta_{\min}\lambda)$.  If
$\sum_k\eta_k\delta_k<\infty$, then $\KL{\bfw^\star}{\bfw^k}\to0$.
\end{theorem}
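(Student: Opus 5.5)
The plan is to rerun the chain of inequalities from the proof of \cref{thm:convergence}(b), now carrying the error term $\delta_k$. Set $\bfx=\bfw^\star$. The acceptance test \eqref{eq:accept} holds by assumption, so
\[
F(\bfw^{k+1})\le U(\bfw^k)+\inner{\bfg^k}{\bfw^{k+1}-\bfw^k}+\eta_k^{-1}\KL{\bfw^{k+1}}{\bfw^k}+\lambda\KL{\bfw^{k+1}}{\bfq}.
\]
The perturbed three-point inequality bounds the last three terms, together with $\inner{\bfg^k}{\bfw^{k+1}}$, by
\[
\inner{\bfg^k}{\bfw^\star}+\lambda\KL{\bfw^\star}{\bfq}+\eta_k^{-1}\KL{\bfw^\star}{\bfw^k}-(\lambda+\eta_k^{-1})\KL{\bfw^\star}{\bfw^{k+1}}+\delta_k .
\]
Convexity of $U$ at $\bfw^k$ then yields
\[
F(\bfw^{k+1})-F^\star\le\eta_k^{-1}\KL{\bfw^\star}{\bfw^k}-(\lambda+\eta_k^{-1})\KL{\bfw^\star}{\bfw^{k+1}}+\delta_k .
\]
Since $\bfw^{k+1}\in\calW$, the left side is nonnegative. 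Multiplying by $\eta_k$ and rearranging gives \eqref{eq:inexact} directly. One point needs checking here. The three-point inequality is assumed only at $\bfx=\bfw^\star$, and the sign convention is that ``additive error on its right-hand side'' means the right side is reduced by $\delta_k$, equivalently $\delta_k$ is added to the left. This is exactly what the chain uses, so no other $\bfx$ is needed.

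For the consequences, write $D_k=\KL{\bfw^\star}{\bfw^k}$ and consider constant steps $\eta_k\equiv\eta_{\min}$. Then \eqref{eq:inexact} reads $D_{k+1}\le\rho D_k+\rho\eta_{\min}\delta$, where $\rho=(1+\eta_{\min}\lambda)^{-1}$. Unrolling gives
\[
D_k\le\rho^kD_0+\eta_{\min}\delta\sum_{j\ge1}\rho^j=\rho^kD_0+\eta_{\min}\delta\,\frac{\rho}{1-\rho}=\rho^kD_0+\delta/\lambda,
\]
because $\rho/(1-\rho)=1/(\eta_{\min}\lambda)$.

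For general steps I would bound the contraction factor by $\rho_k=(1+\eta_k\lambda)^{-1}\le\rho$. The additive term satisfies $\eta_k\delta_k/(1+\eta_k\lambda)\le\bar\eta\delta$ with $\bar\eta=\sup_k\eta_k$. Thus $D_{k+1}\le\rho D_k+\bar\eta\delta$, and taking $\limsup$ gives
\[
\limsup_k D_k\le\frac{\bar\eta\delta}{1-\rho}=\frac{\bar\eta\delta(1+\eta_{\min}\lambda)}{\eta_{\min}\lambda}.
\]
This is slightly weaker than the stated $\delta\bar\eta/(\eta_{\min}\lambda)$. To recover the stated form I would use the additive term more carefully, e.g.\ via $\eta_k/(1+\eta_k\lambda)\le\bar\eta/(1+\eta_{\min}\lambda)$, which is valid since $\eta_k\le\bar\eta$ and $1+\eta_k\lambda\ge1+\eta_{\min}\lambda$. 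Then $D_{k+1}\le\rho D_k+\rho\bar\eta\delta$, and so $\limsup_k D_k\le\rho\bar\eta\delta/(1-\rho)=\bar\eta\delta/(\eta_{\min}\lambda)$, as claimed. The implicit assumption $\bar\eta<\infty$ is harmless, since otherwise the bound is vacuous.

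For the summable case, drop the denominator in the additive term to get $D_{k+1}\le\rho_kD_k+\eta_k\delta_k$. There are two cases.

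\textbf{Case $\lambda>0$.} Here $\rho_k\le\rho<1$ and $\epsilon_k=\eta_k\delta_k\to0$. The standard lemma for $D_{k+1}\le\rho D_k+\epsilon_k$ with $\epsilon_k\to0$ gives $D_k\to0$: split $\sum_j\rho^{k-j}\epsilon_j$ into $j\le k/2$, where the factor $\rho^{k/2}$ times bounded sums vanishes, and $j>k/2$, where the sum is at most the tail $\sum_{j>k/2}\epsilon_j\to0$.

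\textbf{Case $\lambda=0$.} There is no contraction, so the claim $D_k\to0$ does not follow from \eqref{eq:inexact} alone. Only boundedness of $D_k$ holds, and convergence would need a separate Fejér-type argument. I would state the convergence for $\lambda>0$, which is the regime in which this theorem is phrased.

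The main obstacle I expect is in the general-step bound: getting exactly the constant $\delta\sup_k\eta_k/(\eta_{\min}\lambda)$ requires the refined estimate of the additive term described above, not the crude one.
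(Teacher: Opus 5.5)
Your proposal is correct and follows the paper's proof. You use the same perturbed chain at $\mathbf{w}^\star$, the same unrolling with $\rho/(1-\rho)=1/(\eta_{\min}\lambda)$, and the same contraction-with-summable-forcing argument. The differences are minor: you bound the forcing term by $\rho\bar\eta\delta$ instead of using monotonicity of $\eta\mapsto\eta/(1+\eta\lambda)$, which gives the same stated constant, and you state explicitly that the summable case needs $\lambda>0$, which the paper assumes implicitly through $\bar\rho<1$.
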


\begin{proof}
Repeating the chain of inequalities in the proof of \cref{thm:convergence} with
$\bfx=\bfw^\star$ and the perturbed three-point inequality gives
$F(\bfw^{k+1})-F^\star\le\eta_k^{-1}\KL{\bfw^\star}{\bfw^k}+\delta_k-(\lambda+\eta_k^{-1})
\KL{\bfw^\star}{\bfw^{k+1}}$.  Since the left-hand side is nonnegative, multiplying by $\eta_k$
yields \eqref{eq:inexact}.  Writing $a_k=\KL{\bfw^\star}{\bfw^k}$ and $\rho_k=(1+\eta_k\lambda)^{-1}$,
we have $a_{k+1}\le\rho_ka_k+\rho_k\eta_k\delta_k$.  For constant $\eta_k=\eta_{\min}$ and
$\rho=(1+\eta_{\min}\lambda)^{-1}$, unrolling gives
$a_k\le\rho^ka_0+\eta_{\min}\delta\sum_{j\ge1}\rho^j=\rho^ka_0+\eta_{\min}\delta\,\rho/(1-\rho)
=\rho^ka_0+\delta/\lambda$.  In general, put $\bar\rho=(1+\eta_{\min}\lambda)^{-1}\ge\rho_k$ and
$\bar\eta=\sup_k\eta_k$.  Since $\eta\mapsto\eta/(1+\eta\lambda)$ is increasing,
$\rho_k\eta_k\delta_k\le\bar\eta\delta/(1+\bar\eta\lambda)$, whence
$\limsup_ka_k\le\frac{\bar\eta\delta}{(1+\bar\eta\lambda)(1-\bar\rho)}
=\frac{\bar\eta\delta(1+\eta_{\min}\lambda)}{(1+\bar\eta\lambda)\,\eta_{\min}\lambda}
\le\frac{\bar\eta\,\delta}{\eta_{\min}\lambda}$.  Finally, if $\sum_k\eta_k\delta_k<\infty$, the
recursion $a_{k+1}\le\bar\rho a_k+\eta_k\delta_k$ with $\bar\rho<1$ and summable forcing gives
$a_k\to0$.
\end{proof}

\Cref{thm:inexact} is a conditional error-propagation result: it asserts what follows
\emph{if} the computed iterates satisfy the perturbed three-point inequality with defects
$\delta_k$.  We do not derive $\delta_k$ from the stopping rule of the dual solver.  Whether
the implementation meets the hypothesis is instead verified numerically: for the runs of
\cref{fig:convergence} we evaluated the defect of the three-point inequality at a reference
optimum after every step, and obtained $\max_k\eta_k\delta_k\le1.6\times10^{-10}$ and
$\sum_k\eta_k\delta_k\le1.0\times10^{-9}$ over 400 iterations, with degree-constraint residuals
below $2.4\times10^{-8}$.  By \eqref{eq:inexact} the resulting error floor lies several orders of
magnitude below the accuracies reported in \cref{sec:experiments}.

\begin{remark}[Complexity and scope]
Each iteration costs one forward--backward pass ($O(|\mathcal N|+|\calA|+\sum_sn_s)$), one
mixture \eqref{eq:geomix} ($O(|\Eocs|)$) and a $|V|$-dimensional concave maximisation; by (c)
an $\varepsilon$-accurate solution needs $O\bigl((1+L\beta_1)\log(\log|\Eocs|/(\eta_{\min}\varepsilon))\bigr)$
iterations.  These guarantees concern the continuous relaxation; the integer topology problem
is treated next and compared with an exact MILP in \cref{sec:exp_integer}.  Theorem
\ref{thm:convergence} is an instance of the analysis of Bregman proximal gradient methods
under relative smoothness \cite{bauschke2017descent,lu2018relatively}; its role here is to
certify the algorithm actually used, not to contribute new convergence theory.
\end{remark}

\subsection{Integer Circuits}
\label{sec:rounding}

A deployable wiring has integer circuit counts $\bfx\in\mathbb Z^{\Eocs}_{\ge0}$ with
$A\bfx\le r\one$ and $\one^\top\bfx\le\mu$; its capacity is $b_0+cx_e$, i.e.\ that of
$\bfw=\bfx/\mu$.  Let $\calW'=\{\bfw\ge0:\one^\top\bfw\le1,\ A\bfw\le(r/\mu)\one\}\supseteq\calW$.

\begin{proposition}[Relaxation bound and floor rounding]
\label{prop:rounding}
Let $J^\star_{\mathbb Z}$ be the optimal iteration time over integer wirings and
$\bfw'\in\argmin_{\calW'}J$.  Then $J(\bfw')\le J^\star_{\mathbb Z}$, and the floor rounding
$\bfx=\lfloor\mu\bfw'\rfloor$ is feasible and satisfies
\begin{equation}
  J(\bfx/\mu)\;\le\;\Bigl(1+\frac{c}{b_0}\Bigr)J(\bfw')\;\le\;\Bigl(1+\frac{c}{b_0}\Bigr)J^\star_{\mathbb Z}.
\end{equation}
\end{proposition}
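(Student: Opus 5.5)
The proof has three parts: the relaxation bound, feasibility of the floor rounding, and a pairwise capacity comparison that propagates through the critical path.

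\emph{Relaxation bound.} Every integer wiring $\bfx$ with $A\bfx\le r\one$ and $\one^\top\bfx\le\mu$ gives $\bfw=\bfx/\mu\in\calW'$. Its capacities $b_0+cx_e=b_0+Bw_e$ coincide with those of $\bfw$, so its iteration time is $J(\bfx/\mu)$. Hence $J^\star_{\mathbb Z}\ge\min_{\calW'}J=J(\bfw')$. The minimum exists because $J$ is continuous and $\calW'$ is compact.

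\emph{Feasibility.} The floor is componentwise monotone, so $\lfloor\mu\bfw'\rfloor\le\mu\bfw'$. Therefore $A\bfx\le\mu A\bfw'\le r\one$ and $\one^\top\bfx\le\mu\one^\top\bfw'\le\mu$, using that $A$ has nonnegative entries. Integrality and nonnegativity are clear.

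\emph{Capacity comparison.} I will show, pair by pair, that the rounded capacity is at least $(1+c/b_0)^{-1}$ times the relaxed one:
\[
\kappa_e(\bfx/\mu)=b_0+c\lfloor\mu w'_e\rfloor\ge b_0+c(\mu w'_e-1)^+ .
\]
Write $y=\mu w'_e\ge0$. The target inequality is
\[
(1+c/b_0)\bigl(b_0+c(y-1)^+\bigr)\ge b_0+cy .
\]
There are two cases.
\begin{itemize}
\item If $y\le1$, the left side is $b_0+c\ge b_0+cy$.
\item If $y>1$, the left side is $b_0+c+c(y-1)+(c/b_0)c(y-1)\ge b_0+cy$.
\end{itemize}
In both cases the inequality holds.

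\emph{Propagation through $J$.} Each pairwise time $d_{s,e}/\kappa_e$ is multiplied by at most $1+c/b_0$. The maximum over $e\in\Eocs_s$ is positively homogeneous and monotone, so every phase duration $T_s$ grows by at most the factor $1+c/b_0$. Compute durations $c_v\ge0$ are unchanged, so they trivially satisfy the same bound, since $1+c/b_0\ge1$ and they are nonnegative. Each path length $\ell_P$ is a nonnegative sum of node durations, and $J$ is the maximum over paths. Both operations are monotone and positively homogeneous, which gives $J(\bfx/\mu)\le(1+c/b_0)J(\bfw')$. Combining with the relaxation bound yields the claimed chain.

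\emph{Main obstacle.} The only delicate point is that floor rounding may drop a pair's circuit count from just below $1$ to $0$, losing the full circuit bandwidth $c$. The additive static bandwidth $b_0$ is exactly what rescues the multiplicative factor, through the $y\le1$ case of the case split.
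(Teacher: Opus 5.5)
Your proposal is correct and follows essentially the same route as the paper: the relaxation bound via $\bfx/\mu\in\calW'$, feasibility of flooring from $\lfloor\mu\bfw'\rfloor\le\mu\bfw'$ and $A\ge0$, and the pairwise bound $b_0+c\lfloor\mu w'_e\rfloor\ge b_0+c(\mu w'_e-1)^+=\max\{b_0,\kappa_e(\bfw')-c\}$. Your case split $y\le1$ versus $y>1$ is the paper's split on whether $\kappa_e-c\ge b_0$, and the factor $1+c/b_0$ is propagated through phase maxima, path sums and the critical-path maximum in the same way.
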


\begin{proof}
Every integer-feasible $\bfx$ gives $\bfx/\mu\in\calW'$, hence the first inequality.  Flooring
preserves the degree and budget constraints.  For each pair,
$b_0+c\lfloor\mu w'_e\rfloor\ge\max\{b_0,\kappa_e(\bfw')-c\}$, so
$\kappa_e(\bfw')/\kappa_e(\bfx/\mu)\le\max\{\kappa_e/(\kappa_e-c)\ \text{if}\ \kappa_e-c\ge b_0;\ \kappa_e/b_0\ \text{otherwise}\}
\le1+c/b_0$.  Each transfer time, hence each phase duration, grows by at most this factor;
compute durations are unchanged, so every path length and therefore $J$ grows by at most
$1+c/b_0$.
\end{proof}

When the pod bandwidth $rc$ is fixed and the number of ports $r$ grows, $c/b_0\to0$ and floor
rounding becomes asymptotically exact.  For small $r$ we improve the rounded solution by a
first-improvement local search over additions and degree-preserving 2-swaps
$\{(i,j),(k,l)\}\to\{(i,k),(j,l)\}$, accepting moves on a smoothed objective with an annealed
temperature (smoothing removes the plateaus of the hard maximum).  The local search carries
no guarantee; \cref{sec:exp_integer} quantifies its gap to the exact optimum.

\section{When to Reconfigure: Regret and Likelihood-Ratio Detection}
\label{sec:reconf}

\subsection{Setting}

Time is indexed by training iterations $t=1,\dots,T$.  The workload of iteration $t$ is
described by a distribution $\bfP_t$ over \emph{cells} $(s,e)$, $s\in\calS$, $e\in\Eocs$, with
$d_{s,e,t}=V\,P_t(s,e)$ for a known total volume $V$ (bytes per iteration are available from
counters; their \emph{distribution} must be estimated).  Let $M$ be the number of cells,
$U_t(\bfw)=U(\bfw;\bfP_t)$, $F_t=U_t+\lambda\KL{\cdot}{\bfq}$ and $F^\star_t=\min_\calW F_t$.
A policy keeps the deployed wiring $\bfw_t$ unchanged or reconfigures it at cost $C>0$
(downtime in seconds); a wiring computed at time $s$ is \emph{$\varepsilon_{\rm opt}$-optimal}
if $F_s(\bfw_s)\le F_s^\star+\varepsilon_{\rm opt}$.  The \emph{regret} against the
zero-switching-cost dynamic optimum is
\begin{equation}
  R_T=\sum_{t=1}^T\bigl(F_t(\bfw_t)-F_t^\star\bigr)+C\,N_T,
\end{equation}
where $N_T$ is the number of reconfigurations, and the workload path length is
\[
  V_1=\textstyle\sum_{t=2}^T\norm{\bfP_t-\bfP_{t-1}}_1.
\]

\begin{assumption}[Workload sensitivity]
\label{as:lipschitz}
$|U(\bfw;\bfP)-U(\bfw;\bfP')|\le\Gamma\norm{\bfP-\bfP'}_1$ for all $\bfw\in\calW$.
\end{assumption}

\begin{lemma}
\label{lem:gamma}
For the smoothed critical-path cost \eqref{eq:U}, \cref{as:lipschitz} holds with
$\Gamma=V/b_0$.
\end{lemma}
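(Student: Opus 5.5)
The plan is to propagate a perturbation of the traffic distribution through the two layers of log-sum-exp in \eqref{eq:U}. The only tool needed is the elementary fact that $\lse_\tau$ is $1$-Lipschitz with respect to the sup-norm: $|\lse_\tau(\bfx)-\lse_\tau(\bfy)|\le\max_i|x_i-y_i|$. This holds because its gradient is a probability vector, so the mean-value theorem together with Hölder's inequality ($\ell_1$ against $\ell_\infty$) gives the bound. Fix $\bfw\in\calW$ and two workloads $\bfP,\bfP'$. Write $d_{s,e}=V P(s,e)$ and $d'_{s,e}=V P'(s,e)$. Compute durations $c_v$ do not depend on the workload, so only communication nodes contribute to the difference.

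\textbf{Phase level.} For each phase $s$, I compare $T^{\tau_s}_s$ under $\bfP$ and $\bfP'$. I apply the Lipschitz fact to the vectors $\bigl(d_{s,e}/\kappa_e(\bfw)\bigr)_e$ and $\bigl(d'_{s,e}/\kappa_e(\bfw)\bigr)_e$. Since $\kappa_e(\bfw)=b_0+Bw_e\ge b_0$ for $\bfw\ge0$, this gives
\[
|T^{\tau_s}_s(\bfw;\bfP)-T^{\tau_s}_s(\bfw;\bfP')|\le\frac{V}{b_0}\max_e|P(s,e)-P'(s,e)|\le\frac{V}{b_0}\sum_e|P(s,e)-P'(s,e)|.
\]

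\textbf{Path level.} For each path $P\in\calP$, $\ell_P$ changes by at most the sum of the phase bounds over the communication nodes of $P$. I read the DAG so that a path visits each phase at most once; this is the natural convention, since a phase is one collective per iteration. Under that reading,
\[
|\ell_P(\bfw;\bfP)-\ell_P(\bfw;\bfP')|\le\frac{V}{b_0}\sum_s\sum_e|P(s,e)-P'(s,e)|=\frac{V}{b_0}\norm{\bfP-\bfP'}_1 .
\]
Applying the sup-norm Lipschitz property of $\lse_\tau$ once more, now over paths, transfers this bound to $U$ and yields $\Gamma=V/b_0$.

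\textbf{Main obstacle.} The delicate step is the bookkeeping of the index sets, not any estimate. First, $\Eocs_s=\{e:d_{s,e}>0\}$ depends on $\bfP$. If cells entered or left the support, the $\lse_{\tau_s}$ would gain or lose terms of the form $e^{0}$ and jump discontinuously. I will fix the index set of each phase's inner $\lse$ to a common support, for instance all cells of phase $s$ that are charged under either workload. The phase-level estimate then compares two vectors of the same length, and the Lipschitz argument goes through unchanged. Second, if some path contains several nodes executing the same phase, the path-level sum double-counts that phase's contribution. The honest constant then becomes $(V/b_0)$ times the maximal multiplicity of a phase along a path. I would state this explicitly in the proof, and note that it equals $V/b_0$ under the one-visit-per-phase convention used in the workload model.
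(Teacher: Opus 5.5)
Your proof is correct and is essentially the paper's own argument: the paper also bounds each phase by $b_0^{-1}\max_e|d_{s,e}-d'_{s,e}|$ (via $\partial T^{\tau_s}_s/\partial d_{s,e}=\pi_{s,e}/\kappa_e\le\pi_{s,e}/b_0$), sums over phases along a path, and applies the sup-norm $1$-Lipschitz property of $\lse_\tau$ over paths. The paper relies tacitly on exactly the two conventions you make explicit: a fixed index set in each phase's $\lse_{\tau_s}$, which that derivative presupposes, and at most one node per phase on any path, which its step $\max_P|\ell_P-\ell'_P|\le b_0^{-1}\sum_s\max_e|d_{s,e}-d'_{s,e}|$ needs. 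One adjustment: fix the index set once for all workloads (e.g.\ all of $\Eocs$, or the union of supports over the horizon) rather than per pair $(\bfP,\bfP')$, because \cref{as:lipschitz} concerns a single function $U(\bfw;\cdot)$ and is used that way in \cref{lem:holding}.
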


\begin{proof}
$\partial T_s^{\tau_s}/\partial d_{s,e}=\pi_{s,e}/\kappa_e\le\pi_{s,e}/b_0$ with $\sum_e\pi_{s,e}=1$,
so $|T_s^{\tau_s}(\bfd)-T_s^{\tau_s}(\bfd')|\le b_0^{-1}\max_e|d_{s,e}-d'_{s,e}|$.  $\lse_\tau$ is
$1$-Lipschitz in the sup-norm of its arguments, so
$|U(\bfd)-U(\bfd')|\le\max_P|\ell_P-\ell'_P|\le b_0^{-1}\sum_s\max_e|d_{s,e}-d'_{s,e}|
\le b_0^{-1}\norm{\bfd-\bfd'}_1=(V/b_0)\norm{\bfP-\bfP'}_1$.
\end{proof}

\begin{lemma}[Holding regret]
\label{lem:holding}
If $\bfw_s$ is $\varepsilon_{\rm opt}$-optimal for $F_s$, then for every $t$,
\begin{equation}
  F_t(\bfw_s)-F_t^\star\le2\Gamma\norm{\bfP_t-\bfP_s}_1+\varepsilon_{\rm opt}.
\end{equation}
\end{lemma}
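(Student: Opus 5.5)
The plan is to insert the intermediate quantities $F_s(\bfw_s)$ and $F_s^\star$ and compare the two workloads term by term using \cref{as:lipschitz}. The KL penalty $\lambda\KL{\cdot}{\bfq}$ does not depend on the workload. Hence for every fixed $\bfw\in\calW$,
\[
|F_t(\bfw)-F_s(\bfw)|=|U(\bfw;\bfP_t)-U(\bfw;\bfP_s)|\le\Gamma\norm{\bfP_t-\bfP_s}_1 .
\]
This uniform bound is the only place where the workload enters.

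We then telescope the gap:
\[
F_t(\bfw_s)-F_t^\star=\bigl(F_t(\bfw_s)-F_s(\bfw_s)\bigr)+\bigl(F_s(\bfw_s)-F_s^\star\bigr)+\bigl(F_s^\star-F_t^\star\bigr).
\]
Each term is bounded separately:
\begin{itemize}
\item The first term is at most $\Gamma\norm{\bfP_t-\bfP_s}_1$ by the uniform bound above.
\item The second term is at most $\varepsilon_{\rm opt}$ by the definition of $\varepsilon_{\rm opt}$-optimality.
\item For the third term, let $\bfw_t^\star\in\calW$ be the minimiser of $F_t$, which exists by \cref{thm:wellposed}. Then
\[
F_s^\star\le F_s(\bfw_t^\star)\le F_t(\bfw_t^\star)+\Gamma\norm{\bfP_t-\bfP_s}_1=F_t^\star+\Gamma\norm{\bfP_t-\bfP_s}_1 .
\]
\end{itemize}
Summing the three bounds gives the claim.

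There is no real obstacle. The only point needing care is that \cref{as:lipschitz} is stated for $\bfw\in\calW$, and both $\bfw_s$ and $\bfw_t^\star$ lie in $\calW$. Also, the difference of minima is controlled by the sup-difference of the objectives, which is why the factor is $2\Gamma$ rather than $\Gamma$.
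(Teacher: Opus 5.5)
Your proof is correct and matches the paper's argument. The paper uses the same three-term split, with the middle bracket written directly as $F_s(\bfw_s)-F_s(\bfw_t^\star)\le\varepsilon_{\rm opt}$. That bracket is your second and third terms combined once you insert $F_s^\star\le F_s(\bfw_t^\star)$.
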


\begin{proof}
Let $\bfw^\star_t$ minimise $F_t$.  The KL terms do not depend on $t$, so
$F_t(\bfw_s)-F_t^\star=[U_t(\bfw_s)-U_s(\bfw_s)]+[F_s(\bfw_s)-F_s(\bfw^\star_t)]+[U_s(\bfw^\star_t)-U_t(\bfw^\star_t)]$;
the outer brackets are at most $\Gamma\norm{\bfP_t-\bfP_s}_1$ each, the middle one at most
$\varepsilon_{\rm opt}$.
\end{proof}

\subsection{Threshold Policies and Regret}

\begin{theorem}[Regret of the TV-threshold policy]
\label{thm:regret}
Consider the policy that, with $s$ the time of the last reconfiguration (initially $s=1$,
with an $\varepsilon_{\rm opt}$-optimal initial wiring), reconfigures at $t$ to an
$\varepsilon_{\rm opt}$-optimal wiring for $F_t$ if and only if
$\norm{\bfP_t-\bfP_s}_1>\vartheta$.  Then $N_T<V_1/\vartheta$ and
\begin{equation}
  R_T\le T\,(2\Gamma\vartheta+\varepsilon_{\rm opt})+\frac{C\,V_1}{\vartheta}.
\end{equation}
With $\vartheta^\star=\sqrt{CV_1/(2\Gamma T)}$,
$R_T\le2\sqrt{2\Gamma\,C\,V_1\,T}+T\varepsilon_{\rm opt}$.
\end{theorem}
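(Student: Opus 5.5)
The regret splits into a holding part and a switching part, and I will bound each separately before tuning $\vartheta$.

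\emph{Holding part.} At each $t$, let $s(t)$ be the time of the last reconfiguration at or before $t$. The deployed wiring $\bfw_t=\bfw_{s(t)}$ is $\varepsilon_{\rm opt}$-optimal for $F_{s(t)}$. If no reconfiguration occurs at $t$, the policy guarantees $\norm{\bfP_t-\bfP_{s(t)}}_1\le\vartheta$. If one does occur, then $s(t)=t$ and the distance is $0$. In both cases \cref{lem:holding} gives $F_t(\bfw_t)-F_t^\star\le2\Gamma\vartheta+\varepsilon_{\rm opt}$. Summing over $t$ yields $T(2\Gamma\vartheta+\varepsilon_{\rm opt})$.

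\emph{Switching part.} Let $s_0=1<s_1<\dots<s_{N_T}$ be the reconfiguration times. The $j$-th reconfiguration fires because $\norm{\bfP_{s_j}-\bfP_{s_{j-1}}}_1>\vartheta$. By the triangle inequality this distance is at most $\sum_{t=s_{j-1}+1}^{s_j}\norm{\bfP_t-\bfP_{t-1}}_1$. These blocks of increments are disjoint subsets of the terms defining $V_1$. Summing over $j$ gives $N_T\vartheta<V_1$ when $N_T\ge1$, and trivially $N_T=0\le V_1/\vartheta$ otherwise. The strict inequality in the statement should be read with this edge case in mind, i.e.\ $N_T\vartheta\le V_1$ with strictness when $N_T\ge1$. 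Either way, $CN_T\le CV_1/\vartheta$.

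\emph{Tuning.} Adding the two parts gives the displayed bound. The function $\vartheta\mapsto2\Gamma T\vartheta+CV_1/\vartheta$ is minimised at $\vartheta^\star=\sqrt{CV_1/(2\Gamma T)}$. At that point each term equals $\sqrt{2\Gamma CV_1T}$, so their sum is $2\sqrt{2\Gamma CV_1T}$.

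The argument is routine. The only delicate points are the bookkeeping at reconfiguration times, namely that the freshly computed wiring is used at time $t$ itself, and the degenerate case $V_1=0$. In that case $\vartheta^\star=0$, the threshold never fires because every distance is $0$, and the bound still holds by direct inspection: each term $F_t(\bfw_1)-F_t^\star$ is at most $\varepsilon_{\rm opt}$.
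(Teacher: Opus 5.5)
Your proposal is correct and follows the paper's proof essentially step for step. \cref{lem:holding} bounds the per-iteration excess by $2\Gamma\vartheta+\varepsilon_{\rm opt}$ (by $\varepsilon_{\rm opt}$ at reconfiguration times), the triangle inequality over the disjoint increment blocks between consecutive reconfiguration times gives $N_T\vartheta<V_1$, and $\vartheta^\star$ balances the two terms. Your remark that the strict inequality $N_T<V_1/\vartheta$ needs $N_T\ge1$ is right: it reads $0<0$ when $V_1=0$. Your separate treatment of $\vartheta^\star=0$ in that case is also correct, and both are edge cases the paper passes over.
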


\begin{proof}
At reconfiguration times the excess is at most $\varepsilon_{\rm opt}$; otherwise
$\norm{\bfP_t-\bfP_s}_1\le\vartheta$ and \cref{lem:holding} bounds the excess by
$2\Gamma\vartheta+\varepsilon_{\rm opt}$.  Let $1=t_0<t_1<\dots<t_{N}$ be the reconfiguration
times.  By the triangle inequality for $\norm{\cdot}_1$,
$\vartheta<\norm{\bfP_{t_j}-\bfP_{t_{j-1}}}_1\le\sum_{u=t_{j-1}+1}^{t_j}\norm{\bfP_u-\bfP_{u-1}}_1$;
the ranges are disjoint, so summing over $j$ gives $N\vartheta<V_1$.  The choice of
$\vartheta^\star$ balances the two terms.
\end{proof}

\begin{remark}[On the comparator]
The benchmark in $R_T$ is the dynamic optimum that pays nothing for switching; it is a
demanding comparator, and $R_T$ therefore also contains the unavoidable cost of tracking a
moving optimum.  The theorem does not compare the policy with the best policy that accounts
for switching costs, and it treats the computation of an $\varepsilon_{\rm opt}$-optimal wiring
as instantaneous: computation time enters only through $\varepsilon_{\rm opt}$ (a wiring
computed with a limited iteration budget) and, if it causes downtime, through $C$.
\end{remark}

The counting argument rests on the triangle inequality, which total variation satisfies and
the KL divergence does not.  KL statistics can nevertheless be employed by means of Pinsker's
inequality and its reverse, provided the reference distribution is bounded away from zero.

\begin{corollary}[KL-threshold policy]
\label{cor:kl}
Assume that every reference distribution $\bfP_s$ used by the policy satisfies
$\min_iP_{s,i}\ge p_{\min}>0$, and that the policy reconfigures at $t$ if and only if
$\KL{\bfP_t}{\bfP_s}>\vartheta^2/2$.  Then
\begin{equation}
  N_T<\frac{V_1}{\vartheta\,(p_{\min}/2)^{1/2}}\qquad\text{and}\qquad
  R_T\le T\,(2\Gamma\vartheta+\varepsilon_{\rm opt})+\frac{C\,V_1}{\vartheta\,(p_{\min}/2)^{1/2}}.
\end{equation}
\end{corollary}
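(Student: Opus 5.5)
The proof follows the one for \cref{thm:regret}, with the KL trigger converted into total-variation statements in both directions: Pinsker's inequality handles the holding regret, and a reverse Pinsker inequality handles the switch count. Let $s$ be the time of the last reconfiguration.

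\textbf{Holding regret.} If no reconfiguration happens at $t$, then $\KL{\bfP_t}{\bfP_s}\le\vartheta^2/2$. Pinsker's inequality gives $\norm{\bfP_t-\bfP_s}_1\le\sqrt{2\KL{\bfP_t}{\bfP_s}}\le\vartheta$, so \cref{lem:holding} bounds the excess at $t$ by $2\Gamma\vartheta+\varepsilon_{\rm opt}$. At reconfiguration times the excess is at most $\varepsilon_{\rm opt}$. Summing over $t$ gives the term $T(2\Gamma\vartheta+\varepsilon_{\rm opt})$. No bound on $p_{\min}$ is used here.

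\textbf{Counting switches.} Here I need a reverse Pinsker inequality in the form $\KL{\bfx}{\bfw}\le\chi^2(\bfx\,\|\,\bfw)$. It follows from $\log u\le u-1$:
\[
\KL{\bfx}{\bfw}\le\sum_i x_i\Bigl(\frac{x_i}{w_i}-1\Bigr)=\sum_i\frac{(x_i-w_i)^2}{w_i}.
\]
If $\min_iw_i\ge p_{\min}$, then
\[
\sum_i\frac{(x_i-w_i)^2}{w_i}\le\frac{1}{p_{\min}}\sum_i(x_i-w_i)^2\le\frac{1}{p_{\min}}\max_i|x_i-w_i|\,\norm{\bfx-\bfw}_1.
\]
Since $\bfx,\bfw$ are probability vectors, $\max_i|x_i-w_i|\le\frac12\norm{\bfx-\bfw}_1$. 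Together these give
\[
\KL{\bfx}{\bfw}\le\frac{\norm{\bfx-\bfw}_1^2}{2p_{\min}}.
\]

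Let $t_j$ be the reconfiguration times, so the trigger gives $\KL{\bfP_{t_j}}{\bfP_{t_{j-1}}}>\vartheta^2/2$ with reference $\bfP_{t_{j-1}}$. The reverse bound then gives $\norm{\bfP_{t_j}-\bfP_{t_{j-1}}}_1>\vartheta\sqrt{p_{\min}}$, which is stronger than the stated constant $\vartheta(p_{\min}/2)^{1/2}$. The triangle-inequality telescoping over the disjoint intervals $(t_{j-1},t_j]$ from \cref{thm:regret} gives
\[
N_T\,\vartheta\,(p_{\min}/2)^{1/2}<V_1,
\]
which is the first claim. Adding $C\,N_T$ to the holding-regret sum gives the bound on $R_T$.

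\textbf{Main obstacle.} The delicate step is the reverse Pinsker inequality, specifically placing the lower bound $p_{\min}$ on the correct argument. The $\chi^2$ bound divides by the second argument of the KL divergence, which is exactly the reference $\bfP_s$ the hypothesis constrains, so the step goes through. The looser constant $(p_{\min}/2)^{1/2}$ in the statement is implied by the sharper one derived above. If the factor $\max_i|x_i-w_i|\le\frac12\norm{\cdot}_1$ were not used, the bound $\sum_i(x_i-w_i)^2\le\norm{\bfx-\bfw}_1^2$ would give exactly $(p_{\min}/2)^{1/2}$ up to constants of the same kind.
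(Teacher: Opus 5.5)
Your proof is correct and follows the paper's argument: Pinsker's inequality for the holding steps, the reverse bound $\KL{\bfP_t}{\bfP_s}\le\chi^2(\bfP_t\|\bfP_s)\le p_{\min}^{-1}\sum_i(P_{t,i}-P_{s,i})^2$ with $p_{\min}$ placed on the reference distribution, and then the counting argument of \cref{thm:regret}. There are two minor differences. You derive $\DKL\le\chi^2$ directly from $\log u\le u-1$, whereas the paper passes through $\log(1+\chi^2)$. You also use $\max_i|x_i-w_i|\le\frac12\norm{\bfx-\bfw}_1$, which sharpens the paper's constant $(p_{\min}/2)^{1/2}$ to $p_{\min}^{1/2}$, and this still implies the stated bound.
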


\begin{proof}
\emph{(i) Holding steps.}  If no reconfiguration occurs at $t$, then
$\KL{\bfP_t}{\bfP_s}\le\vartheta^2/2$, and Pinsker's inequality gives
$\norm{\bfP_t-\bfP_s}_1\le\sqrt{2\KL{\bfP_t}{\bfP_s}}\le\vartheta$; by \cref{lem:holding} the excess is at
most $2\Gamma\vartheta+\varepsilon_{\rm opt}$.

\emph{(ii) Reconfiguration steps.}  Since $\log(1+x)\le x$,
$\KL{\bfP_t}{\bfP_s}\le\log\bigl(1+\chi^2(\bfP_t\|\bfP_s)\bigr)\le\chi^2(\bfP_t\|\bfP_s)$, where
$\chi^2(\bfP_t\|\bfP_s)=\sum_i(P_{t,i}-P_{s,i})^2/P_{s,i}\le p_{\min}^{-1}\sum_i(P_{t,i}-P_{s,i})^2
\le p_{\min}^{-1}\norm{\bfP_t-\bfP_s}_1^2$.  Hence a reconfiguration implies
$\vartheta^2/2<p_{\min}^{-1}\norm{\bfP_t-\bfP_s}_1^2$, that is,
$\norm{\bfP_t-\bfP_s}_1>\vartheta\,(p_{\min}/2)^{1/2}$.

\emph{(iii) Counting.}  The argument of \cref{thm:regret}, with the threshold
$\vartheta\,(p_{\min}/2)^{1/2}$ in place of $\vartheta$, gives the bound on $N_T$; combining it with
(i) yields the bound on $R_T$.
\end{proof}

Only the reference distributions need to be bounded away from zero; in practice this is
ensured by additive smoothing of the reference estimate.

In practice $\bfP_t$ is estimated from $n$ sampled packets per iteration.  We first record a
deviation bound whose dimension dependence is through the support size only.

\begin{lemma}[$\ell_1$ deviation of the empirical distribution]
\label{lem:l1dev}
Let $\hat\bfP$ be the empirical distribution of $n$ i.i.d.\ samples from $\bfP$, whose support
has $m_{\rm eff}$ cells.  Then $\E\norm{\hat\bfP-\bfP}_1\le\sqrt{m_{\rm eff}/n}$ and, for every
$u>0$, $\Pr\bigl(\norm{\hat\bfP-\bfP}_1\ge\sqrt{m_{\rm eff}/n}+u\bigr)\le e^{-nu^2/2}$.
\end{lemma}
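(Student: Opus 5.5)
The plan is to treat the expectation bound and the tail bound separately: the first follows from a coordinatewise variance computation plus Cauchy--Schwarz, the second from McDiarmid's bounded-differences inequality applied to the random variable $Z=\norm{\hat\bfP-\bfP}_1$. Throughout, only the $m_{\rm eff}$ cells in the support of $\bfP$ matter. Cells outside the support are never sampled, so $\hat P_i=P_i=0$ there and they contribute nothing to $Z$. This is what makes the bound depend on $m_{\rm eff}$ rather than on the ambient number of cells $M$.

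\emph{Expectation.} For each support cell $i$, the count $n\hat P_i$ is $\mathrm{Binomial}(n,P_i)$. Hence Jensen gives
\[
\E|\hat P_i-P_i|\le\sqrt{\mathrm{Var}(\hat P_i)}=\sqrt{P_i(1-P_i)/n}\le\sqrt{P_i/n}.
\]
Summing over the support and applying Cauchy--Schwarz,
\[
\sum_i\sqrt{P_i}\le\sqrt{m_{\rm eff}\sum_iP_i}=\sqrt{m_{\rm eff}},
\]
which yields $\E Z\le\sqrt{m_{\rm eff}/n}$.

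\emph{Concentration.} Regard $Z$ as a function of the $n$ i.i.d.\ samples $X_1,\dots,X_n$. Replacing a single sample $X_j$ moves mass $1/n$ from one cell of $\hat\bfP$ to another. This changes $\hat\bfP$ by at most $2/n$ in $\ell_1$, so by the triangle inequality $Z$ changes by at most $c_j=2/n$. McDiarmid's inequality then gives
\[
\Pr(Z-\E Z\ge u)\le\exp\Bigl(-\frac{2u^2}{\sum_j c_j^2}\Bigr)=\exp\Bigl(-\frac{2u^2}{4/n}\Bigr)=e^{-nu^2/2}.
\]
Since $\E Z\le\sqrt{m_{\rm eff}/n}$, the event $\{Z\ge\sqrt{m_{\rm eff}/n}+u\}$ is contained in $\{Z-\E Z\ge u\}$, which proves the tail statement.

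Neither step is a real obstacle; these are standard arguments. The only point requiring care is getting the bounded-difference constant exactly right: it must be $2/n$ and not $1/n$, because the moved mass leaves one coordinate and enters another, and this is what produces the exponent $nu^2/2$ rather than $2nu^2$.
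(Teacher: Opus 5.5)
Your proposal is correct and follows the paper's proof essentially step for step: the same per-cell variance bound and Cauchy--Schwarz over the $m_{\rm eff}$ support cells for the mean, and the same McDiarmid argument with bounded difference $2/n$ for the tail. The only addition is that you state explicitly the event containment $\{Z\ge\sqrt{m_{\rm eff}/n}+u\}\subseteq\{Z-\E Z\ge u\}$, which the paper leaves implicit.
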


\begin{proof}
For a cell $i$, $\E|\hat P_i-P_i|\le(\mathrm{Var}\,\hat P_i)^{1/2}=\sqrt{P_i(1-P_i)/n}\le\sqrt{P_i/n}$;
cells outside the support contribute nothing.  By the Cauchy--Schwarz inequality,
$\sum_i\sqrt{P_i}\le\sqrt{m_{\rm eff}\sum_iP_i}=\sqrt{m_{\rm eff}}$, which gives the bound on the
mean.  Replacing one of the $n$ samples changes $\norm{\hat\bfP-\bfP}_1$ by at most $2/n$, so
McDiarmid's bounded-differences inequality \cite{mcdiarmid1989method} gives
$\Pr(\norm{\hat\bfP-\bfP}_1-\E\norm{\hat\bfP-\bfP}_1\ge u)\le\exp\bigl(-2u^2/(n\cdot4/n^2)\bigr)=e^{-nu^2/2}$.
\end{proof}

\begin{theorem}[Sampled traffic]
\label{thm:regret_sampled}
Let $\hat\bfP_t$ be the empirical distribution of $n$ i.i.d.\ samples from $\bfP_t$, let
$m_{\rm eff}$ bound the support sizes of all $\bfP_t$, and let the policy reconfigure (to a wiring
$\varepsilon_{\rm opt}$-optimal for the \emph{estimated} workload $\hat\bfP_t$) when
$\norm{\hat\bfP_t-\hat\bfP_s}_1>\vartheta$.  For $\delta\in(0,1)$ put
\begin{equation}
  \varepsilon=\sqrt{\frac{m_{\rm eff}}{n}}+\sqrt{\frac{2\log(T/\delta)}{n}}.
  \label{eq:eps_n}
\end{equation}
Then, with probability at least $1-\delta$, if $\vartheta>2\varepsilon$,
\begin{equation}
  R_T\le T\bigl(2\Gamma(\vartheta+\varepsilon)+\varepsilon_{\rm opt}\bigr)+\frac{C\,V_1}{\vartheta-2\varepsilon}.
\end{equation}
\end{theorem}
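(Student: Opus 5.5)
Our plan is to reduce \cref{thm:regret_sampled} to the deterministic argument of \cref{thm:regret}, working on a single good event on which all the estimates are uniformly accurate.

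\textbf{Good event.} Apply \cref{lem:l1dev} to each $t=1,\dots,T$ with $u=\sqrt{2\log(T/\delta)/n}$, which gives $e^{-nu^2/2}=\delta/T$. A union bound over $t$ then shows that the event
\[
\mathcal E=\bigl\{\norm{\hat\bfP_t-\bfP_t}_1<\varepsilon\ \text{for all}\ t\le T\bigr\}
\]
has probability at least $1-\delta$. The rest of the argument is deterministic on $\mathcal E$.

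\textbf{Holding steps.} Let $s$ be the last reconfiguration time. The deployed wiring $\bfw_s$ is $\varepsilon_{\rm opt}$-optimal for $\hat F_s=U(\cdot;\hat\bfP_s)+\lambda\KL{\cdot}{\bfq}$. The proof of \cref{lem:holding} uses only \cref{as:lipschitz} and the fact that the KL term does not depend on the workload, so it applies verbatim with $\hat\bfP_s$ in place of $\bfP_s$:
\[
F_t(\bfw_s)-F_t^\star\le2\Gamma\norm{\bfP_t-\hat\bfP_s}_1+\varepsilon_{\rm opt}.
\]
If no reconfiguration happens at $t$, then $\norm{\hat\bfP_t-\hat\bfP_s}_1\le\vartheta$. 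The triangle inequality on $\mathcal E$ gives $\norm{\bfP_t-\hat\bfP_s}_1\le\norm{\bfP_t-\hat\bfP_t}_1+\vartheta<\vartheta+\varepsilon$. At a reconfiguration time $t$ we have $s=t$, and the same bound gives excess at most $2\Gamma\varepsilon+\varepsilon_{\rm opt}$. In every case the per-step excess is therefore at most $2\Gamma(\vartheta+\varepsilon)+\varepsilon_{\rm opt}$, which yields the first term.

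\textbf{Counting.} This is the step requiring care, because the trigger is measured on estimates while $V_1$ is measured on the true path. If a reconfiguration happens at $t_j$ with previous reconfiguration $t_{j-1}$, then
\[
\vartheta<\norm{\hat\bfP_{t_j}-\hat\bfP_{t_{j-1}}}_1<\norm{\bfP_{t_j}-\bfP_{t_{j-1}}}_1+2\varepsilon .
\]
Hence $\norm{\bfP_{t_j}-\bfP_{t_{j-1}}}_1>\vartheta-2\varepsilon>0$, and this is where the hypothesis $\vartheta>2\varepsilon$ enters. Bounding the left side by the sum of consecutive true increments over the disjoint ranges $(t_{j-1},t_j]$, exactly as in \cref{thm:regret}, and summing over $j$ gives $N_T(\vartheta-2\varepsilon)<V_1$. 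The switching cost is then $CN_T<CV_1/(\vartheta-2\varepsilon)$.

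The only obstacle is the bookkeeping. The estimation error appears once in the holding bound, since $\hat\bfP_t$ is compared with $\bfP_t$, but twice in the counting bound, since both endpoints are estimates. This asymmetry explains why $\vartheta+\varepsilon$ and $\vartheta-2\varepsilon$ appear in the final bound.
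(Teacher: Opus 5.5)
Your proposal is correct and follows the paper's proof essentially step for step. It uses the same good event, obtained from \cref{lem:l1dev} with $u=\sqrt{2\log(T/\delta)/n}$ and a union bound over $t$. It bounds the holding excess the same way, by \cref{lem:holding} with $\hat\bfP_s$ in place of $\bfP_s$ plus one triangle inequality. And it feeds the same two-sided estimate $\norm{\bfP_{t_j}-\bfP_{t_{j-1}}}_1>\vartheta-2\varepsilon$ into the counting argument of \cref{thm:regret}.
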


\begin{proof}
By \cref{lem:l1dev} with $u=\sqrt{2\log(T/\delta)/n}$,
$\Pr(\norm{\hat\bfP_t-\bfP_t}_1\ge\varepsilon)\le\delta/T$ for each $t$; a union bound gives the event
$\mathcal E=\{\norm{\hat\bfP_t-\bfP_t}_1<\varepsilon\ \forall t\}$ with probability at least $1-\delta$.  On $\mathcal E$: the wiring deployed at $s$ is $\varepsilon_{\rm opt}$-optimal for
$\hat\bfP_s$, so by \cref{lem:holding} applied with $\hat\bfP_s$ in place of $\bfP_s$,
$F_t(\bfw_s)-F_t^\star\le2\Gamma\norm{\bfP_t-\hat\bfP_s}_1+\varepsilon_{\rm opt}
\le2\Gamma(\norm{\bfP_t-\hat\bfP_t}_1+\norm{\hat\bfP_t-\hat\bfP_s}_1)+\varepsilon_{\rm opt}
\le2\Gamma(\varepsilon+\vartheta)+\varepsilon_{\rm opt}$ whenever no reconfiguration occurs (and
$\le2\Gamma\varepsilon+\varepsilon_{\rm opt}$ when it does).  A reconfiguration implies
$\norm{\bfP_t-\bfP_s}_1\ge\norm{\hat\bfP_t-\hat\bfP_s}_1-2\varepsilon>\vartheta-2\varepsilon$, and the
counting argument gives $N_T<V_1/(\vartheta-2\varepsilon)$.
\end{proof}

The radius \eqref{eq:eps_n} depends on the number of cells only through the support size
$m_{\rm eff}$, not through the ambient number $M$ of cells.  In the experiment of
\cref{sec:experiments} ($n=2\times10^5$, $T=600$, $\delta=0.05$, $m_{\rm eff}\le120$ occupied cells)
it gives $\varepsilon\approx0.034$, so the guarantee is non-vacuous for thresholds
$\vartheta>0.07$.

\subsection{The KL Indicator as a Likelihood-Ratio Statistic}

Let $\bfP_{\rm ref}$ be the reference distribution at the last reconfiguration and let
$\bm n=(n_1,\dots,n_M)$ be cell counts of $N$ samples with empirical distribution $\hat\bfP$.

\begin{proposition}[GLR identity]
\label{prop:glr}
For the multinomial model, the log generalised likelihood ratio of
$H_0:\bfP=\bfP_{\rm ref}$ against $H_1:\bfP\in\Simp$ equals
\begin{equation}
  \log\frac{\sup_{\bfP\in\Simp}\prod_iP_i^{n_i}}{\prod_iP_{{\rm ref},i}^{n_i}}
  =N\,\KL{\hat\bfP}{\bfP_{\rm ref}}.
\end{equation}
\end{proposition}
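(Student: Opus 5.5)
The plan is to compute the numerator supremum explicitly as the multinomial maximum-likelihood estimate and then rewrite the log-ratio in KL form.

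\textbf{Step 1: the maximiser is $\hat\bfP$.} Write $n_i=N\hat P_i$. For $\bfP\in\Simp$, the log-likelihood is $\sum_in_i\log P_i=N\sum_i\hat P_i\log P_i$, with the convention $0\log0=0$, and it equals $-\infty$ if some $n_i>0$ has $P_i=0$. Since
\[
\sum_i\hat P_i\log P_i=\sum_i\hat P_i\log\hat P_i-\KL{\hat\bfP}{\bfP},
\]
Gibbs' inequality $\KL{\hat\bfP}{\bfP}\ge0$ shows that the supremum is attained at $\bfP=\hat\bfP$. Its value is $N\sum_i\hat P_i\log\hat P_i$. This argument handles the boundary of $\Simp$ directly, so no Lagrange multipliers or positivity argument is needed.

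\textbf{Step 2: subtract the null log-likelihood.} The log-likelihood under $H_0$ is $N\sum_i\hat P_i\log P_{{\rm ref},i}$. Subtracting it from the value in Step 1 gives
\[
N\sum_i\hat P_i\log\bigl(\hat P_i/P_{{\rm ref},i}\bigr)=N\,\KL{\hat\bfP}{\bfP_{\rm ref}}.
\]
The multinomial coefficient appears in both the numerator and the denominator, so it cancels.

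\textbf{Main obstacle: zero cells.} The only delicate point is cells with $n_i=0$ or $P_{{\rm ref},i}=0$.
\begin{itemize}
\item Cells with $n_i=0$ contribute $0$ to both sides under the convention $0\log0=0$.
\item If $n_i>0$ while $P_{{\rm ref},i}=0$, both sides equal $+\infty$.
\end{itemize}
With these conventions stated, the identity holds in the extended reals.
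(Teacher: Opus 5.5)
Your proposal is correct and follows the paper's proof: identify the supremum as the multinomial maximum-likelihood estimate $\hat\bfP=\bm n/N$, then rewrite $\sum_i n_i\log(\hat P_i/P_{{\rm ref},i})$ as $N\,\KL{\hat\bfP}{\bfP_{\rm ref}}$. Your Gibbs-inequality justification of the maximiser and your treatment of zero cells only make explicit what the paper takes for granted.
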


\begin{proof}
The supremum is attained at the maximum-likelihood estimate $\hat\bfP=\bm n/N$, and
\[
  \textstyle\sum_in_i\log(\hat P_i/P_{{\rm ref},i})=N\sum_i\hat P_i\log(\hat P_i/P_{{\rm ref},i}).
\]
\end{proof}

Hence a KL indicator of traffic change acquires a precise statistical meaning once its
reference is the distribution at the time of configuration and it is weighted by the sample
size: it is then the log generalised likelihood ratio.  The identity is classical; its use
here is to fix the reference and the weighting that a principled detector must employ.  If a change may have occurred at an unknown time
$\tau\in(s,t]$, the GLR statistic maximises over $\tau$
\cite{lorden1971procedures,siegmund1995glr}:
\begin{equation}
  \Lambda_t=\max_{s<\tau\le t}N_{\tau:t}\,\KL{\hat\bfP_{\tau:t}}{\bfP_{\rm ref}},\qquad
  \hat a(t)=t-\hat\tau+1,
  \label{eq:glrcusum}
\end{equation}
where $\hat\bfP_{\tau:t}$ pools the $N_{\tau:t}$ samples of iterations $\tau,\dots,t$ and
$\hat\tau$ is the maximiser.  The \emph{lag} $\hat a(t)$ is thus selected by maximum
likelihood rather than by a heuristic, and the pooled estimate $\hat\bfP_{\hat\tau:t}$ is the
natural input for re-optimisation.

\begin{theorem}[False-alarm control]
\label{thm:falsealarm}
Suppose that no change occurs, i.e.\ all samples in iterations $s+1,\dots,t$ are i.i.d.\ from
$\bfP_{\rm ref}$, and the maximisation in \eqref{eq:glrcusum} ranges over at most $W$ windows,
each with at most $N_{\max}$ samples.  Then for every $h>0$,
\begin{equation}
  \Pr(\Lambda_t\ge h)\le W\,(N_{\max}+1)^{M}e^{-h}.
\end{equation}
In particular the threshold $h=M\log(N_{\max}+1)+\log(TW/\delta)$ yields at most $\delta$
probability of any false alarm over $T$ iterations.
\end{theorem}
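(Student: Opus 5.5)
The plan is a union bound over the at most $W$ windows, with each window handled by the method of types. Fix one window $(\tau,t]$ with $n=N_{\tau:t}\le N_{\max}$ samples, i.i.d.\ from $\bfP_{\rm ref}$ under the no-change hypothesis. Then $n\,\KL{\hat\bfP_{\tau:t}}{\bfP_{\rm ref}}$ depends only on the type (empirical distribution) of the sample. The event $\{\Lambda_t\ge h\}$ is the union over windows of $\{n\KL{\hat\bfP}{\bfP_{\rm ref}}\ge h\}$, so it suffices to show, for every fixed window,
\[
\Pr\bigl(n\KL{\hat\bfP}{\bfP_{\rm ref}}\ge h\bigr)\le(n+1)^M e^{-h}.
\]

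The key step is the standard type-class bound. For a type $Q$ with denominator $n$, the probability under $\bfP_{\rm ref}^{\otimes n}$ of the type class $T(Q)$ is at most $e^{-n\KL{Q}{\bfP_{\rm ref}}}$ (Cover and Thomas). A self-contained route is to observe that $\Pr(T(Q))=|T(Q)|\prod_iP_{{\rm ref},i}^{nQ_i}$. One also has $|T(Q)|\prod_iQ_i^{nQ_i}=Q^{\otimes n}(T(Q))\le1$. Dividing gives $\Pr(T(Q))\le\prod_i(P_{{\rm ref},i}/Q_i)^{nQ_i}=e^{-n\KL{Q}{\bfP_{\rm ref}}}$; this is exactly the GLR identity of \cref{prop:glr} read as a likelihood-ratio bound. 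Summing over the types with $n\KL{Q}{\bfP_{\rm ref}}\ge h$ bounds the probability by $e^{-h}$ times the number of types. That number is at most $(n+1)^M\le(N_{\max}+1)^M$, since each of the $M$ counts lies in $\{0,\dots,n\}$. Cells where $P_{{\rm ref},i}=0$ have zero sample count almost surely, so they cause no difficulty.

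Multiplying by $W$ gives $\Pr(\Lambda_t\ge h)\le W(N_{\max}+1)^Me^{-h}$. For the second claim, substitute $h=M\log(N_{\max}+1)+\log(TW/\delta)$ to get per-iteration probability at most $\delta/T$. A union bound over the $T$ iterations then bounds the probability of any false alarm by $\delta$.

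The main point needing care is the dependence between windows, which share samples. The union bound sidesteps this entirely, since it requires no independence, so the only real technical content is the type-class estimate.
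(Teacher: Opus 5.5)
Your proposal is correct and matches the paper's proof: a per-window method-of-types bound (type-class probability at most $e^{-N\KL{Q}{\bfP_{\rm ref}}}$, at most $(N+1)^M$ types), a union bound over the $W$ windows, and a further union bound over the $T$ iterations for the threshold claim. Your derivation of the type-class bound from $Q^{\otimes n}(T(Q))\le1$ is a correct, self-contained version of the step the paper only cites.
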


\begin{proof}
Fix a window with $N$ samples.  By the method of types \cite{cover2006elements,csiszar1998types},
the probability of observing a particular type $Q$ is at most $e^{-N\KL{Q}{\bfP_{\rm ref}}}$,
and there are at most $(N+1)^M$ types.  Summing over types with $N\KL{Q}{\bfP_{\rm ref}}\ge h$
gives $\Pr(N\KL{\hat\bfP}{\bfP_{\rm ref}}\ge h)\le(N+1)^Me^{-h}$.  A union bound over windows
(and, for the second claim, over iterations) completes the proof.
\end{proof}

\begin{remark}
The bound is non-asymptotic but conservative when $M$ is large; asymptotically
$2N\KL{\hat\bfP}{\bfP_{\rm ref}}$ is $\chi^2_{M'-1}$-distributed under $H_0$ (Wilks), with $M'$ the
support size, which gives practical thresholds.  When $\bfP_{\rm ref}$ is itself estimated we
use additive smoothing; we do not analyse the resulting plug-in effect.  The sequential
statistic \eqref{eq:glrcusum} and the regret analysis are complementary: the former
controls false alarms, the latter quantifies the cost of any threshold.
\end{remark}

\section{Multi-Tenant Clusters}
\label{sec:multitenant}

\subsection{Shared Port Budget and the KL Chain Rule}

Let $K$ jobs occupy disjoint pod sets $V^{(1)},\dots,V^{(K)}$.  Since there is no traffic
between jobs, the candidate circuits are $\Eocs=\bigsqcup_k\Eocs_k$ with $\Eocs_k$ the pairs
inside $V^{(k)}$.  The jobs share the OCS port budget $\mu<\sum_k|V^{(k)}|r/2$ through the
simplex constraint, and $\calW$ is the degree polytope \eqref{eq:calW}.  Write
$\rho_k=\sum_{e\in\Eocs_k}w_e$ (the share of job $k$) and $w^{(k)}_e=w_e/\rho_k$ for
$e\in\Eocs_k$, and similarly $\bfq\mapsto(\bfrho^{q},\bfq^{(k)})$.

\begin{lemma}[Chain rule]
\label{lem:chain}
$\KL{\bfw}{\bfq}=\KL{\bfrho}{\bfrho^q}+\sum_{k=1}^K\rho_k\,\KL{\bfw^{(k)}}{\bfq^{(k)}}$.
\end{lemma}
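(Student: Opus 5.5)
The plan is a direct computation: split the sum defining $\KL{\bfw}{\bfq}$ over the disjoint blocks $\Eocs_k$ and factor each log-ratio. For $e\in\Eocs_k$ with $\rho_k>0$ we have $w_e=\rho_k w^{(k)}_e$ and $q_e=\rho^q_k q^{(k)}_e$. Since $\bfq$ is positive, $\rho^q_k>0$ and $\bfq^{(k)}$ is a well-defined positive distribution on $\Eocs_k$. Hence
\[
  \log\frac{w_e}{q_e}=\log\frac{\rho_k}{\rho^q_k}+\log\frac{w^{(k)}_e}{q^{(k)}_e}.
\]

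Multiply by $w_e$ and sum over $e\in\Eocs_k$. The first term contributes $\rho_k\log(\rho_k/\rho^q_k)$, because $\sum_{e\in\Eocs_k}w_e=\rho_k$. The second contributes $\rho_k\sum_{e\in\Eocs_k}w^{(k)}_e\log(w^{(k)}_e/q^{(k)}_e)=\rho_k\KL{\bfw^{(k)}}{\bfq^{(k)}}$. Summing over $k$, the first contributions give $\KL{\bfrho}{\bfrho^q}$ and the second give the weighted sum in the statement. Because $\Eocs=\bigsqcup_k\Eocs_k$, the sum over $k$ and $e\in\Eocs_k$ is exactly the sum over $\Eocs$, which proves the identity.

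The only point needing care is a block with $\rho_k=0$, where $\bfw^{(k)}$ is undefined. In that case every $w_e$ with $e\in\Eocs_k$ vanishes, so the block contributes $0$ to $\KL{\bfw}{\bfq}$ under $0\log0=0$. It also contributes $0\log(0/\rho^q_k)=0$ to $\KL{\bfrho}{\bfrho^q}$, and its term $\rho_k\KL{\bfw^{(k)}}{\bfq^{(k)}}$ is read as $0$ for any choice of $\bfw^{(k)}$. Zero coordinates inside a block with $\rho_k>0$ cause no difficulty either: $w_e=0$ iff $w^{(k)}_e=0$, and both sides use the same convention. No earlier result is needed beyond the definition of KL; this is the standard chain rule for relative entropy applied to the coarse-graining $e\mapsto k$.
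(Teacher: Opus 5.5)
Your proposal is correct and matches the paper's proof: split the sum over the blocks $\Eocs_k$, write $w_e=\rho_k w^{(k)}_e$ and $q_e=\rho^q_k q^{(k)}_e$, and use $\sum_{e\in\Eocs_k}w^{(k)}_e=1$. Your handling of blocks with $\rho_k=0$ and of zero coordinates inside a block is extra care that the paper's one-line proof leaves implicit.
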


\begin{proof}
Expand $\sum_k\sum_{e\in\Eocs_k}\rho_kw^{(k)}_e\log\frac{\rho_kw^{(k)}_e}{\rho^q_kq^{(k)}_e}$ and use
$\sum_{e\in\Eocs_k}w^{(k)}_e=1$.
\end{proof}

The prior thus penalises separately the allocation of the budget across jobs and the shape
of each job's wiring, corresponding to the mixed coordinates
$(\bfrho,\{\bfw^{(k)}\})$ of the simplex.

\subsection{Fairness Functionals}

Let $J_k(\bfw)$ be the smoothed iteration time of job $k$ (it depends only on the block
$\Eocs_k$), $J_k^{\rm ref}>0$ a reference time (e.g.\ job $k$ alone with all its ports), and
$s_k=J_k/J_k^{\rm ref}$ the \emph{slowdown}.  Following the fairness families of network resource allocation \cite{mo2000fair},
for weights $\omega\in\Simp^{K-1}$, $\omega>0$, and $p\in[1,\infty)$ define
\begin{equation}
  \Phi_p(\bm s)=\Bigl(\sum_k\omega_ks_k^p\Bigr)^{1/p},\qquad
  \Phi_\infty(\bm s)=\max_ks_k\ \ (\text{smoothed by }\lse_{\tau_f}),
\end{equation}
and the multi-tenant free energy $F^{(K)}(\bfw)=\Phi_p(\bm s(\bfw))+\lambda\KL{\bfw}{\bfq}$.

\begin{proposition}[Convexity; relation to $\alpha$-fairness]
\label{prop:fairconvex}
For $p\in[1,\infty]$, $\bfw\mapsto\Phi_p(\bm s(\bfw))$ is convex on $\calW$, and it is
$L$-smooth in $\ell_1$ for some finite $L$ when $p<\infty$ or when $\Phi_\infty$ is smoothed,
so \cref{thm:wellposed,thm:convergence} apply to $F^{(K)}$.

Moreover, let the normalised throughput of job $k$ be $x_k(\bfw)=1/s_k(\bfw)$ and
$u_\alpha(x)=x^{1-\alpha}/(1-\alpha)$ ($\alpha\ne1$).  For $\alpha=p+1\ge2$ the identity
\begin{equation}
  \sum_k\omega_k\,u_{p+1}\bigl(x_k(\bfw)\bigr)=-\frac1p\,\Phi_p\bigl(\bm s(\bfw)\bigr)^p
  \qquad(\bfw\in\calW)
  \label{eq:alpha_identity}
\end{equation}
holds.  Consequently, for $\lambda=0$ the set of wirings maximising the $\alpha$-fair utility of
throughputs over $\calW$ coincides with the set minimising $\Phi_p\circ\bm s$; the two
objectives are related by a strictly decreasing transformation and their values differ.  For
$\lambda>0$, the problems $\min\,\Phi_p\circ\bm s+\lambda\KL{\cdot}{\bfq}$ and
$\min\,\frac1p(\Phi_p\circ\bm s)^p+\lambda\KL{\cdot}{\bfq}$ are both convex but have different
minimisers in general.
\end{proposition}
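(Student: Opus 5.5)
The plan splits into three parts: convexity, smoothness, and the $\alpha$-fairness identity with its consequences.

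\textbf{Convexity.} I would use the composition rule for convex functions. By \cref{lem:smooth}, each $J_k$ is convex and nonincreasing in every coordinate on $\R^{\Eocs}_{\ge0}$. It depends only on the block $\Eocs_k$. Hence $s_k=J_k/J_k^{\rm ref}$ is convex and nonnegative.

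For $p\in[1,\infty)$, $\Phi_p$ is a weighted $\ell_p$ norm. It is therefore convex, and it is nondecreasing in each argument on $\R^K_{\ge0}$. The same holds for $\max$ and for its smoothing $\lse_{\tau_f}$. A convex, componentwise nondecreasing outer function composed with convex inner functions is convex, so $\Phi_p\circ\bm s$ is convex on $\calW$.

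\textbf{Smoothness.} I would bound the Hessian through the chain rule:
\[
\nabla^2(\Phi\circ\bm s)=\sum_k\partial_k\Phi\,\nabla^2 s_k+\nabla\bm s^\top\nabla^2\Phi\,\nabla\bm s .
\]
\Cref{lem:smooth} bounds $\|\nabla s_k\|_\infty$ and gives $\bfv^\top\nabla^2 s_k\bfv\le L_k\|\bfv\|_1^2/J_k^{\rm ref}$.

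The main obstacle is the second-order part of $\Phi_p$, which can blow up near $\bm s=0$ for $p<2$. The way around it is that $s_k$ is bounded below on $\calW$. Indeed $J_k\ge$ the compute-time part plus the transfer time under maximal capacity $b_0+B$. If the compute part is positive, or some $d_{s,e}>0$, this lower bound is strictly positive, and $\bm s$ ranges over a compact set bounded away from $0$. On that set $\Phi_p$ is $C^2$, since $\sum\omega_k s_k^p>0$. Its gradient is bounded, and its Hessian is bounded on this compact set. Hence $L$ is finite.

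The $\lse$ case is handled the same way: its gradient lies in the simplex and its Hessian is bounded by $1/\tau_f$. With convexity and $\ell_1$-smoothness established, the relative-smoothness step in \cref{lem:smooth} carries over. The proofs of \cref{thm:wellposed,thm:convergence} use only these properties, so both theorems apply to $F^{(K)}$.

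\textbf{The identity and its consequences.} The identity \eqref{eq:alpha_identity} is direct. Since $u_{p+1}(x)=x^{-p}/(-p)$ and $x_k^{-p}=s_k^p$, the left side equals $-\frac1p\sum_k\omega_ks_k^p=-\frac1p\Phi_p^p$.

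The map $y\mapsto-\frac1p y^p$ is strictly decreasing on $y>0$. So when $\lambda=0$, maximising the utility is the same as minimising $\Phi_p$, and the argmax and argmin sets coincide. The objective values differ because of the transformation.

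For $\lambda>0$, convexity of $\frac1p(\Phi_p\circ\bm s)^p$ follows because $t\mapsto t^p/p$ is convex and increasing on $t\ge0$. For the claim about different minimisers, I would compare first-order conditions. The gradients differ by the factor $\Phi_p^{p-1}$, which is generally $\neq1$, and this effectively rescales $\lambda$. Then \cref{thm:limits}(i) shows that the minimiser changes unless that factor equals $1$. A small two-job example with $p=2$ confirms that the factor is not $1$ in general.
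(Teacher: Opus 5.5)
Your convexity argument, the finite $\ell_1$-smoothness constant (slowdowns bounded away from $0$ on the compact set $\calW$, where $\Phi_p$ is $C^2$), the identity \eqref{eq:alpha_identity} and the $\lambda=0$ consequence all follow the paper's proof. The step that fails is your argument that the two penalised problems have different minimisers for $\lambda>0$.

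Your reduction is sound. Since $\nabla\bigl(\tfrac1p\Phi_p^p\bigr)=\Phi_p^{p-1}\nabla\Phi_p$, the minimiser $\bfw_2$ of the second problem satisfies the optimality condition of the first problem with $\lambda'=\lambda\,\Phi_p(\bm s(\bfw_2))^{1-p}$. It is therefore the first problem's minimiser at $\lambda'$. But \cref{thm:limits}(i) gives only \emph{weak} monotonicity of the cost and of $\KL{\cdot}{\bfq}$ along the solution path, so it cannot show that the minimiser moves when $\lambda$ moves. In fact it need not move. By the argument of \cref{prop:staticopt}, if $\qW$ minimises $\Phi_p\circ\bm s$ over $\calW$ (e.g.\ permutation-symmetric workloads), then $\qW$ minimises both problems for every $\lambda$, even though $\Phi_p(\bm s(\qW))^{p-1}$ is generally not $1$. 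So a two-job example in which the factor differs from $1$ proves nothing by itself.

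To close the gap you need either an instance in which both minimisers are actually computed, or a strictness argument. Here is one. Suppose $\bfw^\star_\lambda=\bfw^\star_{\lambda'}=\bfw$ with $\lambda\ne\lambda'$ and no degree constraint active at $\bfw$. The stationarity condition at this positive point is $\nabla(\Phi_p\circ\bm s)(\bfw)+\lambda\log(\bfw/\bfq)\in\R\one$, and likewise with $\lambda'$. Subtracting the two gives $\log(\bfw/\bfq)\in\R\one$, i.e.\ $\bfw=\bfq$. Then $\nabla(\Phi_p\circ\bm s)(\bfq)\in\R\one$ at a relative-interior point of $\calW$, so $\bfq$ minimises $\Phi_p\circ\bm s$ over $\calW$. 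Hence the minimisers differ whenever all of the following hold: $p>1$; $\Phi_p(\bm s(\bfw_2))\ne1$; $\bfq$ is not a minimiser of $\Phi_p\circ\bm s$; and the degree constraints are slack at $\bfw^\star_\lambda$. This makes ``in general'' precise. For comparison, the paper itself only observes that the transformation acts on one term of the objective, so the minimisers need not coincide.
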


\begin{proof}
Each $s_k$ is convex and positive (\cref{lem:smooth}); $\Phi_p$ is a weighted $p$-norm on
$\R^K_{\ge0}$, hence convex and nondecreasing in each argument, and so is $\lse_{\tau_f}$; the
composition of a convex nondecreasing function with convex functions is convex.  On the
compact set $\calW$ the slowdowns lie in $[s_{\min},s_{\max}]$ with $s_{\min}>0$, where $\Phi_p$
is $C^2$; the chain rule and \cref{lem:smooth} bound the Hessian of the composition, giving a
finite $L$.  For \eqref{eq:alpha_identity}, $u_{p+1}(1/s)=(1/s)^{-p}/(-p)=-s^p/p$, and summing
with weights $\omega_k$ gives $-\frac1p\sum_k\omega_ks_k^p=-\frac1p\Phi_p^p$.  Since $t\mapsto-t^p/p$
is strictly decreasing on $\R_{>0}$, maximisers of the left-hand side are exactly minimisers of
$\Phi_p$.  With $\lambda>0$ the transformation acts on one term of the objective only, so the
minimisers need not coincide; convexity of $\frac1p\Phi_p^p=\frac1p\sum_k\omega_ks_k^p$ follows as
above since $t\mapsto t^p$ is convex and nondecreasing on $\R_{\ge0}$.
\end{proof}

\begin{remark}[The range $\alpha\ge2$]
Two cautions apply.  First, $\alpha$-fair utilities must be applied to throughputs, not to
costs: $U\mapsto U^{1-\alpha}/(1-\alpha)$ has second derivative $-\alpha U^{-\alpha-1}<0$ and is
therefore concave in the cost.  Second, for $\alpha<2$ the $\alpha$-fair problem in throughputs
need not be convex in the wiring: with the convex cost $J(w)=1+w^2$, the throughput $1/J$ fails
to be concave at $w=1$.  \Cref{prop:fairconvex} thus singles out $\alpha\ge2$ as the range in
which fairness and convexity are compatible for this model.
\end{remark}

\begin{theorem}[Gibbs form of the fair optimum]
\label{thm:kkt}
Let $\bfw^\star$ minimise $F^{(K)}$ over $\calW$ and let
$\pi_k=\partial\Phi_p/\partial s_k$ at $\bm s(\bfw^\star)$ (for $p<\infty$,
$\pi_k=\omega_ks_k^{p-1}\Phi_p^{1-p}$).  Then $\bfw^\star>0$ and there exist $\bfy^\star\in\R^V_{\ge0}$
with $y^\star_v(A\bfw^\star-(r/\mu)\one)_v=0$ such that, for $e\in\Eocs_k$,
\begin{equation}
  w^\star_e=\frac1Z\,q_e\exp\Bigl(-\frac{\pi_k\,\partial_es_k(\bfw^\star)+(A^\top\bfy^\star)_e}{\lambda}\Bigr).
  \label{eq:gibbs}
\end{equation}
Conversely, any $\bfw\in\calW$ satisfying \eqref{eq:gibbs} with such $\bfy^\star$ is the minimiser.
\end{theorem}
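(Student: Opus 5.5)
The plan is to treat $F^{(K)}=\Phi_p\circ\bm s+\lambda\KL{\cdot}{\bfq}$ as an instance of the free energy of \cref{sec:freeenergy}. \Cref{prop:fairconvex} gives that $\Psi:=\Phi_p\circ\bm s$ (smoothed when $p=\infty$) is convex and continuously differentiable on a neighbourhood of $\calW$. The argument of \cref{thm:wellposed} then goes through verbatim with $U$ replaced by $\Psi$. It yields existence and uniqueness of $\bfw^\star$, positivity of $\bfw^\star$ via \cref{lem:positivity} with $c=\lambda$ and $\psi=\Psi-\lambda\inner{\log\bfq}{\cdot}$, and the variational inequality $\inner{\nabla\Psi(\bfw^\star)+\lambda\log(\bfw^\star/\bfq)}{\bfx-\bfw^\star}\ge0$ for all $\bfx\in\calW$. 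Since $J_k$, and hence $s_k$, depends only on the block $\Eocs_k$, the chain rule gives $\partial_e\Psi(\bfw^\star)=\pi_k\,\partial_es_k(\bfw^\star)$ for $e\in\Eocs_k$, with $\pi_k=\partial\Phi_p/\partial s_k$. For $p<\infty$, differentiating $(\sum_k\omega_ks_k^p)^{1/p}$ gives $\pi_k=\omega_ks_k^{p-1}\Phi_p^{1-p}$; in the smoothed case $\pi_k$ is the corresponding softmax weight.

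To extract multipliers I will \emph{linearise} at $\bfw^\star$. The variational inequality says exactly that $\bfw^\star$ minimises the convex function $\bfw\mapsto\inner{\bfg}{\bfw}+\lambda\KL{\bfw}{\bfq}$ over $\calW$, where $\bfg=\nabla\Psi(\bfw^\star)$ is held fixed. By the computation in \cref{lem:closedform}, this objective equals $\lambda\KL{\bfw}{\hat\bfw}$ plus a constant, where $\hat w_e\propto q_ee^{-g_e/\lambda}>0$. Hence $\bfw^\star$ is the KL projection of $\hat\bfw$ onto $\calW$. \Cref{lem:dual}, with \cref{as:feasible}, then supplies $\bfy\ge0$ satisfying complementary slackness such that $w^\star_e\propto\hat w_e\exp(-(A^\top\bfy)_e)$. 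Putting $\bfy^\star=\lambda\bfy$, which is still nonnegative and has the same complementary slackness, and absorbing all normalisers into $Z$, gives exactly \eqref{eq:gibbs}.

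For the converse, suppose $\bfw\in\calW$ satisfies \eqref{eq:gibbs} with some admissible $\bfy^\star$, where $\pi_k$ and $\partial_es_k$ are evaluated at $\bfw$ itself. Then $\bfw>0$, and $\lambda\log(\bfw/\bfq)=-\nabla\Psi(\bfw)-A^\top\bfy^\star-\lambda\log Z\,\one$. For $\bfx\in\calW$, the constant term is orthogonal to $\bfx-\bfw$ because $\inner{\one}{\bfx-\bfw}=0$. Therefore
\[
\inner{\nabla\Psi(\bfw)+\lambda\log(\bfw/\bfq)}{\bfx-\bfw}=-\inner{\bfy^\star}{A\bfx-A\bfw}=\inner{\bfy^\star}{(r/\mu)\one-A\bfx}\ge0,
\]
where the second equality uses complementary slackness to replace $A\bfw$ by $(r/\mu)\one$ on the support of $\bfy^\star$. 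The first-order condition therefore holds, and convexity (equivalently, the inequality \eqref{eq:rel_strong} obtained in the first step) shows that $\bfw$ is the minimiser. Uniqueness identifies it with $\bfw^\star$.

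The main obstacle I expect is making the multiplier step rigorous. The difficulty is that $\Psi$ is nonlinear, so the KKT theory in \cref{lem:dual} does not apply to $F^{(K)}$ directly. The freezing argument handles this by reducing to a KL projection with a linear tilt. The remaining points to check are that $\lambda>0$, which the statement implicitly assumes since it divides by $\lambda$, and that the smoothness required by \cref{lem:positivity} holds on a neighbourhood of $\Simp$. The latter follows because $s_k\ge s_{\min}>0$ and $\kappa_e\ge b_0>0$ whenever $\bfw\ge0$.
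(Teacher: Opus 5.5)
Your proof is correct, but it obtains the multipliers by a different route from the paper.

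The paper applies the KKT theorem directly to the nonlinear convex problem $\min_{\calW}F^{(K)}$. Because the constraints are affine, Abadie's constraint qualification makes the KKT conditions necessary and sufficient. The multipliers of $\bfw\ge0$ vanish at the positive optimum, the multiplier $\nu$ of the simplex constraint is absorbed into $Z$, and the converse is simply the sufficiency half of the same theorem.

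You instead freeze $\bfg=\nabla(\Phi_p\circ\bm s)(\bfw^\star)$ and observe that the variational inequality makes $\bfw^\star$ the KL projection onto $\calW$ of the tilted prior $\hat\bfw\propto\bfq\,e^{-\bfg/\lambda}$. You then import the multipliers from the strong duality of \cref{lem:dual} (refined Slater via \cref{as:feasible}) and rescale them by $\lambda$. For the converse you check the variational inequality directly using complementary slackness, and conclude with the gradient inequality for the convex objective.

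The paper's argument is shorter but leans on a general nonlinear-programming theorem. Yours stays entirely within results already proved in the paper, makes the sufficiency direction explicit, and gives extra information. It shows that $\bfy^\star/\lambda$ maximises the $|V|$-dimensional concave dual of \cref{lem:dual} for $\hat\bfw\propto\bfq\,e^{-\bfg/\lambda}$, so the multipliers in \eqref{eq:gibbs} are computable by the same solver used for the projections.

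You are also right that, in the converse, $\pi_k$ and $\partial_e s_k$ must be evaluated at $\bfw$ itself for the statement to be meaningful.
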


\begin{proof}
Positivity follows from \cref{lem:positivity}.  The constraints are affine, so the KKT
conditions are necessary and sufficient for the convex problem (Abadie's constraint
qualification holds for affine constraints).  Stationarity at the positive point reads
$\nabla(\Phi_p\circ\bm s)(\bfw^\star)+\lambda(\log(\bfw^\star/\bfq)+\one)+A^\top\bfy^\star+\nu\one=0$,
with the multipliers of $\bfw\ge0$ vanishing by complementary slackness.  The chain rule
gives $\partial_e(\Phi_p\circ\bm s)=\pi_k\partial_es_k$ for $e\in\Eocs_k$, since $s_j$ does
not depend on $w_e$ for $j\ne k$.  Solving for $\bfw^\star$ gives \eqref{eq:gibbs}.
\end{proof}

The fairness weights $\pi_k\propto\omega_ks_k^{p-1}$ increase with the slowdown, so jobs that
are worse off pull the shared wiring towards themselves, increasingly so as $p$ grows; as
$p\to\infty$ the weight concentrates on the most slowed-down jobs.

\begin{proposition}[Pareto optimality]
\label{prop:pareto}
If $\omega>0$ and $p<\infty$, the minimiser of $F^{(K)}$ is Pareto optimal for the vector
$\bigl(s_1,\dots,s_K,\KL{\cdot}{\bfq}\bigr)$ over $\calW$; for $p=\infty$ it is weakly Pareto
optimal.
\end{proposition}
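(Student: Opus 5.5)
The argument is by contradiction, using that $\bfw^\star$ is the unique minimiser of a sum with strictly positive weights on every component of the vector. Existence and uniqueness of $\bfw^\star$ follow from \cref{thm:wellposed} applied to $F^{(K)}$, which is legitimate by \cref{prop:fairconvex}. For $\lambda>0$ we will use strict convexity of $\KL{\cdot}{\bfq}$ to handle ties.

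\emph{Case $p<\infty$.} Suppose $\bfx\in\calW$ dominates $\bfw^\star$: $s_k(\bfx)\le s_k(\bfw^\star)$ for all $k$, $\KL{\bfx}{\bfq}\le\KL{\bfw^\star}{\bfq}$, and at least one inequality is strict. Since $\Phi_p(\bm s)=(\sum_k\omega_ks_k^p)^{1/p}$ with $\omega_k>0$ is strictly increasing in each argument on $\R^K_{>0}$, we get $\Phi_p(\bm s(\bfx))\le\Phi_p(\bm s(\bfw^\star))$, with strict inequality if some $s_k$ strictly decreases. Together with the KL inequality (multiplied by $\lambda>0$) this gives $F^{(K)}(\bfx)<F^{(K)}(\bfw^\star)$ whenever any inequality is strict. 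That contradicts minimality. Alternatively, equality $F^{(K)}(\bfx)=F^{(K)}(\bfw^\star)$ forces $\bfx=\bfw^\star$ by uniqueness, which contradicts strict domination. Here the positivity of the weights $\omega_k$ and $\lambda$ is exactly what is used.

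\emph{Case $p=\infty$.} The maximum is only nondecreasing, not strictly increasing, in each argument. So we suppose instead that $\bfx$ strictly improves every component: $s_k(\bfx)<s_k(\bfw^\star)$ for all $k$ and $\KL{\bfx}{\bfq}<\KL{\bfw^\star}{\bfq}$. Then $\max_ks_k$ strictly decreases, as does its smoothed version $\lse_{\tau_f}$, which is in fact strictly increasing in each argument. Hence $F^{(K)}(\bfx)<F^{(K)}(\bfw^\star)$, a contradiction, and $\bfw^\star$ is weakly Pareto optimal. If $\Phi_\infty$ is smoothed by $\lse_{\tau_f}$, the $p<\infty$ argument in fact applies verbatim and gives full Pareto optimality. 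I would note this, but state only the weak claim for the unsmoothed max.

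The main subtlety is the degenerate case $\lambda=0$. There uniqueness of the minimiser is not guaranteed, and a tie in $F$ could hide a point with equal slowdowns and strictly smaller KL. I would either read the statement with $\lambda>0$, as implicit in using \cref{thm:wellposed}, or, for $\lambda=0$, select among the minimisers the one with minimal KL. With that choice the contradiction argument goes through, since a dominating point would also minimise $\Phi_p\circ\bm s$ and have no larger KL.
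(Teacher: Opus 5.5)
Your proof is correct and follows the paper's argument. For $p<\infty$, $\Phi_p$ is strictly increasing in each slowdown, so with the positive weight $\lambda$ on the KL term a dominating point would strictly lower $F^{(K)}$; for $p=\infty$ only weak monotonicity of the maximum holds, which gives weak Pareto optimality. Your two refinements, which the paper's one-line proof leaves implicit, are also valid: for $\lambda=0$ you must take the KL-minimal minimiser, and the $\lse_{\tau_f}$-smoothed $\Phi_\infty$ is strictly increasing and so yields full Pareto optimality.
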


\begin{proof}
For $p<\infty$, $\Phi_p$ is strictly increasing in each coordinate on $\R^K_{>0}$; a point
dominating $\bfw^\star$ would have strictly smaller $F^{(K)}$.  For $p=\infty$ only weak
monotonicity holds.
\end{proof}

\begin{remark}[Structure of the Pareto set]
The Pareto set is not, in general, the convex hull of the single-job optima: the shape
$\bfw^{(k)}$ that is optimal for job $k$ depends on its share $\rho_k$ through the degree
constraints, so mixtures of single-job optima computed with the full budget are typically
infeasible or dominated.
\end{remark}

\subsection{Reconfiguration with Externalities}

Reconfiguring job $k$ only within its block, i.e.\ changing $\bfw^{(k)}$ with $\bfrho$ fixed,
touches no circuit of any other job, because blocks are disjoint.  For such
\emph{block-restricted} reconfigurations \cref{thm:regret,thm:regret_sampled} apply job by job
(with constants $\Gamma_k$, $C_k$, $V_{1,k}$), relative to the dynamic optimum with the shares
held fixed; the regrets add.  Changing $\bfrho$ moves circuits between jobs and disturbs all
jobs whose blocks change; applying \cref{thm:regret} to the aggregate objective with the
social cost $C_{\rm soc}=\sum_kC_k\mathbf 1[\text{block }k\text{ changes}]$ internalises
this externality.

\section{Synthetic Experiments}
\label{sec:experiments}

\subsection{Setup}

All experiments use the model of \cref{sec:model}, implemented in Python/NumPy; the KL
projection is computed through the dual of \cref{lem:dual} with a bound-constrained
quasi-Newton method, global optima of the continuous problem are cross-checked with CVXPY
and the Clarabel conic solver, and integer optima are computed with the HiGHS MILP solver
through SciPy.  \Cref{tab:setup} lists the parameters.  Volumes are per iteration and
aggregated over micro-batches; they are chosen to make pipeline, all-reduce and
all-to-all phases comparable, not to reproduce a particular model.

\begin{table}[h]
\centering
\caption{Default experimental parameters.}
\label{tab:setup}
\small
\renewcommand{\arraystretch}{1.15}
\begin{tabular}{lp{12.6cm}}
\toprule
Fabric & $N=16$ pods, $r=8$ OCS ports/pod, $c=50$\,GB/s per circuit, $b_0=10$\,GB/s,
$\mu=56$ ($7/8$ of the ports)\\
Job & $S=4$ pipeline stages $\times$ $D=4$ data-parallel replicas, one pod per (stage, replica)\\
Compute & forward 80\,ms, backward 160\,ms per stage, optimiser 20\,ms\\
Volumes & PP 3\,GB per stage boundary and replica; ring all-reduce 10\,GB per ring edge;
 MoE all-to-all $4\,\mathrm{GB}\times\max(\pi_j,\pi_{j'})$ between replicas $j,j'$ of a stage,
 expert popularity $\pi=(0.4,0.3,0.2,0.1)$ unless stated\\
DAG & forward chain, backward chain, all-reduce of stage $i$ after its backward pass\\
Smoothing & $\tau=\tau_s=1$\,ms; prior $\bfq$ uniform over all $\binom{16}{2}=120$ pairs\\
\bottomrule
\end{tabular}
\end{table}

\subsection{Convergence and Temperature}

\Cref{fig:convergence} runs \cref{alg:main} from $\bfq$ for three temperatures.  The free
energy decreases monotonically in every run, and the KL distance to the optimum stays below
the bound of \cref{thm:convergence}(c) evaluated with the step sizes $\eta_k$ actually
accepted by backtracking (dotted).  The bound is valid but conservative: the accepted steps
are small (median $\eta_k$ between $1.6\times10^{-2}$ and $6.3\times10^{-2}$) because the
curvature of the smoothed maxima is large, while the observed contraction is much faster.
\Cref{tab:conv} reports the checks; the final iterate agrees with the CVXPY/Clarabel
optimum to within solver accuracy wherever Clarabel returned an optimal status.

\begin{figure}[t]
\centering
\includegraphics[width=\linewidth]{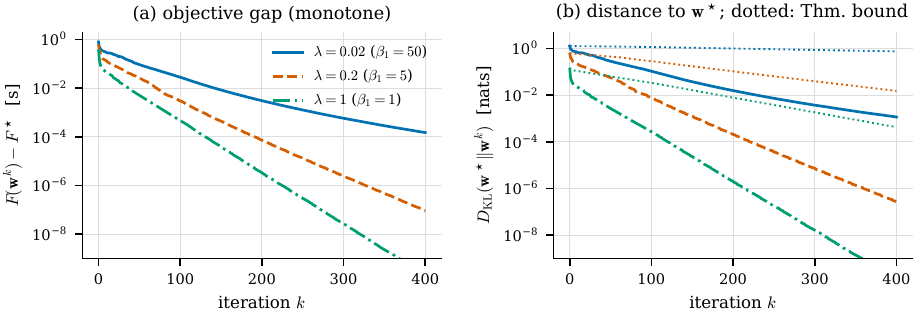}
\caption{Convergence of \cref{alg:main} for $\lambda=1/\beta_1\in\{0.02,0.2,1\}$.
(a) Objective gap (log scale).  (b) $\KL{\bfw^\star}{\bfw^k}$ (solid) and the bound
$\prod_{i<k}(1+\eta_i\lambda)^{-1}\KL{\bfw^\star}{\bfq}$ of \cref{thm:convergence}(c) (dotted,
same colour).}
\label{fig:convergence}
\end{figure}

\begin{table}[h]
\centering
\caption{Checks of \cref{thm:convergence} (400 iterations from $\bfq$).  ``Bound viol.'' is the
largest excess of $\KL{\bfw^\star}{\bfw^k}$ over its bound; the last column is the difference to
the CVXPY/Clarabel optimum.}
\label{tab:conv}
\small
\begin{tabular}{lccccc}
\toprule
& $\beta_1$ & monotone & bound viol. & final gap [s] & $|F_{\rm MD}-F_{\rm CVX}|$\\
\midrule
$\lambda=0.02$ & 50 & yes & $0$ & $1.5\times10^{-4}$ & $4.4\times10^{-9}$\\
$\lambda=0.2$ & 5 & yes & $0$ & $9.3\times10^{-8}$ & solver failed\\
$\lambda=1$ & 1 & yes & $0$ & $2.2\times10^{-10}$ & $<10^{-12}$\\
\bottomrule
\end{tabular}
\end{table}

\Cref{fig:temperature} sweeps $\beta_1$ from $0.1$ to $10^4$.  As predicted by
\cref{thm:limits}, $U(\bfw^\star_{\beta_1})$ decreases and $\KL{\bfw^\star_{\beta_1}}{\bfq}$ increases
monotonically, between the static value $U(\bfq)$ and $\min_\calW U$; both quantitative
bounds hold at every grid point, and so does the Jeffreys bound of \cref{thm:jeffreys}.  The
hard iteration time falls from \Jstatic\,s for the static symmetric wiring to \Jadapt\,s for the
fully adaptive one (\Jgain\% lower).  Most of the gain is obtained for $\beta_1\gtrsim10$, at a KL
departure of about one nat; the curve thus indicates how much asymmetry a given gain
requires, which is the design information $\beta_1$ is meant to provide.

\begin{figure}[t]
\centering
\includegraphics[width=\linewidth]{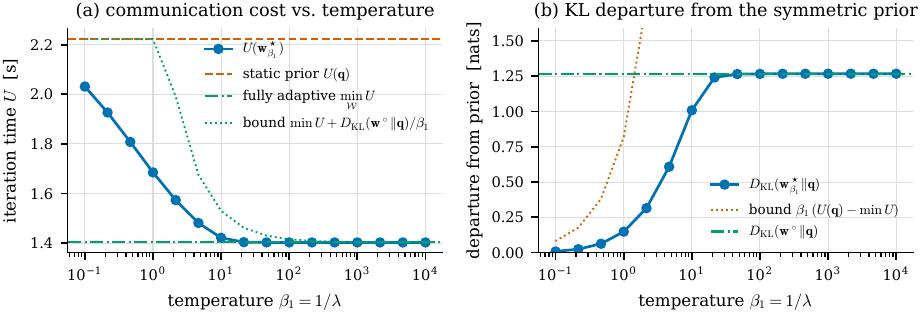}
\caption{Temperature sweep.  (a) Smoothed iteration time of $\bfw^\star_{\beta_1}$ with the
static and fully adaptive levels and the bound of \cref{thm:limits}(iii).  (b) KL departure
from the prior with the bound of \cref{thm:limits}(ii) and its limit
$\KL{\bfw^\circ}{\bfq}$.}
\label{fig:temperature}
\end{figure}

\subsection{Integer Circuits Against an Exact MILP}
\label{sec:exp_integer}

We compare, for $N\in\{16,24,32\}$ pods ($S\in\{4,6,8\}$, $D=4$), dense and MoE workloads,
and $r\in\{4,8,16,32\}$ ports per pod at fixed pod bandwidth $rc=200$\,GB/s ($\mu=Nr/2$):
(i)~\cref{alg:main} (300 iterations, $\lambda=10^{-3}$) followed by floor rounding with greedy
filling, (ii)~the same followed by the local search of \cref{sec:rounding}, and (iii)~the exact
MILP, in which each pair's circuit count is encoded by binary variables and transfer times
by their exact values at each count.  Results are normalised by the relaxation bound
$\min_{\calW'}J$ of \cref{prop:rounding} (computed with CVXPY).

\begin{figure}[t]
\centering
\includegraphics[width=\linewidth]{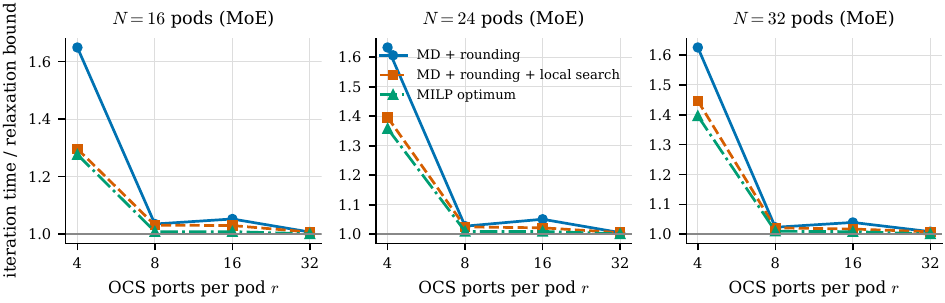}
\caption{Integer wirings for MoE workloads: iteration time divided by the relaxation lower
bound, as a function of the number of OCS ports per pod at fixed pod bandwidth.}
\label{fig:rsweep}
\end{figure}

\Cref{fig:rsweep,tab:integer} lead to three observations, which we report without
qualification.  First, the exact MILP is fast on all instances (at most 4.5\,s with HiGHS)
and always finds the best integer wiring; on these sizes the continuous method therefore does
\emph{not} replace integer programming.  Second, the relaxation bound of
\cref{prop:rounding} is tight once pods have several ports: for $r\ge8$ the MILP optimum is
within 3.3\% of the bound (within 0.6\% for $r=32$), so the continuous problem solved by
\cref{alg:main} accurately predicts the achievable iteration time, whereas for $r=4$ the
integrality gap reaches 40\%.  Third, plain rounding loses up to 37\% at $r\le8$, and the
local search recovers most of it: \cref{alg:main} with rounding and local search is within
1.5--14.1\% of the MILP for $r=4$ and within 0--4.4\% for $r\ge8$, the gap shrinking with $r$
as \cref{prop:rounding} predicts.  The running times of \cref{alg:main} (14--57\,s for 300
iterations) and especially of the local search reflect an unoptimised Python
implementation; they are not competitive with a compiled MILP solver at this scale, and we
make no runtime claim.

\begin{table}[t]
\centering
\caption{Integer wirings.  LB: relaxation bound; Rnd: \cref{alg:main} + rounding; LS: + local
search; MILP: exact (opt.\ = proven optimal).  Times in seconds (single core, Python
implementation for MD and LS).}
\label{tab:integer}
\scriptsize
\setlength{\tabcolsep}{4pt}
\begin{tabular}{rlrrrrrrrr}
\toprule
$N$ & workload & $r$ & LB & Rnd & LS & MILP & $t_{\rm MD}$ & $t_{\rm LS}$ & $t_{\rm MILP}$\\
\midrule
16 & dense & 4 & 1.307 & 1.979 & 1.650 & 1.447 (opt.) & 26.4 & 9.4 & 0.09\\
16 & dense & 8 & 1.307 & 1.800 & 1.358 & 1.339 (opt.) & 40.3 & 12.0 & 0.27\\
16 & dense & 16 & 1.307 & 1.326 & 1.314 & 1.314 (opt.) & 27.1 & 18.4 & 0.35\\
16 & dense & 32 & 1.307 & 1.334 & 1.313 & 1.313 (opt.) & 44.7 & 28.4 & 1.97\\
24 & dense & 4 & 1.898 & 2.568 & 2.239 & 2.064 (opt.) & 37.5 & 18.3 & 0.22\\
24 & dense & 8 & 1.898 & 2.395 & 2.006 & 1.960 (opt.) & 14.1 & 26.8 & 0.63\\
24 & dense & 16 & 1.898 & 2.003 & 2.003 & 1.919 (opt.) & 14.5 & 26.9 & 1.56\\
24 & dense & 32 & 1.898 & 2.010 & 1.913 & 1.908 (opt.) & 27.5 & 98.9 & 4.51\\
32 & dense & 4 & 2.487 & 3.157 & 2.828 & 2.670 (opt.) & 44.2 & 40.2 & 0.22\\
32 & dense & 8 & 2.487 & 2.984 & 2.605 & 2.549 (opt.) & 39.1 & 102.9 & 0.87\\
32 & dense & 16 & 2.487 & 2.599 & 2.597 & 2.508 (opt.) & 27.7 & 110.4 & 3.18\\
32 & dense & 32 & 2.487 & 2.503 & 2.498 & 2.498 (opt.) & 24.7 & 230.7 & 4.18\\
16 & MoE & 4 & 1.692 & 2.790 & 2.193 & 2.160 (opt.) & 57.4 & 22.1 & 0.12\\
16 & MoE & 8 & 1.692 & 1.751 & 1.745 & 1.707 (opt.) & 20.9 & 33.4 & 0.05\\
16 & MoE & 16 & 1.692 & 1.781 & 1.742 & 1.707 (opt.) & 26.0 & 34.7 & 1.02\\
16 & MoE & 32 & 1.692 & 1.704 & 1.704 & 1.694 (opt.) & 18.5 & 47.2 & 0.64\\
24 & MoE & 4 & 2.517 & 4.110 & 3.513 & 3.418 (opt.) & 24.3 & 53.6 & 0.29\\
24 & MoE & 8 & 2.517 & 2.585 & 2.578 & 2.540 (opt.) & 15.0 & 128.5 & 0.11\\
24 & MoE & 16 & 2.517 & 2.644 & 2.571 & 2.540 (opt.) & 20.8 & 175.8 & 2.01\\
24 & MoE & 32 & 2.517 & 2.530 & 2.530 & 2.519 (opt.) & 16.9 & 194.3 & 1.11\\
32 & MoE & 4 & 3.341 & 5.430 & 4.833 & 4.664 (opt.) & 22.3 & 175.9 & 0.85\\
32 & MoE & 8 & 3.341 & 3.418 & 3.412 & 3.374 (opt.) & 31.4 & 316.3 & 0.13\\
32 & MoE & 16 & 3.341 & 3.473 & 3.399 & 3.369 (opt.) & 36.0 & 395.2 & 2.78\\
32 & MoE & 32 & 3.341 & 3.369 & 3.368 & 3.345 (opt.) & 34.6 & 584.4 & 1.77\\
\bottomrule
\end{tabular}
\end{table}

\subsection{Dynamic Reconfiguration}

The MoE expert popularity of each stage follows a random walk on its logits (standard
deviation $0.03$ per iteration) with abrupt routing shifts at $t=120$ and $t=440$; at $t=280$
the job is re-placed onto a random permutation of the pods.  The workload distribution is
defined on the $M=18\times120$ cells (phase slot, pair); each iteration $n=2\times10^5$ cell
samples are observed.  A reconfiguration costs $C=1$\,s and re-optimises by 40 iterations of
\cref{alg:main} on the estimated workload, started from $\frac12(\bfw_{\rm current}+\bfq)$.
Mixing with the prior matters: multiplicative updates recover slowly from coordinates that
are nearly zero (after a re-placement the previously unused pairs are needed), and since
$\bfw^0\ge\bfq/2$ coordinatewise, $\KL{\bfw^\star}{\bfw^0}\le\log2+\KL{\bfw^\star}{\bfq}$, so the
bound of \cref{thm:convergence} does not depend on the previous wiring.  The dynamic oracle
re-optimises the true workload every iteration at no cost (25 warm-started iterations, and
400 iterations from both $\bfq$ and the previous wiring at $t=0$ and at the three change points).  We compare the static prior, a wiring optimised
once, periodic re-optimisation, the TV-threshold policy of \cref{thm:regret_sampled}, and the
GLR--CUSUM detector \eqref{eq:glrcusum} with windows of at most 30 iterations, where the
re-optimisation uses the pooled estimate $\hat\bfP_{\hat\tau:t}$.

\Cref{tab:dynamic,fig:dynamic} summarise the results (workload path length $V_1=\DynVone$).
Optimising once and never reconfiguring is the worst policy (regret \NeverRegret\,s): after the
re-placement at $t=280$ the old wiring serves pairs that no longer carry traffic, and it is
then far worse than even the static prior (\StaticRegret\,s), which illustrates why a
symmetric fallback is valuable.  Every adaptive family shows the trade-off of
\cref{thm:regret} between holding excess and switching cost, with an interior optimum
(\cref{fig:dynamic}(b)).  The best periodic schedule ($k=10$) attains \BestPer\,s; the best
TV threshold ($\vartheta=0.05$) attains \BestTV\,s with 9 reconfigurations; the GLR--CUSUM
detector attains the lowest regret, \BestGLR\,s with 16 reconfigurations at $h=10^4$, and is
within 4\% of it over $h\in[10^4,3\times10^4]$.  The conservative threshold of
\cref{thm:falsealarm} for $\delta=0.05$, $h=M\log(N_{\max}+1)+\log(TW/\delta)\approx3.4\times10^4$, lies
in this good range.  Two caveats are visible.  Large TV thresholds ($\vartheta\ge0.2$) react only
to the re-placement and miss the routing shifts, and very small GLR thresholds
($h\le10^3$) fire at every iteration, because sampling noise in the estimated reference
already produces statistics of that size.  Moreover, all policies retain an excess of
roughly 0.1\,s per iteration after re-optimisation (\cref{fig:dynamic}(a)); this is the
$\varepsilon_{\rm opt}$ and estimation term of \cref{thm:regret_sampled}, caused by the limited
budget of 40 iterations per reconfiguration and by the pooled estimate, and it bounds what any
detector can achieve here.

\begin{figure}[t]
\centering
\includegraphics[width=\linewidth]{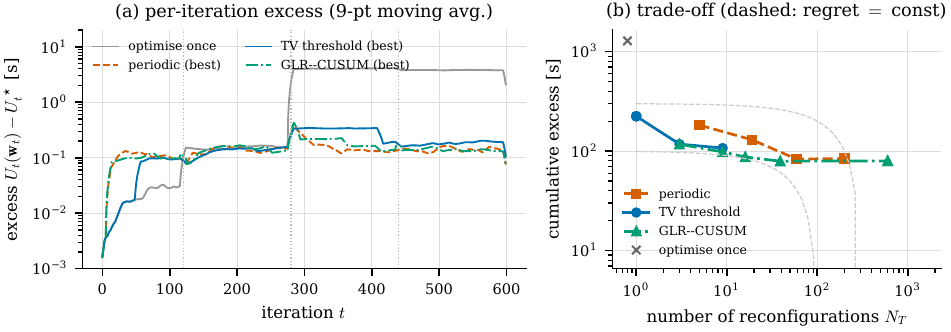}
\caption{Dynamic workload ($T=\DynT$, $C=1$\,s).  (a) Excess iteration time over the dynamic
oracle (log scale, 9-point moving average; vertical lines: abrupt changes).  (b) Cumulative excess versus number of
reconfigurations for each policy family; dashed lines have constant regret
$\text{excess}+C N_T$.}
\label{fig:dynamic}
\end{figure}

\begin{table}[t]
\centering
\caption{Dynamic reconfiguration: number of reconfigurations, cumulative excess over the
oracle and regret $=$ excess $+\,C N_T$ (seconds; oracle total \DynOracle\,s).  Best regret per
family in bold.}
\label{tab:dynamic}
\small
\begin{tabular}{llrrr}
\toprule
Policy & parameter & $N_T$ & excess & regret\\
\midrule
static prior $\bfq$ & -- & 0 & 525.0 & 525.0\\
\midrule
periodic $k$ & 3 & 199 & 84.1 & 283.1\\
 & 10 & 59 & 83.1 & \textbf{142.1}\\
 & 30 & 19 & 129.8 & 148.8\\
 & 100 & 5 & 182.4 & 187.4\\
 & $\infty$ (once) & 0 & 1282.0 & 1282.0\\
\midrule
TV threshold $\vartheta$ & 0.05 & 9 & 107.2 & \textbf{116.2}\\
 & 0.1 & 3 & 117.5 & 120.5\\
 & 0.2 & 1 & 224.1 & 225.1\\
 & 0.3 & 1 & 224.1 & 225.1\\
 & 0.4 & 1 & 224.1 & 225.1\\
 & 0.6 & 1 & 224.1 & 225.1\\
\midrule
GLR--CUSUM $h$ & 300 & 599 & 79.7 & 678.7\\
 & 1000 & 599 & 79.7 & 678.7\\
 & 3000 & 39 & 79.1 & 118.1\\
 & 10000 & 16 & 87.2 & \textbf{103.2}\\
 & 30000 & 9 & 98.0 & 107.0\\
 & 100000 & 3 & 117.2 & 120.2\\
\bottomrule
\end{tabular}
\end{table}

\subsection{Multi-Tenant Fairness}

Three jobs share an OCS layer: a dense pipeline job ($4\times2$ pods), a skewed MoE job
($2\times4$, all-to-all 6\,GB) and a smaller MoE job ($2\times3$, 3\,GB), 22 pods with $r=8$.  The
reference times $J_k^{\rm ref}$ (each job alone with all of its own ports) are \MTref\,s.  We
solve the fair problem for $p\in\{1,2,4,8,\infty\}$ ($\lambda=10^{-3}$, $\tau_f=0.01$) with port budgets
$\mu=52$ and $\mu=36$, and compare with a static partition of the budget in proportion to the
number of pods, each job optimised alone in its partition.

\Cref{tab:mt,fig:multitenant} show the predicted behaviour of the fairness weights of
\cref{thm:kkt}.  As $p$ increases, the maximum slowdown decreases monotonically
(from 1.114 to 1.084 for $\mu=52$ and from 1.215 to 1.199 for $\mu=36$) while the mean slowdown
increases slightly (from 1.080 to 1.082 and from 1.198 to 1.199): the budget moves towards the
job that is currently worst off, at a small cost in aggregate efficiency.  All runs decreased
the free energy monotonically.  Sharing the budget dominates the static proportional
partition on both metrics: for $\mu=36$ the partition leaves the skewed MoE job at a slowdown
of 1.285 while over-provisioning the dense job (1.122), whereas every shared solution keeps all
slowdowns at or below 1.215.  The effect of $p$ is modest in this instance because the three
jobs respond similarly to additional circuits; larger differences arise when jobs differ more
strongly in their sensitivity to bandwidth.

\begin{figure}[t]
\centering
\includegraphics[width=\linewidth]{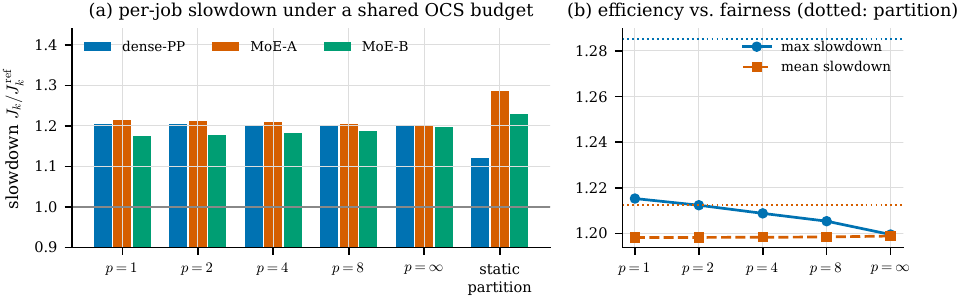}
\caption{Multi-tenant OCS budget ($\mu=36$).  (a) Slowdown of each job for increasing
fairness exponent $p$ and for a static proportional partition.  (b) Maximum and mean
slowdown; dotted lines show the static partition.}
\label{fig:multitenant}
\end{figure}

\begin{table}[t]
\centering
\caption{Multi-tenant results: per-job slowdowns $s_k$, maximum and mean slowdown, and
budget shares $\rho_k$ (dense / MoE-A / MoE-B).}
\label{tab:mt}
\small
\begin{tabular}{rlrrrrrl}
\toprule
$\mu$ & $p$ & $s_1$ & $s_2$ & $s_3$ & max & mean & shares $\rho$\\
\midrule
52 & 1 & 1.114 & 1.062 & 1.063 & 1.114 & 1.080 & 0.26 / 0.42 / 0.32\\
 & 2 & 1.110 & 1.065 & 1.065 & 1.110 & 1.080 & 0.27 / 0.42 / 0.31\\
 & 4 & 1.105 & 1.068 & 1.067 & 1.105 & 1.080 & 0.27 / 0.42 / 0.31\\
 & 8 & 1.099 & 1.071 & 1.071 & 1.099 & 1.080 & 0.28 / 0.41 / 0.31\\
 & $\infty$ & 1.084 & 1.081 & 1.080 & 1.084 & 1.082 & 0.30 / 0.40 / 0.30\\
 & static partition & 1.050 & 1.119 & 1.104 & 1.119 & 1.091 & 0.36 / 0.36 / 0.27\\
\midrule
36 & 1 & 1.205 & 1.215 & 1.174 & 1.215 & 1.198 & 0.26 / 0.42 / 0.32\\
 & 2 & 1.204 & 1.212 & 1.178 & 1.212 & 1.198 & 0.26 / 0.42 / 0.31\\
 & 4 & 1.203 & 1.209 & 1.183 & 1.209 & 1.198 & 0.26 / 0.43 / 0.31\\
 & 8 & 1.202 & 1.205 & 1.188 & 1.205 & 1.198 & 0.27 / 0.43 / 0.31\\
 & $\infty$ & 1.199 & 1.199 & 1.198 & 1.199 & 1.199 & 0.27 / 0.43 / 0.30\\
 & static partition & 1.122 & 1.285 & 1.230 & 1.285 & 1.212 & 0.36 / 0.36 / 0.27\\
\bottomrule
\end{tabular}
\end{table}

\section{Discussion}
\label{sec:discussion}

\paragraph{What the geometry contributes.}
The information-geometric structure is used for three concrete purposes.  (i) The KL
penalty to a symmetric prior gives a convex, strictly positive and symmetry-respecting
interpolation between static and adaptive wirings whose two ends are characterised exactly
(\cref{thm:limits}), and the Pythagorean inequality turns this into a quantitative gain
bound (\cref{thm:jeffreys}).  (ii) Entropic Bregman geometry makes the proximal step a
multiplicative update followed by a KL projection with a $|V|$-dimensional dual, and the
three-point property delivers linear convergence without Euclidean projections onto the
simplex.  (iii) Exponential families appear twice more: the Gibbs distribution over DAG paths
whose marginals are the criticalities, and the multinomial likelihood whose GLR statistic is
the KL indicator.

\paragraph{Limitations.}
\begin{itemize}
\item \emph{Model.}  Routing is single-hop on direct circuits in parallel with a static
  fabric, and concurrent phases are assumed not to contend (\cref{as:model}).  Multi-hop
  routing (as in ATRO \cite{atro2025}) would add a flow block coupled to the wiring through
  capacity constraints; the joint convexity of perspective functions
  $\kappa\,\phi(f/\kappa)$ makes this a natural extension, but its analysis is not done here.
\item \emph{Integrality.}  The theory concerns the continuous relaxation.  On our instances an
  exact MILP solved by HiGHS is fast and, for few ports per pod, clearly better than rounding
  plus local search (\cref{sec:exp_integer}); the continuous method is competitive when the
  number of ports per pod is large, where \cref{prop:rounding} predicts small rounding loss.
  We therefore do not claim that the method replaces integer programming.
\item \emph{Constants.}  $L$ in \cref{lem:smooth} and $\Gamma$ in \cref{lem:gamma} are
  worst-case and conservative; backtracking avoids needing $L$, but the regret bounds are
  loose in absolute terms and should be read as scaling laws.
\item \emph{Statistics.}  The false-alarm bound grows with the number of cells $M$; pooling
  cells (e.g.\ by pod pair) reduces $M$ at the cost of resolution.  The effect of estimating
  the reference distribution is not analysed.
\item \emph{Evaluation.}  All experiments are synthetic.  Validation on production traces
  and on real OCS hardware, including the cost and duration of reconfiguration, is required.
\end{itemize}

\section{Conclusion}
\label{sec:conclusion}

Optical circuit switching turns the wiring of an AI cluster from a fixed datum into a decision
variable.  We have argued that the appropriate language for this decision is the geometry of
the probability simplex.  In that language a single inverse temperature measures how far a
deployment departs from the symmetric Fat-Tree reference; the limits of this departure, the
gain it can buy and the circumstances in which it buys nothing admit exact statements; the
causal structure of a training iteration becomes an exponential family over DAG paths whose
marginals drive a convex and efficiently differentiable cost; and the decision to reconfigure
becomes a likelihood-ratio test whose consequences are quantified by a regret bound.  The same
language accommodates clusters shared by several jobs, where convex fairness is available
precisely in the range $\alpha\ge2$.

The experiments indicate both the value and the limits of this view.  The continuous theory
predicts the achievable iteration time accurately once pods have several optical ports, and
likelihood-ratio detection gives the best reconfiguration policy among those examined; exact
integer programming, however, remains preferable for small instances.  Extending the framework
to multi-hop routing and contention, and validating it on production traces and optical
hardware, are the natural next steps.

\appendix
\section{Deferred Proofs}
\label{app:proofs}

\begin{lemma}[Smoothed maximum of convex functions]
\label{lem:lse_comp}
Let $\phi_1,\dots,\phi_n$ be convex and $C^2$ on an open convex set $\mathcal O$ with
$\norm{\nabla\phi_i}_\infty\le G$ and $\bfv^\top\nabla^2\phi_i\bfv\le M\norm{\bfv}_1^2$.  Then
$f=\lse_\tau(\phi_1,\dots,\phi_n)$ is convex on $\mathcal O$,
$\norm{\nabla f}_\infty\le G$ and $\bfv^\top\nabla^2f\,\bfv\le(M+G^2/\tau)\norm{\bfv}_1^2$.
\end{lemma}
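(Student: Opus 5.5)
The plan is to compute the gradient and Hessian of $f=\lse_\tau(\phi_1,\dots,\phi_n)$ explicitly by the chain rule and then bound each piece. Write $\bfpi(\bfw)$ for the softmax weights $\pi_i=e^{\phi_i/\tau}/\sum_je^{\phi_j/\tau}$. Then
\[
  \nabla f=\sum_i\pi_i\nabla\phi_i,\qquad
  \nabla^2 f=\sum_i\pi_i\nabla^2\phi_i+\frac1\tau\Bigl(\sum_i\pi_i\nabla\phi_i\nabla\phi_i^\top-\nabla f\,\nabla f^\top\Bigr).
\]
The second term is the covariance, under $\bfpi$, of the random vector $\nabla\phi_I$ with $I\sim\bfpi$.

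\textbf{Convexity.} For $\bfv\in\R^{\Eocs}$, $\bfv^\top\nabla^2f\,\bfv$ is a $\bfpi$-average of the nonnegative numbers $\bfv^\top\nabla^2\phi_i\bfv$, plus $\tau^{-1}$ times the variance of the scalar random variable $Z=\inner{\nabla\phi_I}{\bfv}$. Both terms are nonnegative, so $f$ is convex on $\mathcal O$. (Alternatively, $\lse_\tau$ is convex and nondecreasing in each argument and the $\phi_i$ are convex.)

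\textbf{Gradient bound.} $\nabla f$ is a convex combination of the $\nabla\phi_i$, so the triangle inequality for $\norm{\cdot}_\infty$ gives $\norm{\nabla f}_\infty\le G$ at once.

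\textbf{Hessian bound.} The first term is at most $\sum_i\pi_iM\norm{\bfv}_1^2=M\norm{\bfv}_1^2$. For the covariance term, use $\mathrm{Var}(Z)\le\E Z^2$ together with Hölder duality, $|\inner{\nabla\phi_i}{\bfv}|\le\norm{\nabla\phi_i}_\infty\norm{\bfv}_1\le G\norm{\bfv}_1$. This gives $\mathrm{Var}(Z)\le G^2\norm{\bfv}_1^2$, and summing the two bounds yields $(M+G^2/\tau)\norm{\bfv}_1^2$.

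The only step needing care is rewriting the Hessian in covariance form and pairing the $\ell_\infty$ gradient bound with the $\ell_1$ norm of $\bfv$. This pairing is what makes the bound dimension-free: no factor of $n$ or $|\Eocs|$ appears. The estimate could be sharpened to $G^2/(4\tau)$ via Popoviciu, since $Z$ takes values in $[-G\norm{\bfv}_1,G\norm{\bfv}_1]$, but the stated constant suffices. Everything else is a routine application of the chain rule.
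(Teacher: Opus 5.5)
Your proof is correct and is essentially the paper's argument: the covariance form of $\nabla^2 f$ gives convexity, and bounding the covariance by the second moment and pairing $\norm{\nabla\phi_i}_\infty\le G$ with $\norm{\bfv}_1$ gives the constant $M+G^2/\tau$. One aside is wrong, though. Since $Z$ ranges over an interval of length $2G\norm{\bfv}_1$, Popoviciu only gives $\mathrm{Var}(Z)\le G^2\norm{\bfv}_1^2$, so it yields no improvement. Moreover, $\phi_{1,2}(\bfw)=\pm G w_1$ at $w_1=0$ gives $f''=G^2/\tau$ there, which shows the constant cannot be improved to $G^2/(4\tau)$.
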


\begin{proof}
With softmax weights $\sigma_i\propto e^{\phi_i/\tau}$,
$\nabla f=\sum_i\sigma_i\nabla\phi_i$ and
\[
  \nabla^2f=\sum_i\sigma_i\nabla^2\phi_i+\tau^{-1}\Bigl(\sum_i\sigma_i\nabla\phi_i\nabla\phi_i^\top
  -\nabla f\nabla f^\top\Bigr).
\]  The bracket is a covariance matrix, positive semidefinite and
dominated by $\sum_i\sigma_i\nabla\phi_i\nabla\phi_i^\top$.  Hence $\nabla^2f\succeq0$ and
\[
  \bfv^\top\nabla^2f\bfv\le\sum_i\sigma_i\bigl(M\norm{\bfv}_1^2+\tau^{-1}(\nabla\phi_i^\top\bfv)^2\bigr)
  \le(M+G^2/\tau)\norm{\bfv}_1^2
\]
since $|\nabla\phi_i^\top\bfv|\le\norm{\nabla\phi_i}_\infty\norm{\bfv}_1$.
The gradient bound is immediate.
\end{proof}

\begin{proof}[Proof of \cref{lem:smooth}]
Work on the open set $\mathcal O=\{\bfw:\ b_0+Bw_e>b_0/2\ \forall e\}\supset\R^{\Eocs}_{\ge0}$ and
evaluate the bounds on $\R^{\Eocs}_{\ge0}$.  The transfer time
$t_{s,e}(\bfw)=d_{s,e}/(b_0+Bw_e)$ depends on one coordinate, is convex and decreasing, with
$|t'_{s,e}|=d_{s,e}B/\kappa_e^2\le G_s$ and $t''_{s,e}=2d_{s,e}B^2/\kappa_e^3\le2B^2\bar d_s/b_0^3$
for $\bfw\ge0$; thus $\bfv^\top\nabla^2t_{s,e}\bfv=t''_{s,e}v_e^2\le(2B^2\bar d_s/b_0^3)\norm{\bfv}_1^2$.
By \cref{lem:lse_comp}, $T_s^{\tau_s}$ is convex with $\norm{\nabla T^{\tau_s}_s}_\infty\le G_s$
and curvature constant $M_s$.  A path length $\ell_P$ is a sum of such terms and constants,
so it is convex with gradient bound $\sum_{v\in P}G_{s(v)}\le G_H$ and curvature constant
$\sum_{v\in P}M_{s(v)}\le M_H$.  A second application of \cref{lem:lse_comp} gives convexity
of $U$ and \eqref{eq:Lconst}.  Monotonicity holds because each $t_{s,e}$ is nonincreasing and
$\lse$ is nondecreasing in each argument.

For $\bfx,\bfy\in\Simp$, Taylor's theorem with integral remainder and \eqref{eq:Lconst} give
$U(\bfy)\le U(\bfx)+\inner{\nabla U(\bfx)}{\bfy-\bfx}+\frac L2\norm{\bfy-\bfx}_1^2$, and Pinsker's
inequality $\frac12\norm{\bfy-\bfx}_1^2\le\KL{\bfy}{\bfx}$ yields relative smoothness.  For the
approximation bound, $T_s\le T_s^{\tau_s}\le T_s+\tau_s\log n_s$, so
$\ell_P^{\rm hard}\le\ell_P\le\ell_P^{\rm hard}+H\tau_s\log\max_sn_s$, and
$\max_P\ell_P\le U\le\max_P\ell_P+\tau\log|\calP|$.
\end{proof}

\bibliographystyle{plainnat}

\end{document}